\documentclass[a4paper,UKenglish,cleveref, autoref, thm-restate]{lipics-v2021}

\title{Positional Properties in Temporal Logic} 

\author{Jessica Newman}{University of Southampton, UK}{jln1g19@soton.ac.uk}{https://orcid.org/0009-0009-8400-0609}{}

\author{Benjamin Plummer}{University of Southampton, UK}{bjp1g19@soton.ac.uk}{https://orcid.org/0009-0007-3069-2676}{}

\authorrunning{J. Newman and B. Plummer} 

\Copyright{Jessica Newman and Benjamin Plummer} 

\ccsdesc[300]{Theory of computation~Modal and temporal logics}
\ccsdesc[300]{Theory of computation~Automata over infinite objects}
\ccsdesc[300]{Theory of computation~Algebraic language theory}

\keywords{Positionality, Temporal Logic, ATL, Games on graphs} 

\category{} 

\acknowledgements{We thank Corina Cîrstea and Enrico Marchioni for their valuable feedback.}

\nolinenumbers 

\usepackage[only,llbracket,rrbracket]{stmaryrd}

\usepackage{wrapfig}
\usepackage{tikz}
\usetikzlibrary{automata, positioning}
\usetikzlibrary{decorations.pathmorphing, positioning}

\EventEditors{Ana Sokolova and Patrick Totzke}
\EventNoEds{2}
\EventLongTitle{37th International Conference on Concurrency Theory (CONCUR 2026)}
\EventShortTitle{CONCUR 2026}
\EventAcronym{CONCUR}
\EventYear{2026}
\EventDate{September 1--4, 2026}
\EventLocation{Liverpool, UK}
\EventLogo{}
\SeriesVolume{391}
\ArticleNo{10}

\begin{document}

\maketitle

\begin{abstract}

We study positional properties in the context of game-based
reactive synthesis.
Our motivation stems from having a usable specification logic,
for which tractable synthesis is guaranteed.
We demonstrate that every $\omega$-regular positional property (with respect to state- or edge-labelled game graphs),
is expressible in linear-time temporal logic.
Additionally, we provide some necessary and sufficient conditions
for when an $\omega$-regular property is positional, and identify well-behaved subclasses
of $\omega$-regular positional properties. Using varieties of languages, we prove that no class of $\omega$-regular positional properties can simultaneously contain a prefix-independent property and be closed under Boolean operations.
We conclude by discussing the implications on alternating-time temporal logic,
where we isolate a few different fragments with tractable model checking,
and compare the associated expressivity of such fragments.

\end{abstract}

\section{Introduction}

Infinite duration, two-player graph games provide a fundamental model of the
interaction between a {\it system} and its {\it environment}. The problem of
{\it reactive synthesis} (generating a system controller which implements a specification)
can be phrased as computing a winning strategy in such a game.
This technique quickly becomes intractable, and is highly sensitive to
the (logical) formalism chosen to express the desired property of the system.
Even when using the property-specification logic linear-time temporal logic ({\sf LTL}) \cite{4567924},
which has powerful automata-theoretic tools,
the synthesis problem is famously 2EXPTIME-complete \cite{Bloem2018}.
This provides a major obstacle to applying synthesis in practical applications.

The difficulty of synthesis is inextricably linked to the vast number of
possible strategies:
on any non-trivial game graph (even with a single player),
there are an infinite number of possible strategies,
because strategies can use an arbitrary amount of memory.
If one could impose a maximum bound on the amount of memory that a strategy can use,
we could reduce the space of strategies to be finite (at least for games
with a finite state space).
This, in turn, reduces {\sf LTL} synthesis to {\sf LTL} {\it model checking} - which is PSPACE-complete \cite{10.1145/3828.3837} - 
by guessing a strategy which uses memory less than the bound, and then performing model checking on each induced transition
system.
Notice that this discussion also carries over to model checking of the
{\it alternating-time temporal logic} {\sf ATL}$^*$ \cite{10.1145/585265.585270} interpreted over game structures,
whereby combining the classical labelling algorithm (see \cite{ALLENEMERSON1987275}) with 
{\sf LTL} model checking the path formula (which performs {\sf LTL} synthesis
in the manner just described)
enables us to also reduce {\sf ATL}* model checking to {\sf LTL} model checking.
However, in both cases
({\sf LTL} synthesis and {\sf ATL}* model checking) this approach is doomed to fail, as the amount of memory that winning
strategies may require is unbounded.


In this work, we investigate to what extent we can syntactically restrict
the collection of {\sf LTL} formulas (or, more generally, the path formula in {\sf ATL}*),
to obtain a class of properties whose winning strategies'
memory can be bounded by a single state - i.e. properties where {\it positional}
strategies suffice.
{\sf LTL} is recognised as having a good balance between {\it expressiveness}
and {\it usability}, and our key result (\Cref{theorem:state_games_positional}) is a positive one:
{\it every $\omega$-regular positional
property is expressible in {\sf LTL}}.
As we shall see, which properties are positional is sensitive to
the kind of games which we are doing synthesis over, specifically to
whether labels (sometimes called colours) appear on the {\it edges} or on the {\it states}.
Our result is with respect to the larger class of positional properties:
those which are positional over {\it state-labelled} games.
These are a natural class of properties in {\sf ATL}$^*$, 
because atomic propositions are assigned to states. 

Recently, (edge-labelled) positional $\omega$-regular objectives have received a full characterisation in \cite{theoretics:14945}, providing different classes of automata which recognise them.
One key observation in loc. cit. is that a positional language $L$ always has a \textit{totally ordered set of residuals} (recall that,
for some finite word $u$, the residual is the set containing all suffixes $v$ such that $uv$ is in $L$). 
This requirement lends itself to an {\it automata-theoretic} perspective, as characterisations can be built on top of the automaton of residuals, which has as states the residuals of the language.
However, the requirement of totally ordered residuals can instead be presented as a purely syntactic closure condition on the language\footnote{if a word $uv\in L$ and a word $xy\in L$, then either $xv\in L$ or $uy\in L$.}. The authors of \cite{theoretics:14945} note that a \textit{language-theoretic} characterisation, using conditions of this form, is left open. We complement their results by providing such a language-theoretic characterisation of positional $\omega$-regular languages over edge-labelled arenas in terms of additional closure conditions on the language (\Cref{cor:omega-pos}). This has two main benefits: firstly, these conditions can be phrased algebraically, enabling the use of tools from algebraic language theory (we pursue this in \Cref{sec:varieties}). Secondly, they naturally extend to the setting of state-labelled positionality (\Cref{prop:state_pos_characterisation}), which is less clear in the automata-theoretic case. We further investigate this in \Cref{prop:edgevsstate}, finding that \textit{a property is positional over state-labelled arenas when it can be played optimally with memory of the previous label in edge-labelled arenas}. 

As mentioned above, we find that LTL can express all $\omega$-regular positional properties. However, this does not tell us much about how to concretely specify such properties. Therefore, in the latter half of the paper we use the language-theoretic characterisation to continue the investigation of specification languages for positional properties. Firstly, we obtain a precise characterisation of positional $\omega$-regular \textit{prefix-independent} languages (\Cref{prop:pre_ind_conditions}), and provide a concrete form for specifying a wide range of such languages. We then investigate whether we can isolate a {\it language variety}
(of infinite words)
consisting entirely of positional properties.
This would give us a good candidate for a logical specification language
to express positional properties, as language varieties
are closed under Boolean operations.
Unfortunately, we present a negative result in \Cref{coro:no_variety_pos_preind}, that:
{\it any (non-trivial) variety of languages consisting only
of positional properties does not contain any prefix-independent properties}.
This immediately rules out the existence of formalisms for expressing positional properties that both support combining positional properties via logical connectives and include even basic positional prefix-independent properties such as Büchi objectives. Our final endeavour
is to present several possible restrictions of the path formula
in {\sf ATL}*, which ensure the resulting fragments can only express positional properties. We isolate these fragments using the previously obtained characterisation, discuss the associated model checking problems, and give a fine-grained analysis
of their expressiveness, comparing them to the logics {\sf EATL} and ${\sf EATL}^+$ \cite{10.1145/567067.567081}, and providing examples of which positional properties can/cannot be expressed in each fragment.

\subparagraph*{Related Work.}
Determining which objectives in infinite games are positional has been well-studied, with a prominent example being that parity objectives are bipositional \cite{185392}. 


 A particularly interesting question is how to characterise which objectives are positional within a given class. Several sufficient conditions have been found for positionality in the class of prefix-independent objectives, including in \cite{10.1007/11787006_29} and later \cite{Bianco2011}. However, a full characterisation of prefix-independent positional objectives remains open. It has been found, however, that the bipositional prefix-independent objectives are all parity objectives \cite{COLCOMBET2006190}. 

 In terms of positional $\omega$-regular objectives, some sufficient conditions were given in \cite{10.1007/978-3-540-74915-8_7}, and a more general characterisation of memory requirements in \cite{theoretics:9608}. Further results on memory requirements for different objectives can be found in \cite{vandenhove:tel-04095220}. As mentioned above, some full characterisations of positional $\omega$-regular languages have been presented in \cite{theoretics:14945}.

Further classes where positionality has been characterised include languages recognised by deterministic Büchi automata \cite{bouyer_et_al:LIPIcs.CONCUR.2022.20}, safety games \cite{lmcs:10547}, and the Borel class $\Sigma^0_2$ \cite{ohlmann_et_al:LIPIcs.STACS.2024.54}. A general characterisation of positionality in terms of universal graphs was given in \cite{theoretics:9724}.

 A full characterisation of bipositional objectives (for finite games) was provided in \cite{10.1007/11539452_33}.

The positionality of model checking has been previously noted for fragments of temporal logic such as {\sf ATL} \cite{10.1007/s10458-013-9231-3}, and there has been some analysis on restricting certain variants of {\sf ATL}$^*$ to positional strategies \cite{SCHOBBENS200482, BUSARD2015128}. It has also been found that all formulae in {\sf ATL}$^+$, a fragment with non-positional path formulae, can be expressed in {\sf ATL}, a positional fragment \cite{10.1007/3-540-47813-2_20}. In a similar spirit, some fragments of Strategy Logic have been identified in which the strategies required are less complex than the full Strategy Logic \cite{10.1007/978-3-319-09764-0_10, gardy_et_al:LIPIcs.STACS.2018.34}.


In a similar spirit to the motivation of our work on temporal logic,
GR(1) synthesis \cite{BLOEM2012911} syntactically restricts LTL specifications to obtain a polynomial-time synthesis algorithm
(we comment on the relationship between our approaches in \Cref{remark:gr1}).


\subparagraph*{Contributions \& Organisation.}

We start, in \Cref{sec:prelim}, by fixing definitions and notation, and recalling basic facts we use about: two-player games, positional properties, language theory, and temporal logic.
In \Cref{sec:edgepos}, we investigate {\it edge-labelled} positional properties, we discuss: expressing positional properties in {\sf LTL} (\Cref{subsec:starfree}); a language-theoretic characterisation of $\omega$-regular positionality (\Cref{subsec:edge_syntactic}); and how this characterisation simplifies in the case of prefix-independent $\omega$-regular languages (\Cref{subsec:edge_pre_ind}).
\Cref{sec:statepos} then relates our findings to {\it state-labelled} positionality: we give a language-theoretic characterisation for this more general class of properties (\Cref{subsec:state_syntactic}), and show how our expressibility result transfers to the state-labelled setting (\Cref{subsec:edge_to_state}).
In \Cref{sec:varieties} we prove our no-go theorem stating that no $\infty$-variety can exist which consists of positional properties and contains a prefix-independent property.
In \Cref{sec:temporallogic}, we describe how one can perform efficient model checking for positional fragments of {\sf ATL}$^*$, and propose two such fragments.
Finally, \Cref{subsec:inexpress} gives various inexpressibility results, establishing a hierarchy including several variants of {\sf ATL}.

We now outline our contributions, we:
\begin{itemize}
    \item Prove that all $\omega$-regular languages
        which are positional over edge-labelled, and state-labelled, games are expressible in {\sf LTL} (\Cref{theorem:positional_express_ltl,theorem:state_games_positional}).
    \item Obtain a language-theoretic characterisation for when an $\omega$-regular language is positional over edge-labelled, and state-labelled, games (\Cref{cor:omega-pos,prop:state_pos_characterisation}). We note the former has also been independently proven by Colcombet and Idir \cite{colcombet2026algebraiccharacterisationevepositionallanguages}. Furthermore, we show this can be simplified when the language is prefix-independent (\Cref{prop:pre_ind_conditions}).
    \item Provide a polynomial-time decision procedure to decide whether a Wilke algebra recognises a positional language (\Cref{coro:wilke_algebra_recog}).
    \item Give a concrete description of a large class of positional prefix-independent languages (\Cref{prop:pref_ind_pos_concrete}).
    \item Show that there is no language variety consisting entirely of positional languages which contains a prefix-independent language (\Cref{coro:no_variety_pos_preind}).
    \item Establish that the {\sf ATL}$^*$ model checking problem (which can state {\sf LTL} synthesis) for positional fragments is in PSPACE, and for bipositional fragments is in $\Sigma^p_2$
    (\Cref{prop:synthesiscomplexity}).
    \item Present several positional fragments of {\sf ATL}* and establish an expressivity hierarchy among them, and other variants of {\sf ATL} (\Cref{theorem:ATL_fragments}).
\end{itemize}

\section{Preliminaries}\label{sec:prelim}


\subparagraph{Words.} Fix an alphabet $\Sigma$.
We use $\Sigma^*$, $\Sigma^+$, $\Sigma^\omega$, and $\Sigma^\infty$ to denote the set finite words, finite non-empty words, infinite words, and possibly infinite words (i.e. $\Sigma^\infty:=\Sigma^\omega\cup\Sigma^*$), respectively.
Note we can take a relation as the alphabet, and then words will be sequences of pairs in the relation.
We write $w[i..j]$ for sub-word of $w$ from index $i$ to index $j$ inclusive, for any word $w\in\Sigma^\infty$ which has length greater than or equal to $i$ and $j$. We write $w[i]:=w[i,i]$.

\subparagraph{$\omega$-regular languages.} $\omega$-regular languages can be equivalently defined, inter alia, as the languages recognised by finite $\omega$-semigroups and finite Büchi automata \cite{carton:hal-00340797}, expressible in the monadic second-order logic S1S \cite{RICHARDBUCHI19661}, and given by $\omega$-regular expressions \cite{MCNAUGHTON1966521}. They include languages expressible in {\sf LTL} \cite{WOLPER198372}. Two $\omega$-regular languages are equivalent exactly when they have the same set of ultimately periodic words, i.e. words of the form $uv^\omega$ for $u,v\in\Sigma^*$ \cite{10.1007/3-540-58027-1_27}. Given a language $L\subseteq \Sigma^\omega$ and word $u\in \Sigma^*$, we can obtain the residual $R_L(u) = \{v\in \Sigma^\omega\mid uv\in L\}$. For $\omega$-regular languages, the set $R(L) = \{R_L(u)\mid u\in \Sigma^*\}$ of residuals is finite \cite{MALER199793}. We say a language $L$ has totally ordered residuals when $R(L)$ is totally ordered by inclusion. A language $L$ is \textit{prefix-independent} when for any word in the language, we can add or remove any finite prefix and remain in the language, i.e. for all $x\in \Sigma^\omega$, for any $u\in \Sigma^*$, $x\in L$ iff $ux\in L$.

\subparagraph{Arenas.} Two-player games are played on {\it arenas}: tuples $(V_1,V_2,\delta)$ where $V_1$ and $V_2$ are finite sets of states controlled by Player 1 and Player 2 respectively, and $\delta\subseteq V\times V$ is a total relation, where we write $V$ for the disjoint union of $V_1$ and $V_2$.
Given an edge $e\in\delta$ where $e=(v,v')$, we define $src(e):=v$ and $tgt(e):=v'$.
Our arenas will implicitly come equipped with a {\it labelling}, where labels comes from a set $\Sigma$.
These come in two forms: {\it edge-labelled} and {\it state-labelled}. A \textit{state-labelled arena} comes equipped with a function $\pi:V \rightarrow \Sigma$, assigning labels to states. An \textit{edge-labelled arena} replaces the above $\delta$ with a total relation $\delta\subseteq V \times(\Sigma\times V)$, associating each edge with a label, note there can now be multiple edges between states (but with different labels). Given an element $e\in\delta$ such that $e=(v,c,v')$, we define $src(e):= v$, $tgt(e) := v'$ and $col(e) := c$. When drawing diagrams of arenas, we will sometimes draw an edge $x\to y$ labelled with a word $a_0\dots a_n\in\Sigma^+$, when it determines a unique sequence of transitions $x\xrightarrow{a_0}\cdots\xrightarrow{a_n}y$ in the arena.

\subparagraph{Games \& Strategies.} Given a (labelled) arena $(V_1,V_2,\delta)$, a \textit{strategy} is a function $\sigma: V\cup\delta^+\rightarrow\delta$ such that $src(\sigma(v))=v$ and $\sigma(v)\in\delta$ for all $v\in V$,
and $src(\sigma(we))=tgt(e)$ and $\sigma(we)\in\delta$ for all $w\in\delta^*$ and $e\in\delta$. A strategy selects for each state an edge which is accessible from that state, and for each nonempty sequence of edges an edge which is accessible from the target state of the final edge in that sequence.
We define a {\it positional strategy} as a function $\sigma: V \rightarrow\delta$,
such that $src(\sigma(v))=v$ and $\sigma(v)\in\delta$ for all $v\in V$.
Note that our definitions of strategy are a little nonstandard, as they pick moves at states controlled by both players.
The players are differentiated in the following definition of {\it play}.

\begin{wrapfigure}[3]{4}{9.6cm}
\vspace{-1em}
\begin{tabular}{cc}
    $\varsigma[0] = \begin{cases} \sigma_1(q) & \text{if $q\in V_1$} \\ \sigma_2(q) & \text{if $q\in V_2$}\end{cases}$
    & $\varsigma[i+1] = \begin{cases} \sigma_1(\varsigma[0,i]) & \text{if $tgt(\varsigma[i])\in V_1$} \\ \sigma_2(\varsigma[0,i]) & \text{if $tgt(\varsigma[i])\in V_2$}\end{cases}$
\end{tabular}
\end{wrapfigure}
Given an arena $(V_1,V_2,\delta)$, an initial state $q\in V$, and two strategies $\sigma_1,\sigma_2$, a \textit{play} $\varsigma\in \delta^\omega$ is defined inductively on the right.
The definition is analogous for positional strategies, but we only require the previous state $tgt(\varsigma[i])$ rather than the play so far. Every play $\varsigma$ generates an associated {\it trace} $\lambda\in \Sigma^\omega$. For state-labelled arenas, we can set $\lambda[i] = \pi (src(\varsigma[i]))$. For edge-labelled arenas, we can set $\lambda[i] = col(\varsigma[i])$). We will denote the play generated by strategies $\sigma_1,\sigma_2$ as $play(\sigma_1,\sigma_2,q)$ and the trace as $trace(\sigma_1,\sigma_2,q)$.
 Given a labelled arena, a strategy $\sigma$, and an initial state $q$, we define the {\it Player 1 outcome of $\sigma$} as the set $out_1(\sigma,q):=\{trace(\sigma,\sigma',q)\mid \sigma'\text{ is a strategy}\}$. The {\it Player 2 outcome of $\sigma$} $out_2(\sigma,q)$ can be defined analogously (this time the first component in $trace$ will be quantified over).

A {\it property} is a subset of traces $L\subseteq \Sigma^\omega$. We can assign a property as the \textit{winning condition}, or \textit{objective}, of a game played over an arena. Given an arena with winning condition $L$ and initial state $q$, we say Player 1 {\it is (positionally) winning} iff there is a (positional) strategy $\sigma_1$ such that $out_1(\sigma_1, q)\subseteq L$. Player 2 is {\it (positionally) winning} iff they have a (positional) strategy $\sigma_2$ such that $out_2(\sigma_2, q)\subseteq\Sigma^\omega\setminus L$.
It is well known that $\omega$-regular languages are \textit{determined} winning conditions \cite{Bchi1969}: Player 1 is winning iff Player 2 is not winning.

\subparagraph{Positionality.}
A property $L$ is {\it state-labelled (edge-labelled) positional}, if whenever Player 1 has a strategy $\sigma$ in some state-labelled (edge-labelled) arena with objective $L$ where $\sigma$ is winning for P1 in all states $w\in W$ for some $W\subseteq V$, there exists a positional stategy $\sigma'$ such that P1 winning from all $w\in W$ also.
A property $L$ is \textit{state-labelled (edge-labelled) bipositional} if both $L$ and $\Sigma^\omega\setminus L$ are state-labelled (edge-labelled) positional. 
We can restrict the notions of positionality to arenas where all states are controlled by Player 1 (where $V_2=\emptyset$).
The resulting notions are {\it 1P edge-labelled positional} and {\it 1P state-labelled positional} properties.
For positionality restricted to 1P arenas, we differentiate between \textit{uniform} and \textit{non-uniform} positionality: uniform positionality is as defined above, whereas a property is {\it non-uniform state-labelled (edge-labelled) positional} if whenever Player 1 is winning at a state $q$ in some state-labelled (edge-labelled) arena with objective $L$, they are also positionally winning from $q$. It can be seen that uniform positionality implies non-uniform positionality, and additionally for positionality over 2P arenas this distinction makes no difference, so either definition can be taken. 
When some of these terms are omitted and not clear from context, we take the default to be \textit{uniform} positionality, and to be over 2P arenas.

\subparagraph{Memory.} Strategies $\sigma:V\cup \delta^+\rightarrow V$ as defined above can be implemented with \textit{memory structures}. A memory structure is a deterministic automaton $\mathcal{M}=(M,m_0,\mu)$ where $M$ is a set of memory states, $m_0\in M$ is an initial state, and $\mu: M\times \Sigma$ is an update function, paired with a strategy $\sigma: V \times M \rightarrow \delta$ with the same constraints on chosen edges as the strategies above. When generating a play, the move chosen by the strategy is $\sigma(v,m)$ for the current state $v$ and current memory state $m$. Every time a label $c\in\Sigma$ is seen, the current memory state is updated from $m$ to $\mu(m,c)$. In other words, given a partial play $\varsigma[0,i]$ ending on a state in $V_1$, and a corresponding trace $\lambda[0,i]$, $\varsigma[i+1] = \sigma(tgt(\varsigma[i]),\mu^+(m_0,\lambda[0,i]))$ where $\mu^+$ is the extension of $\mu$ to finite sequences of labels (i.e. for $w\in\Sigma^+$ $\mu^+(m,w) = \mu(\ldots\mu(\mu(m,w[0]),w[1])\ldots w[|w|-1])$). Winning a game with an $\omega$-regular objective requires only a finite-memory strategy \cite{Bchi1969}. Positional strategies are those implemented by memory structures where $|M| = 1$.


\subparagraph{Algebraic Language Theory.} We recall standard notions from algebraic language theory (full details can be found in e.g. \cite{200475}). An \textit{$\omega$-semigroup} is a pair $S = (S_+,S_\omega)$ of a semigroup $S_+$ and a set $S_\omega$, equipped with a \textit{mixed product} $S_+\times S_\omega\rightarrow S_\omega$ and \textit{infinite product} $\pi:S_+^\omega\rightarrow S_\omega$. We say $S$ \textit{recognises} a language $L\subseteq \Sigma^\infty$ if there is a homomorphism $\varphi:\Sigma^\infty\rightarrow S$ from the free $\omega$-semigroup $\Sigma^\infty = (\Sigma^*,\Sigma^\omega)$ to $S$ such that for some $F\subseteq S_+\cup S_\omega$, $\varphi^{-1}(F) = L$. Finite $\omega$-semigroups recognise exactly the ($\omega$-)regular languages, and admit finite presentations as \textit{Wilke algebras}, in which the infinite product is determined by a map $(-)^\omega:S_+\rightarrow S_\omega$.

A \textit{variety} of $\omega$-semigroups is a class of $\omega$-semigroups closed under subsemigroups, quotients, and finite products. An \textit{$\infty$-variety} of languages is a class of recognisable languages closed under Boolean operations, residuals, and preimages of morphisms between free $\omega$-semigroups. Varieties of $\omega$-semigroups are in bijective correspondence with $\infty$-varieties of languages \cite{2004265}. 

\subparagraph{Linear-time Temporal Logic. } {\sf LTL} is a property-specification logic, used to specify subsets of infinite words.
We define the syntax of {\sf LTL} over our set of labels $\Sigma$, by taking the smallest collection of formulas such that each $a\in\Sigma$ is a formula, and $\neg\varphi$, $\varphi\lor\psi$, $X\varphi$, and $\varphi U\psi$ are all formulas when $\varphi$ and
$\psi$ are formulas. The semantics of {\sf LTL} are subsets of infinite words $\Sigma^\omega$, defined in the standard way (see, e.g. \cite[Chapter II, 6]{Demri_Goranko_Lange_2016}).
A key fact we shall use in \Cref{sec:edgepos}, is that {\sf LTL} (when the semantics is taken over possibly infinite words) expresses exactly the {\it star-free languages} \cite{DBLP:conf/birthday/DiekertG08}: those recognisable by finite aperiodic monoids. We shall use {\it $\omega$-star-free} to mean the part of the class which consists of languages of infinite words.

It can be useful to consider the semantics of {\sf LTL} over 1P state-labelled arenas, where we take $\Sigma$ to be $2^{Prop}$, the set of subsets of atomic propositions (these are known as {\it serial Kripke frames} to modal logicians).
Here there are two choices, the {\it existential} or {\it universal} semantics: depending on whether we are looking for a single path through the transition system which generates the formula as a trace, or we demand that every possible trace through the transition system satisfies the formula.

\section{Edge-Labelled Positional Omega-Regular Languages}\label{sec:edgepos}


We refer to edge-labelled positional properties as positional properties in this section.

\subsection{Expressibility in {\sf LTL}}\label{subsec:starfree}

We will prove the statement that all $\omega$-regular positional properties are $\omega$-star-free (or equivalently, expressible as {\sf LTL} formulae).
Our approach is to characterise the $\omega$-star-free languages as those which are {\it counter-free} and $\omega$-{\it counter-free} (we define these notions shortly).
The benefit of this, is that being ($\omega$-)counter-free is a language-theoretic property, which gives us direct methods to show that they follow from a language being positional (in \Cref{cflemma} and \Cref{necwcf}).


\begin{definition}
    A language  $L\in \Sigma^\omega$ is \begin{itemize}
        \item (CF) counter-free whenever $uv^nw\in L$ iff $uv^{n+m}w\in L$ for all $m\in\mathbb{N}$,
        \item ($\omega$-CF) \textit{$\omega$-counter-free} whenever $u(vx^ny)^\omega\in L$ iff $u(vx^{n+m}y)^\omega\in L$ for all $m\in\mathbb{N}$,
    \end{itemize}
    for some $n\in \mathbb{N}$ and all $u,v,x,y\in \Sigma^*$ and $w \in \Sigma^\omega$.
\end{definition}

A language being counter-free enforces that it behaves aperiodically on the finite prefixes of infinite words, whereas it being $\omega$-counter-free enforces that it behaves aperiodically on the factors of ultimately periodic words.
It is well know that $\omega$-star-free languages are those which can be recognised by finite aperiodic monoids \cite[Theorem 1.1]{DBLP:conf/birthday/DiekertG08}, which we can show enforce the same aperiodicity conditions.
Note that the only if direction of \Cref{starfree} relies on the Arnold congruence from \cite{ARNOLD1985333}.

\begin{restatable}{proposition}{starfree}
    \label{starfree}
    An $\omega$-regular language $L\subseteq \Sigma^\omega$ is $\omega$-star-free iff it is counter-free and $\omega$-counter-free.
\end{restatable}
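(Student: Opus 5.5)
We prove both directions by passing through the syntactic $\omega$-semigroup of $L$, relying on the cited fact \cite[Theorem 1.1]{DBLP:conf/birthday/DiekertG08} that $L$ is $\omega$-star-free iff it is recognised by a finite $\omega$-semigroup $S=(S_+,S_\omega)$ whose finite part $S_+$ is aperiodic.

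\emph{(If direction.)} Suppose $L$ is $\omega$-star-free, and let $\varphi:\Sigma^\infty\to S$ be a recognising morphism with $S_+$ aperiodic and $L=\varphi^{-1}(F)$. Aperiodicity gives an $n$ such that $s^n=s^{n+1}$ for every $s\in S_+$. Hence $\varphi(x^n)=\varphi(x^{n+m})$ for all $x$ and $m$, which we handle in two cases.
\begin{itemize}
\item For counter-freeness, use the mixed product: $\varphi(uv^nw)=\varphi(u)\varphi(v)^n\varphi(w)=\varphi(uv^{n+m}w)$.
\item For $\omega$-counter-freeness, first $\varphi(vx^ny)=\varphi(vx^{n+m}y)$. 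Since the infinite product is a function of the sequence of images, $\varphi((vx^ny)^\omega)=\varphi((vx^{n+m}y)^\omega)$. Applying the mixed product with $\varphi(u)$ then gives the claim.
\end{itemize}
In both cases $F$-membership is preserved. (The empty-word cases are immediate or handled through the monoid completion.)

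\emph{(Only-if direction.)} Suppose $L$ is $\omega$-regular, CF and $\omega$-CF. We use the Arnold congruence \cite{ARNOLD1985333} on $\Sigma^+$: $s\sim t$ iff
\begin{itemize}
\item for all $u\in\Sigma^*$ and $w\in\Sigma^\omega$, $usw\in L\iff utw\in L$; and
\item for all $u,v,y\in\Sigma^*$, $u(vsy)^\omega\in L\iff u(vty)^\omega\in L$.
\end{itemize}
For $\omega$-regular $L$, $\Sigma^+/{\sim}$ is finite and is the finite part of the syntactic $\omega$-semigroup, which recognises $L$. (Since $\omega$-regular languages are determined by their ultimately periodic words, the second clause can be restricted to that form.) Let $n$ be the maximum of the two constants in the definitions of CF and $\omega$-CF; the definitions hold for this larger $n$ as well, since each is stated for all $m$. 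We claim $x^n\sim x^{n+1}$ for every $x\in\Sigma^+$:
\begin{itemize}
\item the first clause is exactly CF with $v=x$ and $m=1$;
\item the second clause follows from $\omega$-CF, where the sandwiching letters $v,y$ in the definition play the role of the context around $s$.
\end{itemize}
Thus every element of $\Sigma^+/{\sim}$ satisfies $[x]^n=[x]^{n+1}$, so the finite semigroup is aperiodic, and $L$ is $\omega$-star-free.

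\emph{Main obstacle.} The delicate step is making sure the syntactic congruence's contexts exactly match the shapes quantified in CF and $\omega$-CF. In particular:
\begin{itemize}
\item Occurrences of $s$ inside the periodic part must be uniformly replaced in every period. This is why the $\omega$-CF definition pumps inside $(vx^ny)^\omega$ rather than a single occurrence.
\item Contexts where $s$ sits in the prefix of an ultimately periodic word are covered by CF, taking $w$ ultimately periodic.
\end{itemize}
We also need to justify that the Arnold syntactic $\omega$-semigroup recognises $L$ and is finite. Both are standard for $\omega$-regular languages, and we cite them from \cite{ARNOLD1985333,200475}.
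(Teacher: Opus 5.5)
Your proof is correct and follows essentially the same route as the paper. The forward implication reads off CF and $\omega$-CF from a recogniser with aperiodic finite part, using a uniform $n$ with $s^n=s^{n+1}$; the converse takes $n=\max(n_1,n_2)$, matches Arnold's two kinds of contexts against CF and $\omega$-CF to get $x^n\approx x^{n+1}$, and so makes the finite quotient aperiodic (the only slips are cosmetic: your ``If''/``Only-if'' labels are swapped relative to the statement, and your parenthetical about ultimately periodic words belongs to the first clause with arbitrary $w\in\Sigma^\omega$, not the second).
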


Recall that the set of residuals for a language is: totally ordered when the language is positional \cite[Lemma 4.1]{theoretics:14945}; and finite when the language is $\omega$-regular. These two properties are enough to show that a language is counter-free. Roughly, if residuals are totally ordered by inclusion, then the residuals of any word $uv^i$ form a monotone chain, so membership of a word $uv^iw$ in L can change at most once; otherwise two residuals would be incomparable. But a non-counter-free language forces words where this change happens at arbitrarily high $i$, yielding arbitrarily many distinct residuals, contradicting finiteness.

\begin{restatable}{lemma}{residualcounterfree}
    \label{cflemma}
    If a language $L\subseteq \Sigma^\omega$ has a finite set of residuals $R(L)$ which is totally ordered by inclusion, then it is counter-free.
\end{restatable}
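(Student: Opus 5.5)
Fix $u,v\in\Sigma^*$ and consider the sequence of residuals $R_i := R_L(uv^i)$ for $i\in\mathbb{N}$. The plan is to show that this sequence is eventually constant, with a stabilisation index bounded uniformly in $u,v$. Then we take $n$ to be that bound, say $n := 2|R(L)|$. For any $w\in\Sigma^\omega$ we have $uv^iw\in L$ iff $w\in R_i$, so if $R_n=R_{n+m}$ for all $m$, counter-freeness follows immediately.

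First, the residual map is functorial. Writing $\rho_v: R(L)\to R(L)$ for $R_L(x)\mapsto R_L(xv)$, this map is well defined, since $R_L(xv)=\{y \mid vy\in R_L(x)\}$ depends only on $R_L(x)$. It is also monotone for inclusion: if $R_L(x)\subseteq R_L(x')$ then $R_L(xv)\subseteq R_L(x'v)$. Hence $R_{i+1}=\rho_v(R_i)$, so the sequence $(R_i)$ is the orbit of $R_0$ under a monotone self-map of a finite totally ordered set.

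Second, we use total order to show that this orbit is monotone. Compare $R_0$ and $R_1=\rho_v(R_0)$, which are comparable by assumption. If $R_0\subseteq R_1$, then applying $\rho_v$ repeatedly and using monotonicity gives $R_i\subseteq R_{i+1}$ for all $i$. The case $R_0\supseteq R_1$ is symmetric and gives a decreasing chain. A monotone chain in a finite poset of size $k=|R(L)|$ has at most $k-1$ strict steps. Once $R_i=R_{i+1}$ the sequence is constant from then on, because $R_{i+2}=\rho_v(R_{i+1})=\rho_v(R_i)=R_{i+1}$. So $R_i=R_{k}$ for all $i\ge k$, and $n:=k$ works uniformly for all $u,v,w$.

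The main subtlety is making sure the argument does not silently need periodicity. A general self-map of a finite set can have an orbit that ends in a cycle of length greater than $1$, which is exactly a counter. It is the combination of monotonicity of $\rho_v$ with comparability of $R_0$ and $R_1$ that rules such cycles out. I would check carefully that $\rho_v$ is well defined and monotone directly from the definition of residuals, since that is the only place where the argument depends on the structure of $L$.
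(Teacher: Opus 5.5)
Your proof is correct, and it follows a genuinely different and cleaner route than the paper's appendix proof.

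\textbf{Your route.} You lift right-multiplication by $v$ to a well-defined monotone map $\rho_v$ on the finite chain $R(L)$. Comparability of $R_L(u)$ and $R_L(uv)$ then propagates along the whole orbit, and you read off the uniform bound $n=|R(L)|$ directly.

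\textbf{The paper's route.} It fixes $w$ and reasons about membership of the individual words $uv^iw$.
\begin{itemize}
\item First it shows that membership switches at most once as $i$ grows. If membership went out at a least index $i$ and back in at a least index $j>i$, then $w$ and $vw$ would separate $R_L(uv^{i-1})$ and $R_L(uv^{j-1})$ in both directions, so these two residuals would be incomparable.
\item It then argues by contradiction. If switches could occur arbitrarily late, say at $m_i>n$, the prefixes $uv^0,\dots,uv^{m_i}$ would have pairwise distinct residuals, so $R(L)$ would be infinite.
\end{itemize}

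\textbf{Comparison.} The paper's main-text sketch states the intuition as "the residuals of $uv^i$ form a monotone chain", which is exactly what you prove. The appendix never establishes this directly and goes through switch-counting instead. Your version has three advantages:
\begin{itemize}
\item It is constructive with an explicit bound.
\item It proves the stronger fact that the sequence $R_L(uv^i)$ itself is eventually constant. In other words, the residual automaton has no nontrivial counters, rather than only membership stabilising for each fixed $w$.
\item It transfers immediately to the state-labelled variant the paper needs later. Monotonicity of $\rho_v$ uses no order hypothesis, so you can start the orbit at $R_L(uv)$. That residual is comparable with $R_L(uv^2)$ because both words end in the last letter of $v$.
\end{itemize}
The paper's argument avoids introducing the transition map, but it is more roundabout.

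One small point: you first propose $n:=2|R(L)|$ and later use $n:=k$. This inconsistency is harmless, since any $n\geq |R(L)|-1$ works.
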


We have a similar result for $\omega$-counter-freeness, by first showing that for each word $u(vw^ix)$, there is a bound on $i$ at which the word no longer changes membership of $L$, and then showing this bound is uniform using the syntactic congruence from \cite{ARNOLD1985333}.

\begin{restatable}{lemma}{regposcounterfree}
    \label{necwcf}
    If a language $L\subseteq\Sigma^\omega$ is $\omega$-regular and positional, then it is $\omega$-counter-free.
\end{restatable}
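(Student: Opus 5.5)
The plan is to argue by contradiction in two stages. First we show that, for fixed $u,v,x,y$, membership of $u(vx^iy)^\omega$ in $L$ is eventually constant in $i$. Then we make the threshold uniform using the finiteness of the syntactic (Arnold) $\omega$-semigroup of $L$ from \cite{ARNOLD1985333}.

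\textbf{Reduction to a periodicity question.} Let $S$ be the finite syntactic $\omega$-semigroup of $L$ and $\varphi$ the syntactic morphism. For every $s\in S_+$, the sequence $s,s^2,s^3,\dots$ is ultimately periodic, with some index $k_s$ and period $p_s$. Take $n$ to be the maximum of the $k_s$ over the finitely many $s\in S_+$. This $n$ depends only on $L$, not on $u,v,x,y$, which gives uniformity for free. Then for $i\ge n$, the element $\varphi(u(vx^iy)^\omega)$ depends only on $i \bmod p_{\varphi(x)}$.

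It therefore suffices to show the following: for $i\ge n$ in a single residue class, membership is fixed, and membership agrees across residue classes. In other words, if $u(vx^{a}y)^\omega\in L$ and $u(vx^{b}y)^\omega\notin L$ with $a,b\ge n$, we need a contradiction. The residue-class argument also lets us replace any sequence of exponents in the good class by the constant one. That is, a word $u\,vx^{k_1}y\,vx^{k_2}y\cdots$ with all $k_j\ge n$ and $k_j\equiv a \pmod p$ has the same $\varphi$-image as $u(vx^ay)^\omega$. This follows by grouping factors, using that $\varphi$ is a morphism of $\omega$-semigroups.

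\textbf{Gadget arena.} Assume $a\not\equiv b$ as above, and write $p=p_{\varphi(x)}$. Build an edge-labelled arena as follows.
\begin{itemize}
\item From the initial state, read $u$ into a state $q$.
\item From $q$, read $v$ into a Player 2 state $s$.
\item At $s$, Player 2 chooses between reading $x^{n}$ and reading $x^{n+(b-a)}$ (taking representatives modulo $p$). Both paths enter Player 1 state $r_0$.
\item The states $r_0,\dots,r_{p-1}$ form an $x$-labelled cycle owned by Player 1.
\item Each $r_j$ also has a $y$-edge back to $q$.
\end{itemize}
With memory, Player 1 remembers which branch Player 2 took and exits at the index $j$ that makes the total exponent congruent to $a$. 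By the previous paragraph, every resulting play has the $\varphi$-image of $u(vx^ay)^\omega$, so it lies in $L$ and Player 1 wins.

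Now take any positional strategy of Player 1. It either exits at a fixed $r_j$, or it cycles forever. If it exits at a fixed $r_j$, then Player 2 can always choose the branch that makes the total exponent $\equiv b$. This produces $u(vx^{b'}y)^\omega\notin L$ with $b'\equiv b$, so the positional strategy loses, contradicting positionality. Running the same argument with the roles of $a$ and $b$ swapped covers both directions of the biconditional. Within a single residue class, equality for $i\ge n$ is already given by the choice of $n$.

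\textbf{Main obstacle.} The delicate case is the positional strategy that never exits, which produces $u\,vx^{c}\,x^\omega$. If that word lies in $L$, then the gadget above does not yield a contradiction. I would first try to remove this option by giving Player 2 control over leaving the cycle. Concretely, interleave the cycle so that after a full turn of length $p$ through $r_0,\dots,r_{p-1}$, the play passes through a Player 2 state. From there, Player 2 may either force the $y$-exit or return to $r_0$. The forced exit happens at a fixed residue of Player 2's choosing, and that residue is harmless to Player 1 only if it matches the class of $a$.

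If that does not work, I would use the total order on residuals \cite[Lemma 4.1]{theoretics:14945}. The residuals of $uvx^{c}$ for $c$ in the two residue classes are comparable. Comparing them against the $x^\omega$ continuation versus the $y(vx^{\ast}y)^\omega$ continuation should show that the looping option cannot be winning when the memory strategy is needed. Checking this case cleanly is where I expect most of the work to lie.
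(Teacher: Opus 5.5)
Your overall plan matches the paper's: use periodicity of powers in the finite (Arnold) syntactic algebra to get a threshold $n$ that does not depend on $u,v,x,y$, then refute any disagreement beyond $n$ with a gadget game that Player~1 wins with memory but not positionally. The paper does these steps in the other order, first proving eventual constancy by excluding an out--in--out pattern. Your reduction to ``all residue classes of exponents $\ge n$ agree'' and your grouping argument are fine. The gap is in the gadget, and it has two parts.

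\textbf{Two branches for Player~2 are not enough.} Suppose a positional strategy first exits at $r_{j^*}$. Player~2 can then only produce blocks whose exponent residue mod $p$ is $n+j^*$ or $n+j^*+(b-a)$. One of these is $\equiv b$ only if $n+j^*\in\{a,b\}$. When $p\ge 3$, Player~1 can choose $j^*$ with $n+j^*\equiv c\notin\{a,b\}$. Both totals then avoid $b$, and nothing you know places the resulting plays outside $L$. For example, take $p=3$ with residues $0,2$ in $L$ and residue $1$ out. For either admissible choice of $(a,b)$, Player~1 can keep every block's residue in $\{0,2\}$. So the sentence ``Player~2 can always choose the branch that makes the total exponent $\equiv b$'' is false in general.

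\textbf{The never-exit strategy is unresolved.} As you note, if $uvx^\omega\in L$ the cycle gadget is simply won positionally. Your residual-based fallback is only a hope, not an argument.

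Both problems go away with a small redesign.
\begin{itemize}
\item At $s$, let Player~2 choose among all $p$ branches $x^{n},x^{n+1},\dots,x^{n+p-1}$.
\item Replace Player~1's cycle by a finite menu at a single state $r$: paths labelled $x^{j}y$ for $0\le j<p$, each leading back to $q$. You can shift these exponents by $p$ to avoid empty edges.
\end{itemize}
A finite menu is exactly how the paper's gadget avoids a looping option; there Player~1 picks among $\epsilon$, $x^{k-j}$ and $x$.

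With memory, Player~1 compensates each round so that every block $vx^{c}y$ has $c\ge n$ and $c\equiv a$. Every play therefore has the $\varphi$-image of $u(vx^ay)^\omega$ and lies in $L$. A positional strategy fixes some $j^*$. Player~2 answers every round with the branch $t$ making $n+t+j^*\equiv b$. The play is then literally $u(vx^{b'}y)^\omega$ with $b'\ge n$ and $b'\equiv b$, which is not in $L$.

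Your first fallback, a Player~2-controlled exit after a full turn, also works, but only once Player~2 has all residues available.

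Incidentally, the paper's own gadget also gives Player~2 only two options and has a similar hole. Player~1's option $x^{k-j}$ against Player~2's $x^{k-1}$ yields $x^{2k-j-1}$, whose membership is not controlled. So branching over all residues is the robust form of the argument.
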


Combining \Cref{starfree}, \Cref{cflemma}, and \Cref{necwcf}, we obtain:

\begin{theorem}
    \label{theorem:positional_express_ltl}
    If a language $L\subseteq \Sigma^\omega$ is $\omega$-regular and positional, then it is $\omega$-star-free.
\end{theorem}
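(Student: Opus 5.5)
The theorem follows by assembling the three earlier results, so the plan is to verify the hypotheses of each in turn.

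First, let $L\subseteq\Sigma^\omega$ be $\omega$-regular and (edge-labelled) positional. By the result recalled before \Cref{cflemma}, \cite[Lemma 4.1]{theoretics:14945}, positionality of $L$ implies that its set of residuals $R(L)$ is totally ordered by inclusion. Since $L$ is $\omega$-regular, $R(L)$ is also finite \cite{MALER199793}. These are exactly the hypotheses of \Cref{cflemma}, so $L$ is counter-free. One detail to check is that \cite[Lemma 4.1]{theoretics:14945} is stated for the same notion of positionality used here, namely edge-labelled and over two-player arenas. Our notion is at least as strong as the one-player positionality used in the proof of total orderedness: restricting to arenas with $V_2=\emptyset$ only weakens the requirement. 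So the implication transfers.

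Second, \Cref{necwcf} applies directly under the same hypotheses, $\omega$-regular and positional, and gives that $L$ is $\omega$-counter-free.

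Finally, $L$ is $\omega$-regular, counter-free and $\omega$-counter-free, so the ``if'' direction of \Cref{starfree} yields that $L$ is $\omega$-star-free, equivalently expressible in {\sf LTL} by \cite{DBLP:conf/birthday/DiekertG08}. No step here is a genuine obstacle, since all the substantive work lives in \Cref{cflemma}, \Cref{necwcf} and \Cref{starfree}. The only point needing care is the compatibility of the positionality notion in the cited residual-ordering lemma, which I would discharge by the remark above.
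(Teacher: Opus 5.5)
Your proposal is correct and follows the paper's own argument: the paper proves \Cref{theorem:positional_express_ltl} by using total ordering of residuals (the cited Lemma 4.1) together with finiteness of residuals to apply \Cref{cflemma}, then \Cref{necwcf} for $\omega$-counter-freeness, and finally the ``if'' direction of \Cref{starfree}. Your remark on the compatibility of positionality notions (two-player positionality implies the one-player version) is a harmless check that the paper leaves implicit.
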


Since ${\sf LTL}$ can express all $\omega$-star-free languages (and vice versa), then all positional properties can be expressed in ${\sf LTL}$.

\subsection{Language-Theoretic Characterisation}
\label{subsec:edge_syntactic}

In this section we will provide a characterisation for $\omega$-regular properties which are positional over edge-labelled arenas. Unless otherwise stated, in this section we will consider arenas to be finite-state. We shall use the fact that for a game with an $\omega$-regular winning condition, a player has a winning strategy iff they have a winning strategy with memory that can be encoded in finite states \cite{Bchi1969}. In a 1P arena, this will generate an ultimately periodic play, i.e. a play of the form $pl^\omega$ for some $p\in \delta^*,l\in\delta^+$. The main idea is to enforce closure conditions on $L$ which guarantee that for any winning ultimately periodic play $pl^\omega$, there exists a winning ultimately periodic play $p'l'^\omega$ where $p'l'$ contains strictly fewer violations of positionality - choices of different edges from same state - than $pl$;  this is done with closure conditions that guarantee for any cycle in the play beginning and ending on the same state, either we can remove this cycle or take this cycle forever and still have a winning play. Since $pl$ is finite, it contains a finite number of violations of positionality, so we end up with an ultimately periodic play in which any given state is always followed by the same edge. From this we can construct a positional strategy which generates the same play. We can find some closure conditions which achieve this and are also necessary for positionality.
This method lets us characterise {\it non-uniform} 1P positional properties:



\begin{restatable}{proposition}{nonuniformpositional}
    \label{prop:1pnonuniformpos}
    An $\omega$-regular language $L$ is non-uniform 1P positional\footnote{Note the results from the previous section do not necessarily apply to non-uniform 1P positional languages, as this class of languages does not require totally ordered residuals.} iff for all $u,y\in \Sigma^*,$ $v,w,x\in \Sigma^+$:
    \begin{enumerate}
        \item[(1)] if $uvwx^\omega\in L$ then either $uv^\omega \in L$ or $uwx^\omega\in L$
        \item[(2)] if $u(vwy)^\omega\in L$ then either $uvw^\omega \in L$ or $u(vy)^\omega\in L$
    \end{enumerate}
\end{restatable}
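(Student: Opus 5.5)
Both directions are handled separately: necessity by building small one-player edge-labelled arenas, sufficiency by the cycle-elimination argument sketched before the statement.

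\emph{Necessity.} Assume $L$ is non-uniform 1P positional and fix the words in each condition.

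For (1), build a 1P arena with an initial path labelled $u$ to a state $q$. From $q$ there are two cycles: a loop labelled $v$ back to $q$, and a path labelled $w$ to a state $r$. At $r$ there is a loop labelled $x$. Paths are made of fresh states, and the convention that an edge labelled by a word denotes a unique sequence of transitions is used. The play $u\,v\,w\,x^\omega$ is winning, so Player 1 wins from the initial state. By non-uniform positionality there is a positional winning strategy. At $q$ it either always takes $v$, giving $uv^\omega$, or always takes $w$, giving $uwx^\omega$. All other states have a single choice, so one of these two words lies in $L$.

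For (2), use the arena with a path $u$ to $q$, a path $v$ from $q$ to $p$, a loop $w$ at $p$, and a path $y$ from $p$ back to $q$. The winning play $u(vwy)^\omega$ exists, and a positional strategy at $p$ yields either $uvw^\omega$ or $u(vy)^\omega$. The case $y$ empty needs care: then the $y$-edge from $p$ to $q$ is empty. I would handle this by merging $p$ with $q$ and choosing the loop at $q$ between $vw$-then-$w^\omega$ and $v$, or by prefixing it through $v$. Since $w\in\Sigma^+$, the arena still separates the two choices.

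\emph{Sufficiency.} Assume (1) and (2), and suppose Player 1 wins from $q$ in a finite 1P arena. Since $L$ is $\omega$-regular, a finite-memory winning strategy exists \cite{Bchi1969}, so there is a winning ultimately periodic play $\varsigma=pl^\omega$. I argue by induction on the number of \emph{violations} in $pl$. A violation is a state from which $pl$ (read cyclically, with the loop closing back to the start of $l$) leaves by two different edges.

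Given a violation at state $s$, I pick two consecutive visits to $s$ with differing outgoing edges and case on where they fall.

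\textbf{Case A: both visits lie in the prefix $p$, or one in $p$ and one in $l$.} Write the trace as $u\,v\,w\,x^\omega$ with $v$ the cycle from $s$ to $s$, after suitably rotating $l$ so that the loop begins after the second visit. Condition (1) replaces the play by either $uv^\omega$, which pumps the cycle forever, or $uwx^\omega$, which deletes the cycle.

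\textbf{Case B: both visits lie within the loop.} The cycle sits inside $l$, giving the form $u(vwy)^\omega$ with $w$ the inner cycle at $s$. Condition (2) yields either $uvw^\omega$, which pumps, or $u(vy)^\omega$, which deletes.

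In each case the new word is the trace of an actual ultimately periodic play in the arena, since deleted and pumped segments are cycles. The number of violations must strictly decrease, and that is the main obstacle. Deleting a cycle removes edges without introducing new choices. Pumping replaces the tail by a simple repetition of a segment already present, so the outgoing edges used form a subset of the old ones. The delicate point is ensuring the chosen $s$ actually loses one of its two edges. For this I would choose the cycle minimal, namely between consecutive visits to $s$ with no intermediate return to $s$, so that in the pumped play $s$ always exits via the cycle's first edge and in the deleted play via the other edge. A general subset-of-edges argument then shows no other state gains a violation.

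When no violations remain, every state on the play has a unique successor edge. Defining $\sigma$ by that edge on visited states, and arbitrarily elsewhere, gives a positional strategy whose only play from $q$ is the final winning play. Hence $L$ is non-uniform 1P positional.
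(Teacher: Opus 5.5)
Your overall plan is the paper's: small one-player arenas for necessity, and pumping or deleting cycles of a winning ultimately periodic play $pl^\omega$ via (1) and (2) for sufficiency. However, the termination step fails as you state it. An induction on the \emph{number of violating states} needs the chosen $s$ to stop being a violation after one step. Picking consecutive visits does not ensure this, because it controls only the cycle itself. The rest of the play can contain further visits to $s$ leaving by either edge: in Case A the part $u$ before the first visit and the tail $wx^\omega$, and in Case B the parts $p$, $v$ and $y$. For instance, take one state $s$ with self-loops labelled $a$ and $b$, and let $L$ be ``infinitely many $b$'', which satisfies (1) and (2). Present a winning play with $p=l=ab$. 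Positions $0$ and $1$ are consecutive visits to $s$ with different edges, both in $p$. Since $a^\omega\notin L$, condition (1) forces the deletion $b(ab)^\omega$, in which $s$ still leaves by both edges. So your count has not dropped, and your claim that the deleted play exits $s$ only via the other edge is false.

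The paper instead counts occurrences. Its $vio(pl)$ adds up, over the violating states $v$, the number of positions of the finite word $pl$ that leave $v$. Every step uses only edges already present, so no new violations arise. Each step also cuts a nonempty segment beginning at $s$ out of $pl$. Hence $vio$ (or simply $|p|+|l|$) strictly decreases even when $s$ remains a violation. This needs one adjustment in the case of one visit at position $i$ of $p$ and one at position $j$ of $l$: present the deleted play as $p[0..i-1](l[j..|l|-1]\,l[0..j-1])^\omega$ rather than $p[0..i-1]\,l[j..|l|-1]\,l^\omega$. With that measure the rest of your argument goes through.

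There are two smaller points. First, in Case B condition (2) requires $v\in\Sigma^+$, but $v$ is empty when the first visit is at the start of $l$. In that case write $l=l_1l_2$ with both $l_1,l_2$ cycles at $s$. Apply (2) with $(v,w,y)=(l_1,l_2,\varepsilon)$ to get $pl_1l_2^\omega$ or $pl_1^\omega$; in the former case (1) yields $pl_1^\omega$ or $pl_2^\omega$. Second, in the necessity part with $y=\varepsilon$, only your second option is correct. Merging $p$ with $q$ gives an arena whose positional plays yield $uv^\omega$ and $uw^\omega$, not the pair required by (2). A path labelled $uv$ into a state carrying loops $w$ and $v$ gives exactly $uvw^\omega$ or $uv^\omega$.
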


The closure under union of certain classes of positional language has been a notable question since \cite{kopthesis}. It can be seen that when we take a finite union of such languages, the resulting language will still satisfy the closure conditions (1) and (2). Therefore:

\begin{restatable}{proposition}{finiteunion}
    $\omega$-regular non-uniform 1P positional languages are closed under finite union.
\end{restatable}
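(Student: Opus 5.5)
The plan is to reduce the statement to \Cref{prop:1pnonuniformpos}. Let $L = L_1\cup\dots\cup L_k$, where each $L_i$ is $\omega$-regular and non-uniform 1P positional. Since $\omega$-regular languages are closed under finite union, $L$ is $\omega$-regular. By \Cref{prop:1pnonuniformpos}, it therefore suffices to show that $L$ satisfies closure conditions (1) and (2). By induction on $k$ it is enough to treat $k=2$, although the argument below works directly for any finite $k$.

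For condition (1), suppose $uvwx^\omega\in L$ with $u,y\in\Sigma^*$ and $v,w,x\in\Sigma^+$. Then $uvwx^\omega\in L_i$ for some $i$. Applying \Cref{prop:1pnonuniformpos} to $L_i$, condition (1) gives $uv^\omega\in L_i$ or $uwx^\omega\in L_i$. Since $L_i\subseteq L$, one of $uv^\omega$ and $uwx^\omega$ lies in $L$.

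Condition (2) is handled in the same way. If $u(vwy)^\omega\in L$, then $u(vwy)^\omega\in L_i$ for some $i$, and condition (2) for $L_i$ gives $uvw^\omega\in L_i\subseteq L$ or $u(vy)^\omega\in L_i\subseteq L$.

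The key observation is that both conditions have the form ``one word in the language implies that one of two words is in the language''. Conditions of this shape are preserved under union, because the witness for the conclusion can always be found inside the same component $L_i$ that contains the hypothesis word. There is no real obstacle. The only care needed is to invoke the characterisation in both directions: the ``only if'' direction yields the conditions for each $L_i$, and the ``if'' direction turns the conditions for $L$ back into non-uniform 1P positionality. The $\omega$-regularity of $L$ is needed because the characterisation is stated only for $\omega$-regular languages.
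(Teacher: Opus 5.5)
Your proof is correct and follows the paper's own argument. Both check conditions (1) and (2) of \Cref{prop:1pnonuniformpos} componentwise, using that the hypothesis word lies in a single $L_i$ whose witness then lies in the union, and both note that finite unions preserve $\omega$-regularity. (As an aside, in a 1P arena a strategy from $q$ yields a single play, so the closure also follows directly from the definition: the winning trace lies in some $L_i$, and the positional strategy that $L_i$ provides also wins for the union.)
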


However, there are languages in this class which are not positional on two-player arenas. For example, the property $L= \{a^\omega, b^\omega\}$, which says that either we only see $a$ or we only see $b$; the residuals $R_L(a) = \{a^\omega\}$ and $R_L(b) = \{b^\omega\}$ are incomparable. For games without a given starting state, we require that the language has totally ordered residuals (see \Cref{subsec:starfree}). A finite-memory strategy now generates an ultimately periodic play from each state - this means we can have violations of positionality `across' the plays. In conjunction with the method from \Cref{prop:1pnonuniformpos}, these can be amended using totally ordered residuals, which given two partial plays ending at the same state allows us to replace the continuation of one play with the other.

\begin{restatable}{proposition}{poneuniformpos}
\label{prop:1puniformpos}
    An $\omega$-regular language $L$ is uniform 1P positional iff it has totally ordered residuals and satisfies (1) and (2) from \Cref{prop:1pnonuniformpos}.
\end{restatable}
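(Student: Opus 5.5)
We prove both directions separately, reusing \Cref{prop:1pnonuniformpos} wherever possible.

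\emph{Necessity.} Assume $L$ is uniform 1P positional. Uniform positionality implies non-uniform positionality, so conditions (1) and (2) follow directly from \Cref{prop:1pnonuniformpos}.

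For totally ordered residuals, we adapt the argument of \cite[Lemma 4.1]{theoretics:14945} to 1P arenas. Suppose $R_L(u)$ and $R_L(x)$ are incomparable. Pick witnesses $v \in R_L(u)\setminus R_L(x)$ and $y\in R_L(x)\setminus R_L(u)$. Since $L$ is $\omega$-regular, we may take $v,y$ ultimately periodic, because residuals of $\omega$-regular languages are $\omega$-regular. Build a 1P arena with the following parts:
\begin{itemize}
\item a chain from $q_u$ reading $u$ into a central state $c$;
\item a chain from $q_x$ reading $x$ into the same state $c$;
\item two lasso gadgets out of $c$, one producing $v$ and one producing $y$.
\end{itemize}
Take $W=\{q_u,q_x\}$. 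A general strategy wins from both states: it chooses $v$ after reading $u$ and $y$ after reading $x$. A positional strategy must make a single choice at $c$, so it produces $uv'$ and $xv'$ with the same continuation $v'\in\{v,y\}$. One of these plays then loses, which contradicts uniform positionality. The gadgets should be built to avoid unintended extra choices, which is done by using fresh states for each lasso.

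\emph{Sufficiency.} Assume $L$ has totally ordered residuals and satisfies (1) and (2). Let $\sigma$ be a strategy winning from every $w\in W$; by \cite{Bchi1969} we may assume it has finite memory. For each $w\in W$ it then yields an ultimately periodic winning play. Applying the construction in the proof of \Cref{prop:1pnonuniformpos} to each play separately, we may assume every play $\rho_w$ is individually positional: each state on $\rho_w$ is always followed by the same edge. The remaining difficulty is conflicts between different plays, where the same state is followed by different edges in $\rho_w$ and $\rho_{w'}$.

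We resolve these conflicts by an iterative merging argument. Fix a total order on $W$ and process states one at a time, maintaining a partial positional strategy $\tau$ defined on the union of the states already visited. When the play of the next $w$ first reaches a state $s$ on which $\tau$ is already defined, write its prefix as $p$. The $\tau$-play from $s$ is ultimately periodic and comes from some earlier play with prefix $p'$ ending at $s$, with $p'$ followed by a continuation $z$ and $p'z\in L$. We have two cases.
\begin{itemize}
\item If $R_L(p)\supseteq R_L(p')$, then $pz\in L$, so redirecting $w$ to follow $\tau$ from $s$ onwards is winning.
\item Otherwise $R_L(p)\subsetneq R_L(p')$ by total order. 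Then we redirect all earlier plays passing through $s$ to follow $w$'s continuation instead, and this is winning for them.
\end{itemize}

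The main obstacle is the second case. Changing $\tau$ at $s$ alters the plays of every earlier state that passes through $s$, and those plays may reach $s$ via prefixes with different residuals. To handle this, we redefine $\tau$ at $s$ to follow the continuation that is winning from the \emph{smallest} residual among all prefixes reaching $s$. That continuation is then winning from every such prefix, by inclusion of residuals. A second issue is that the new continuation may re-enter states of the earlier plays and create new cycles or conflicts. We handle this by a termination measure: the number of states at which the current collection of plays disagrees, which decreases strictly at each step. After each step we re-apply the single-play normalisation via (1) and (2); that normalisation only removes or repeats cycles, so it never increases the measure. Once no conflicts remain, the union of the edge choices defines a positional strategy winning from all of $W$.
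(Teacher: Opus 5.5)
Your proposal is correct and is essentially the paper's proof: (1) and (2) are inherited from the non-uniform case. Sufficiency goes by normalising each ultimately periodic play with (1) and (2), then, at a state where the plays disagree, cutting every play at its first visit to that state and grafting on the continuation of the play whose prefix has the \emph{smallest} residual; this terminates because no step ever introduces an edge not already present in some play, and you should state that fact and the first-visit cut explicitly, since they are exactly why your measure drops. The differences are minor: you prove the total-order requirement with an explicit 1P gadget where the paper cites \cite[Lemma 4.1]{theoretics:14945}; your choice of the smallest residual is the correct direction (the paper's write-up states the inclusion the other way round, which looks like a slip); and the partial positional strategy $\tau$ in your sufficiency argument is superfluous---the measure argument over collections of plays is what does the work.
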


In \cite{theoretics:14945}, it was shown that uniform 1P positional $\omega$-regular languages are exactly the positional $\omega$-regular languages.

\begin{proposition}
    [\cite{theoretics:14945}] An $\omega$-regular language is positional over finite Player 1 controlled edge-labelled arenas iff it is positional over all edge-labelled arenas.
\end{proposition}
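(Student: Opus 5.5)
The direction from all arenas to finite Player~1 arenas is immediate, since a finite Player~1 controlled edge-labelled arena is a special case of an edge-labelled arena (take $V_2=\emptyset$), and the definition of positionality quantifies over all arenas. The substance is the converse, a ``1-to-2 player lifting'' for $\omega$-regular objectives. I would not try to do this purely game-theoretically. Instead I would go through Ohlmann's characterisation of positionality by universal graphs \cite{theoretics:9724}. An objective $L$ is positional over all edge-labelled arenas iff, for every cardinal $\kappa$, it admits a $\kappa$-universal graph that is \emph{well-monotone}: its vertices are well-ordered, and edges are closed downward in the source and upward in the target. So the task is to build such universal graphs from the hypothesis.

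First, from positionality over finite 1P arenas, I would extract the finite combinatorial information already isolated above. By \Cref{prop:1puniformpos}, $L$ has totally ordered residuals and satisfies conditions (1) and (2) of \Cref{prop:1pnonuniformpos}. Using the total order on the finite set $R(L)$, I would then fix a deterministic parity automaton $\mathcal{A}$ for $L$ refining the automaton of residuals. Each state carries its residual, the residuals are linearly ordered, and any state may be redirected to a state of larger residual without losing language-inclusion (``$\varepsilon$-completability'').

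The key intermediate goal is a structural normal form for $\mathcal{A}$. Within each residual class, the states should be arranged so that the priorities of cycles respect conditions (1) and (2). Concretely, I would prove by induction on the number of priorities that, inside each residual class, the states can be linearly ordered so that adding all ``downward'' $\varepsilon$-shortcuts preserves the language. Condition (1) is what lets a cycle either be dropped or repeated forever. Condition (2) is what makes shortcuts between nested cycles harmless. The induction peels off the minimal (or maximal) priority, mimicking a Zielonka-tree decomposition, and uses the fact that $R(L)$ is finite.

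Given this normal form, I would define, for each $\kappa$, a graph whose vertices are pairs $(q,\text{signature})$. Here $q$ ranges over the states of $\mathcal{A}$, ordered lexicographically by residual and then by the in-class order, and the signature is a tuple of ordinals below $\kappa$, one for each odd priority, as in the usual progress-measure construction for parity games. I would order the vertices lexicographically and put an edge labelled $c$ from $(q,s)$ to $(q',s')$ whenever $q'$ dominates the $c$-successor of $q$ and the signature decreases appropriately. Monotonicity then comes from the $\varepsilon$-shortcuts, and well-foundedness from the ordinal signatures. Universality would follow from the standard argument that any graph all of whose paths lie in $L$ admits a progress measure into this structure.

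The main obstacle I expect is the normal-form step. Turning the purely language-theoretic conditions (1) and (2) together with totally ordered residuals into a global linear order on automaton states, compatible with all priorities simultaneously, is where the real difficulty of the lifting lies. This is the step I would attack first, starting with the Büchi and coBüchi cases to guess the right inductive invariant.
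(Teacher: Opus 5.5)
Your outline follows the architecture of the proof in \cite{theoretics:14945}, which this paper only cites. That route extracts totally ordered residuals and progress-type conditions from 1-player positionality, puts a deterministic parity automaton for $L$ into a normal form by adding $\varepsilon$-transitions, and turns that normal form into well-monotone universal graphs (a lexicographic product with ordinal signatures) to which Ohlmann's sufficient condition applies. However, the proposal does not prove the statement. The normal-form step, which you explicitly leave open, is the entire technical content of the 1-to-2 player lift. Moreover, the normal form you state, one linear order per residual class with priority-free downward shortcuts and states compared before signatures, is too weak for the graph construction that follows.

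To see this, take $L$ = ``$bb$ occurs finitely often'' from \Cref{eg:non_concave}, with states $q_a,q_b$ recording whether the last letter was $b$. The $b$-loop on $q_b$ has the odd priority; all other transitions are even. Both states lie in the single residual class. Priority-free downward shortcuts preserve $L$ only if $q_b$ is below $q_a$, since a shortcut $q_b\to q_a$ accepts $b^\omega$. Your graph contains $(q_b,s)\xrightarrow{b}(q_b,s')$ for $s'<s$ and $(q_b,0)\xrightarrow{a}(q_a,0)$. Since states are compared before signatures, $(q_a,0)$ dominates every $(q_b,n)$, so monotonicity forces $(q_a,0)\xrightarrow{b}(q_b,n-1)$. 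This yields a cycle $(q_a,0)\xrightarrow{b}(q_b,n-1)\xrightarrow{b}\cdots\xrightarrow{b}(q_b,0)\xrightarrow{a}(q_a,0)$ reading $(b^na)^\omega\notin L$, so the graph cannot be both monotone and sound. If you put the ordinal first instead, monotonicity demands edges $(q_b,s+1)\xrightarrow{b}(q_a,t)$ for $t<s$. These are moves \emph{up} in the in-class order, sound only because they are paid for with the odd priority; no priority-free downward shortcut justifies them.

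What is needed is the notion of $\varepsilon$-completable automaton from \cite{theoretics:14945}, as best I recall it; the example above at least forces the following shape. $\varepsilon$-transitions carry priorities, the in-class comparison is a nested family of preorders indexed by priority levels, and the graph order interleaves these preorders with the ordinal coordinates. Deriving such an automaton from 1-player positionality is a substantial induction in \cite{theoretics:14945}, and your sketch supplies no argument for it. The inputs available are totally ordered residuals and progress consistency, which does follow from your condition (1): if $R_L(u)\subsetneq R_L(uw)$, apply (1) to $uwyx^\omega\in L$ with $yx^\omega\in R_L(uw)\setminus R_L(u)$. Two smaller points:
\begin{itemize}
\item Ohlmann's characterisation is an equivalence only under prefix-independence and a neutral letter. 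You need just its sufficiency direction, in a form valid for non-prefix-independent $L$.
\item Your orientation is inconsistent. ``Redirecting to a state of larger residual'' and ``$q'$ dominates the $c$-successor'' describe the direction that enlarges the language, whereas language-preserving shortcuts and sound edges go the other way.
\end{itemize}
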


So from this we have our main result for this section, that the same conditions hold for positionality over 2P arenas. 

\begin{corollary}\label{cor:omega-pos}
    An $\omega$-regular language $L$ is positional over 2P edge-labelled arenas (finite or infinite) iff it has totally ordered residuals and satisfies (1) and (2)
    from \Cref{prop:1pnonuniformpos}.
\end{corollary}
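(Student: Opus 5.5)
The corollary follows by chaining \Cref{prop:1puniformpos} with the cited result of \cite{theoretics:14945}, handling each direction and the finite/infinite distinction separately.

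For the forward direction, suppose $L$ is positional over all 2P edge-labelled arenas, finite or infinite. In particular, $L$ is positional over finite arenas with $V_2=\emptyset$. With our uniform definition of positionality, this says exactly that $L$ is uniform 1P positional over finite arenas. \Cref{prop:1puniformpos} is stated for finite-state arenas, the standing convention of this subsection. It therefore yields that $L$ has totally ordered residuals and satisfies (1) and (2). No further work is needed here, since positionality over a larger class of arenas trivially implies positionality over the subclass.

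For the backward direction, assume $L$ is $\omega$-regular, has totally ordered residuals, and satisfies (1) and (2). By \Cref{prop:1puniformpos}, $L$ is uniform 1P positional over finite arenas. The cited proposition of \cite{theoretics:14945} then upgrades this to positionality over all edge-labelled arenas, including 2P ones. We also note that in 2P arenas uniform and non-uniform positionality coincide, as remarked in the preliminaries, so no ambiguity arises in the conclusion.

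The main obstacle is the parenthetical ``finite or infinite''. The cited equivalence must genuinely cover infinite 2P arenas, and I would check that the statement in \cite{theoretics:14945} includes arenas of arbitrary cardinality. I believe their result is phrased for all arenas, with positionality over finite 1P arenas lifting to arbitrary arenas for $\omega$-regular objectives, via their automata characterisation using $\varepsilon$-completable/structured automata. If their transfer only covered finite 2P arenas, I would fall back on their finite-to-infinite lifting argument for $\omega$-regular objectives. The idea there is that positionality over all arenas follows from the existence of a suitable well-ordered universal structure, obtained from the finitely many totally ordered residuals. The remaining steps are purely formal composition of the stated results.
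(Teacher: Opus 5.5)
Your proposal is correct and follows the paper's own route. The corollary comes from composing \Cref{prop:1puniformpos} (uniform 1P positionality over finite arenas iff totally ordered residuals together with (1) and (2)) with the result of \cite{theoretics:14945} that, for $\omega$-regular languages, positionality over finite Player 1 edge-labelled arenas is equivalent to positionality over all edge-labelled arenas; since the paper states that cited result for all arenas, finite or infinite, your fallback lifting argument is not needed.
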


We note that a very similar form of the above characterisation has been independently discovered by Colcombet and Idir \cite{colcombet2026algebraiccharacterisationevepositionallanguages}.

We will discuss some corollaries of this characterisation. Firstly, these conditions allow us to give a decision procedure for checking whether an $\omega$-regular language is positional. Given a finite algebraic object recognising a $\omega$-regular language, we can easily check the above conditions from the multiplication table. Therefore:

\begin{restatable}{corollary}{wilkealg}
    \label{coro:wilke_algebra_recog}
    Given a Wilke algebra $S = (S_+,S_\omega)$ recognising a language $L\subseteq \Sigma^\omega$, we can determine whether $L$ is positional in polynomial time.
\end{restatable}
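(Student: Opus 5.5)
The plan is to use \Cref{cor:omega-pos} and check its three conditions directly inside the Wilke algebra. We may assume the recognising data are given explicitly: the multiplication table of $S_+$, the mixed product $S_+\times S_\omega\to S_\omega$, the map $(-)^\omega:S_+\to S_\omega$, the accepting set $F\subseteq S_\omega$, and the images $\varphi(a)$ of the letters. As a first step we compute, by a saturation (breadth-first closure) from the letter images, the subsemigroup $T_+=\varphi(\Sigma^+)\subseteq S_+$ of elements that are actually reached; write $T_+^1=T_+\cup\{1\}$, with a formal unit standing for $\varphi(\varepsilon)$. This step costs $O(|S_+|^2)$ multiplications. Every quantification over words in the characterisation then becomes a quantification over elements of $T_+$ or $T_+^1$. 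Membership of an ultimately periodic word depends only on the images of its pieces: $uvwx^\omega\in L$ iff $\varphi(u)\varphi(v)\varphi(w)\cdot\varphi(x)^\omega\in F$, and similarly for the other words.

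The second step restates each condition algebraically. For condition (1), we range over all $s\in T_+^1$ and $p,q,r\in T_+$. If $spq\cdot r^\omega\in F$, we require either $s\cdot p^\omega\in F$ or $sq\cdot r^\omega\in F$. For condition (2), we range over $s,z\in T_+^1$ and $p,q\in T_+$. If $s\cdot(pqz)^\omega\in F$, we require either $sp\cdot q^\omega\in F$ or $s\cdot(pz)^\omega\in F$. For totally ordered residuals, note that $R_L(u)$ depends only on $\varphi(u)$. Since $\omega$-regular languages are determined by their ultimately periodic words, we have $R_L(u)\subseteq R_L(u')$ iff for all $a\in T_+^1$ and $b\in T_+$, $\varphi(u)a\cdot b^\omega\in F$ implies $\varphi(u')a\cdot b^\omega\in F$. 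Total ordering therefore reduces to checking this inclusion for every pair $s,s'\in T_+^1$. Each of these checks is a finite enumeration over tuples of at most four elements of $S_+$, with constant-time table lookups. The total cost is polynomial, roughly $O(|S_+|^4)$ plus the cost of computing products. Correctness follows from \Cref{cor:omega-pos}: because the algebraic conditions are exactly the images of the word conditions, each holds iff the corresponding word-level condition holds.

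The main obstacle is making the translation exact. Three points need care. First, the empty word: $u$ and $y$ may be empty, but $v$, $w$, $x$ may not. This is why we adjoin the formal unit, and we must make sure it is never substituted where a nonempty word is required. Second, only the reachable elements may be quantified over, since unreachable elements of $S_+$ could create spurious violations; hence the restriction to $T_+$. Third, the residual comparison must be sound. This needs the fact that the mixed product and $\omega$-power in a Wilke algebra compute $\varphi(uv^\omega)$ correctly, and that two $\omega$-regular languages coincide iff they agree on ultimately periodic words. The latter is quoted in the preliminaries, and we apply it to the residual languages, which are again $\omega$-regular.
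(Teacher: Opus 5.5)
Your proposal is correct and takes the same route as the paper: you translate the three conditions of \Cref{cor:omega-pos} into statements about the multiplication, mixed-product and $\omega$-power tables of $S$, and check them by brute-force enumeration of $O(|S_+|^4)$ tuples. Your extra care makes precise points that the paper's short proof leaves implicit, since it tacitly assumes $\varphi$ is surjective and compares the sets $s^{-1}F\subseteq S_\omega$ only for $s\in S_+$: you quantify only over the reachable part $\varphi(\Sigma^+)$, adjoin a unit so that empty words and the residual $R_L(\varepsilon)=L$ are covered, and compare residuals through images of ultimately periodic words.
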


Unfortunately, unlike in the non-uniform 1P case, languages positional over 2P arenas are not closed under finite union in general, as the union of two languages with totally ordered residuals may not necessarily have totally ordered residuals. However, we can take a finite union in restricted circumstances:

\begin{corollary}\label{cor:edgeunion}
    Given a finite set $\mathbf{L}$ of positional $\omega$-regular languages, the union $\cup \mathbf{L}$ is positional iff $\cup \mathbf{L}$ has totally ordered set of residuals.
\end{corollary}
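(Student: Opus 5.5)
By \Cref{cor:omega-pos}, it suffices to show that the union $\cup\mathbf{L}$ satisfies conditions (1) and (2) of \Cref{prop:1pnonuniformpos} whenever each member of $\mathbf{L}$ is positional. Once this is established, $\cup\mathbf{L}$ is positional exactly when it also has totally ordered residuals, which is the claimed equivalence.

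For the forward direction no work is needed. If $\cup\mathbf{L}$ is positional, then \Cref{cor:omega-pos} (or directly \cite[Lemma 4.1]{theoretics:14945}) gives that its residuals are totally ordered.

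For the converse, first note that $\cup\mathbf{L}$ is $\omega$-regular, since it is a finite union of $\omega$-regular languages. Each $L\in\mathbf{L}$ is positional, so by \Cref{cor:omega-pos} it satisfies (1) and (2). Both conditions have the form ``if a word is in the language, then one of two derived words is in the language''. Such conditions are preserved under finite unions.

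\begin{itemize}
\item For (1): suppose $uvwx^\omega\in\cup\mathbf{L}$. Pick $L\in\mathbf{L}$ containing this word. Condition (1) for $L$ puts either $uv^\omega$ or $uwx^\omega$ in $L$, and hence in $\cup\mathbf{L}$.
\item For (2): the argument is identical. From $u(vwy)^\omega\in L$ we get $uvw^\omega\in L$ or $u(vy)^\omega\in L$.
\end{itemize}

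This is the same observation that underlies the closure of non-uniform 1P positional languages under finite union. Consequently, if $\cup\mathbf{L}$ also has totally ordered residuals, then it meets all the hypotheses of \Cref{cor:omega-pos} and is positional.

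No step here is a genuine obstacle. The only points to check are that the quantification over $u,v,w,x,y$ in (1) and (2) is the same for every language, so that a single witness $L$ can be chosen per word, and that $\omega$-regularity is preserved under finite union so that \Cref{cor:omega-pos} applies.
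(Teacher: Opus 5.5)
Your proof is correct and follows the paper's own (implicit) argument. The paper also notes that conditions (1) and (2) are preserved under finite unions, as in its proof that non-uniform 1P positional languages are closed under finite union, and then applies \Cref{cor:omega-pos} to the $\omega$-regular union, with totally ordered residuals supplying the forward direction.
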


Prefix-independent $\omega$-regular languages are such that $R(L) = \{L\}$ \cite{ANGLUIN2021104598}. Therefore, if we have an arbitrary $\omega-$regular language $L_1$ and a prefix-independent $\omega$-regular language $L_2$, then $R(L_1\cup L_2) = \{X \cup L_2 | X\in R(L_1)\}$. Therefore, if $R(L_1)$ is totally ordered then so is $R(L_1\cup L_2)$, so a special case of the above corollary is the (known \cite{theoretics:14945}) fact that the union of a positional $\omega$-regular language and prefix-independent $\omega$-regular language will be positional.

\subsection{Prefix-Independent Languages}
\label{subsec:edge_pre_ind}

The class of prefix-independent languages, where only the infinite behaviour of a word affects membership, is important in verification. We provide a condition characterising the positional $\omega$-regular prefix-independent languages, and follow with a concrete description of a class of such languages. Applying our characterisation in \Cref{cor:omega-pos}, we find that prefix-independent languages trivially meet many of the conditions:


\begin{restatable}{proposition}{preindconds}
    \label{prop:pre_ind_conditions}
    A prefix-independent $\omega$-regular language $L$ is positional iff $(uv)^\omega\in L$ implies either $u^\omega\in L$ or $v^\omega\in L$.
\end{restatable}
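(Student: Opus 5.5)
We reduce to \Cref{cor:omega-pos}. A prefix-independent language has $R(L)=\{L\}$, so its residuals are trivially totally ordered, and it remains to show that, for prefix-independent $\omega$-regular $L$, conditions (1) and (2) of \Cref{prop:1pnonuniformpos} together are equivalent to the stated condition
\[
(\ast)\qquad (uv)^\omega\in L \implies u^\omega\in L \text{ or } v^\omega\in L .
\]
Throughout we use that membership of $ux^\omega$ in $L$ depends only on $x^\omega$, and that $x^\omega=(x^k)^\omega$ for every $k\geq 1$.

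\emph{Necessity.} Assume (1) and (2) hold, and let $(uv)^\omega\in L$ with $u,v\in\Sigma^+$. If $u$ or $v$ is empty the claim is trivial. Apply (2) with empty prefix, $v:=u$, $w:=v$ and $y$ empty. The hypothesis is $(uv)^\omega\in L$, and the conclusion is $uv^\omega\in L$ or $u^\omega\in L$. By prefix independence, $uv^\omega\in L$ gives $v^\omega\in L$. So $(\ast)$ holds. Condition (2) requires $y\in\Sigma^*$, so an empty $y$ is allowed.

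\emph{Sufficiency of (1).} Suppose $uvwx^\omega\in L$. By prefix independence $x^\omega\in L$, hence $uwx^\omega\in L$. So (1) holds automatically for prefix-independent languages, and $(\ast)$ is not needed here.

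\emph{Sufficiency of (2).} Suppose $u(vwy)^\omega\in L$, so $(vwy)^\omega\in L$. We must show $w^\omega\in L$ or $(vy)^\omega\in L$; prefix independence then gives $uvw^\omega\in L$ or $u(vy)^\omega\in L$.

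Rotate the word: $(vwy)^\omega = v\,(wyv)^\omega$, so $(wyv)^\omega\in L$. Write $wyv=w\cdot(yv)$. If $yv$ is nonempty, $(\ast)$ gives $w^\omega\in L$ or $(yv)^\omega\in L$. In the second case, $(vy)^\omega = v(yv)^\omega\in L$. If $yv$ is empty, then $(wyv)^\omega=w^\omega\in L$ directly. Either way (2) holds.

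\emph{Main obstacle.} The only delicate points are the edge cases with empty words, and checking that \Cref{prop:1pnonuniformpos} allows $y$ to be empty. Otherwise the argument is a direct application of \Cref{cor:omega-pos} together with the rotation identity $(vwy)^\omega=v(wyv)^\omega$.
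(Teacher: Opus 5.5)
Your proof is correct and takes the same route as the paper: reduce to \Cref{cor:omega-pos}, note that $R(L)=\{L\}$ makes the residual condition trivial and prefix independence makes condition (1) automatic, and match condition (2) against $(\ast)$. Your rotation $(vwy)^\omega=v(wyv)^\omega$ is a useful addition, since the paper's terse argument only checks the shape $u(vw)^\omega$. It therefore leaves implicit both the general-$y$ case of (2) and the derivation of $(\ast)$ from (2) with $y=\varepsilon$. Your case $yv=\varepsilon$ never arises, because $v\in\Sigma^+$.
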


Note that this provides an alternate justification of the known result \cite[Theorem 3.4]{theoretics:14945}
that the class of prefix-independent $\omega$-regular positional languages is closed under finite union.
The condition in \Cref{prop:pre_ind_conditions} is similar in form to \textit{concavity}, a known sufficient condition for positionality of prefix-independent languages \cite{kopthesis}:

\begin{definition}
    [Concavity] For all $x,y\in \Sigma^\omega$ such that $x=x_1x_2\ldots$, $y = y_1y_2\ldots$ and for all $i$, $x_i,y_i\in \Sigma^+$, if $x_1y_1x_2y_2\ldots \in L$ then either $x\in L$ or $y\in L$.
\end{definition}

However, concavity is not necessary for positionality, even for $\omega$-regular prefix-independent languages, as can be seen by the following example:

\begin{example}\label{eg:non_concave}
     $L = \{w\in \Sigma^\omega| \text{  substring $bb$ occurs finitely often in $w$}\}$ is prefix-independent, positional, but not concave, as $(ab)^\omega\in L$ but neither $(aabb)^\omega$ nor $(bbaa)^\omega$ are.
\end{example}

So, we can see the condition in \Cref{prop:pre_ind_conditions} as a weakening of concavity to be both necessary and sufficient for positionality. We will use this characterisation to capture a large class of positional prefix-independent $\omega$-regular languages. An $\omega$-regular language is prefix-independent iff it can be expressed in the form $\Sigma^*(R_1)^\omega + \ldots + \Sigma^*(R_n)^\omega$ \cite{ANGLUIN2021104598}. Since prefix-independent positional languages are closed under finite union \cite{theoretics:14945}, we will focus on languages of the form $\Sigma^*(R)^\omega$. The following form can express many interesting properties:

\begin{restatable}{proposition}{specialform}
    \label{prop:pref_ind_pos_concrete}
    Suppose a language $L$ can be expressed in the form $\Sigma^*(SR)^\omega$, where $S\subseteq \Sigma$, and $R\subseteq (\Sigma\setminus S)^*$ is a subword-closed regular language with totally-ordered residuals. Then, $L$ is prefix-independent and positional.
\end{restatable}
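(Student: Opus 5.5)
The plan is to treat prefix-independence directly and to get positionality from \Cref{prop:pre_ind_conditions}. So it suffices to show that $L$ is prefix-independent and that $(uv)^\omega\in L$ implies $u^\omega\in L$ or $v^\omega\in L$. Since $R$ is regular, $L$ is $\omega$-regular.

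\emph{Unique factorisation.} The key structural fact is that $R\subseteq(\Sigma\setminus S)^*$ and $S\subseteq\Sigma$. Hence a word in $(SR)^\omega$ factorises uniquely as $s_1r_1s_2r_2\cdots$: the letters from $S$ are exactly the block starts, and the $r_i$ are the maximal $S$-free factors between consecutive $S$-letters. This gives a criterion: $x\in L$ iff $x$ contains infinitely many letters of $S$ and all but finitely many maximal $S$-free factors between consecutive $S$-occurrences lie in $R$.

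\emph{Prefix-independence.} The criterion is visibly invariant under adding or removing a finite prefix, since that changes only finitely many of these factors (at most the first one is truncated). Directly: if $x=pw$ with $w\in(SR)^\omega$, then removing a prefix longer than $p$ cuts into some block. The remainder of that block is absorbed into $\Sigma^*$, and the remaining blocks still lie in $(SR)^\omega$.

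\emph{Positionality: reduction to factors.} Suppose $(uv)^\omega\in L$ with $u,v$ nonempty; the empty cases are trivial. Because the word is periodic, every maximal $S$-free factor between consecutive $S$-letters occurring in one period recurs infinitely often. By the criterion, each such factor must therefore lie in $R$. We split into cases on which of $u,v$ contain a letter of $S$. At least one does, since $S$-letters occur infinitely often.

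\emph{Case: only $u$ contains an $S$-letter.} Write $u=u_{pre}\,s\cdots s'\,u_{suf}$, where $u_{pre}$ precedes the first $S$-letter of $u$ and $u_{suf}$ follows the last. The factors of $u^\omega$ are the internal factors of $u$ together with the wrap-around factor $u_{suf}u_{pre}$. In $(uv)^\omega$ the internal factors of $u$ appear and lie in $R$, and the wrap-around factor there is $u_{suf}\,v\,u_{pre}\in R$. Since $R$ is subword-closed, $u_{suf}u_{pre}\in R$. Here I read ``subword'' as scattered subword; if only factor-closure is intended, this step needs more care. Hence $u^\omega\in L$. The case where only $v$ contains an $S$-letter is symmetric.

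\emph{Case: both contain an $S$-letter.} Decompose $v=v_{pre}\cdots v_{suf}$ in the same way. The internal factors of $u$ and of $v$ all lie in $R$, and so do the crossing factors $u_{suf}v_{pre}$ and $v_{suf}u_{pre}$. The factors we need are the wrap-around factors $u_{suf}u_{pre}$ (for $u^\omega$) or $v_{suf}v_{pre}$ (for $v^\omega$). Here we use that the residuals of $R$ are totally ordered, so $R_R(u_{suf})$ and $R_R(v_{suf})$ are comparable. If $R_R(v_{suf})\subseteq R_R(u_{suf})$, then $u_{pre}\in R_R(v_{suf})$ gives $u_{suf}u_{pre}\in R$, hence $u^\omega\in L$. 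Otherwise $v_{pre}\in R_R(u_{suf})\subseteq R_R(v_{suf})$ gives $v_{suf}v_{pre}\in R$, hence $v^\omega\in L$.

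The main obstacle is setting up the unique-factorisation criterion cleanly, so that membership of periodic words reduces exactly to membership of finitely many finite factors in $R$. Once that is in place, the two hypotheses on $R$ are used exactly once each: subword-closure for the one-sided case and totally ordered residuals for the two-sided case.
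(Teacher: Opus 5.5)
Your proof is correct and follows essentially the same route as the paper. Both reduce positionality to the criterion of \Cref{prop:pre_ind_conditions} and read off the $S$-separated blocks of the periodic word. Both then apply total ordering of the residuals of $R$ to the two crossing factors when $u$ and $v$ each contain an $S$-letter, and use subword-closure to drop the $S$-free side when only one of them does. Your explicit membership criterion, direct comparability argument, and treatment of arbitrary $S$ (rather than a single separator letter $s$) make the argument slightly more careful than the paper's sketch. Your reading of ``subword'' as scattered subword matches the paper's definition of $\leq_{SUB}$, so the one-sided case needs no further care.
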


This encodes a condition with both `liveness' and `safety', in the sense that $S$ gives a set of labels which are seen infinitely often and $R$ restricts the behaviour between these labels.  We will provide some examples of languages in this class. In fact, Example \ref{eg:non_concave} above can be expressed in this form, where $S=\Sigma \setminus \{b\}$ and $R = \{b, \varepsilon\}$. Let us explain what is meant by $R$ being \textit{subword-closed} \cite{ATMINAS2024114595}. The subword ordering on a language $L$ is defined such that $u\leq_{SUB}v$ iff there is an increasing sequence $k_1,k_2, \ldots k_{|u|}$ such that $v[k_1]v[k_2]\ldots v[k_{|u|}] = u$ - i.e. all the characters of $u$ can be found within $v$ in the correct order. We can specify a subword-closed language by giving the set of minimal disallowed subwords - a word is in the language iff it does not contain any of these as a subword. The set of disallowed subwords is called the \textit{anti-dictionary}. The requirement for positionality places some restrictions on the entries in the anti-dictionary:

\begin{restatable}{proposition}{ordantidict}
    \label{prop:totally_ordered_anti_dict}
    Suppose $L\subseteq\Sigma^*$ is a subword-closed language with anti-dictionary $D$. $L$ has totally-ordered residuals iff for every $xy,uv\in D$, where $x,y,u,v\in \Sigma^*$, there is some $d\in D$ such that $d\leq_{SUB} xv$ or $d \leq_{SUB} uy$.
\end{restatable}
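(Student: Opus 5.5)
The plan is to prove both directions straight from the definition of residuals for finite-word languages, $R_L(u)=\{w\in\Sigma^*\mid uw\in L\}$, together with two facts about subword-closed languages. First, each $d\in D$ is a disallowed word, so $d\notin L$. Second, a word $z$ lies outside $L$ iff $d\leq_{SUB} z$ for some $d\in D$. I will also use the elementary splitting fact: if $d\leq_{SUB} st$, then $d=d_1d_2$ with $d_1\leq_{SUB} s$ and $d_2\leq_{SUB} t$. The reason is that the increasing sequence of positions witnessing $d\leq_{SUB} st$ divides into those falling in $s$ and those falling in $t$. Finally, $\leq_{SUB}$ is compatible with concatenation: if $s'\leq_{SUB} s$ and $t'\leq_{SUB} t$, then $s't'\leq_{SUB} st$.

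($\Rightarrow$) Assume $R(L)$ is totally ordered, and take $xy,uv\in D$. It suffices to show that $xv\notin L$ or $uy\notin L$, since then some $d\in D$ is a subword of that word. Compare $R_L(x)$ and $R_L(u)$.
\begin{itemize}
\item If $R_L(x)\subseteq R_L(u)$ and $xv\in L$, then $v\in R_L(x)\subseteq R_L(u)$, so $uv\in L$. This contradicts $uv\in D$, hence $xv\notin L$.
\item If $R_L(u)\subseteq R_L(x)$ and $uy\in L$, then $y\in R_L(x)$, so $xy\in L$. This contradicts $xy\in D$, hence $uy\notin L$.
\end{itemize}
This direction does not need minimality of the anti-dictionary.

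($\Leftarrow$) Assume the anti-dictionary condition and suppose, for contradiction, that $R_L(s)$ and $R_L(p)$ are incomparable. Then there are words $t,q$ with $t\in R_L(s)\setminus R_L(p)$ and $q\in R_L(p)\setminus R_L(s)$. That is, $st\in L$ and $pq\in L$, while $pt\notin L$ and $sq\notin L$.
\begin{itemize}
\item Since $sq\notin L$, some $d\in D$ satisfies $d\leq_{SUB} sq$. Splitting gives $d=ab$ with $a\leq_{SUB} s$ and $b\leq_{SUB} q$.
\item Since $pt\notin L$, some $d'\in D$ satisfies $d'\leq_{SUB} pt$. Splitting gives $d'=ce$ with $c\leq_{SUB} p$ and $e\leq_{SUB} t$.
\item Apply the hypothesis to $ab,ce\in D$, with $x:=a$, $y:=b$, $u:=c$, $v:=e$. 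This yields $d''\in D$ with $d''\leq_{SUB} ae$ or $d''\leq_{SUB} cb$.
\item By compatibility with concatenation, $ae\leq_{SUB} st$ and $cb\leq_{SUB} pq$. Transitivity then makes $d''$ a subword of $st$ or of $pq$, so one of these is not in $L$, contradicting $st,pq\in L$.
\end{itemize}
Hence $R(L)$ is totally ordered.

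The only delicate point is the $\Leftarrow$ direction: choosing the decomposition of the forbidden subwords so that the $x,y,u,v$ of the hypothesis match the right factors, and checking that the splitting lemma still holds when a factor is empty. Empty $a$, $b$, $c$ or $e$ cause no trouble, since the statement allows $x,y,u,v\in\Sigma^*$.
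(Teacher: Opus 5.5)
Your proof is correct and is essentially the paper's argument. The forward direction compares $R_L(x)$ and $R_L(u)$ and uses $xy,uv\notin L$. The converse splits the forbidden subwords of $sq$ and $pt$ across the concatenation and recombines the pieces to force a forbidden subword into $st$ or $pq$. You also make explicit the splitting and concatenation-compatibility properties of $\leq_{SUB}$, which the paper uses without comment.
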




With this we can define languages where we disallow a single specific subword:

\begin{restatable}{proposition}{subwordclosed}
    Every subword-closed language with a single element anti-dictionary has totally-ordered residuals. 
\end{restatable}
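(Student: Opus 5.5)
Let $L$ be the subword-closed language whose anti-dictionary is the single word $d=d_1\cdots d_k$, so that $w\in L$ iff $d\not\leq_{SUB} w$. The plan is to apply \Cref{prop:totally_ordered_anti_dict} with $D=\{d\}$. It then suffices to show the following: for every two factorisations $d=xy=uv$, we have $d\leq_{SUB} xv$ or $d\leq_{SUB} uy$.

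Both $x$ and $u$ are prefixes of $d$, so they are comparable by length. By symmetry, assume $|x|\le |u|$. Then $u=xz$ and $y=zv$ for some $z$. With this choice, $uy = xz\,zv$.

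The key observation is that $xz\,zv$ contains $d=xzv$ as a subword. We obtain it by taking $x$, then the first copy of $z$, then $v$, and skipping the second copy of $z$. Hence $d\leq_{SUB} uy$. In the other case, $|x|>|u|$, the symmetric argument gives $d\leq_{SUB} xv$.

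So the condition of \Cref{prop:totally_ordered_anti_dict} holds, and $L$ has totally ordered residuals.

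The argument is short, so the only point needing care is the edge cases. When one of $x,y,u,v$ is empty, the decomposition still goes through with $z$ possibly empty. Similarly, the case $xy=uv$ with $x=u$ gives $uy=d$ directly. Since \Cref{prop:totally_ordered_anti_dict} ranges over all $x,y,u,v\in\Sigma^*$, these degenerate splits are all covered by the same reasoning.

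If one preferred to avoid \Cref{prop:totally_ordered_anti_dict}, there is a direct route. The residual of a finite word $w$ is determined by the length of the longest prefix of $d$ that embeds as a subword of $w$, computed greedily. Residuals therefore form a chain indexed by this length, with a longer matched prefix giving a smaller residual. This gives total order immediately.
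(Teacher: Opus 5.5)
Your proof is correct and is essentially the paper's. Both apply \Cref{prop:totally_ordered_anti_dict} with $D=\{d\}$, and, assuming $|x|\le|u|$, both observe that $uy$ contains $d$ as a subword (your $xzzv$ is the paper's $w[1,j]\cdot w[i+1,l]$), with the other case symmetric; your closing alternative, which indexes residuals by the length of the longest prefix of $d$ embedded in the word, is also sound and gives a self-contained proof that does not rely on that proposition.
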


And we can define languages where we disallow a set of labels:

\begin{restatable}{proposition}{antidict}
    If $D$ is the anti-dictionary of a subword-closed language with totally-ordered residuals, then $D\cup\{a\}$ for any $a\in \Sigma$ is the anti-dictionary of a subword-closed language with totally-ordered residuals.
\end{restatable}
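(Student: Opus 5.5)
The plan is to argue directly on residuals, using the earlier anti-dictionary criterion only as a cross-check. Let $L$ be the subword-closed language with anti-dictionary $D$, and let $L'$ be the set of words containing no element of $D\cup\{a\}$ as a subword. Then
\[
L' = L \cap (\Sigma\setminus\{a\})^*.
\]
First I check that $L'$ is subword-closed: it is an intersection of two subword-closed languages. Its anti-dictionary is the set of minimal elements of $D\cup\{a\}$. Any $d\in D$ containing the letter $a$ has $a$ as a subword, so it is no longer minimal. The minimal anti-dictionary is therefore $\{a\}\cup\{d\in D\mid a \text{ does not occur in } d\}$. I will read the statement as allowing this harmless pruning. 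The set of forbidden words, and hence the language, is unchanged by it.

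Next I compute the residuals of $L'$ for a finite word $u$. If $u$ contains $a$, or $u\notin L$, then no extension of $u$ lies in $L'$ (using subword-closure of $L$), so $R_{L'}(u)=\emptyset$. Otherwise, $uv\in L'$ iff $uv\in L$ and $v$ avoids $a$. Hence
\[
R_{L'}(u)=R_L(u)\cap(\Sigma\setminus\{a\})^*.
\]
Every residual of $L'$ is thus either $\emptyset$ or the intersection of a residual of $L$ with the fixed set $(\Sigma\setminus\{a\})^*$. Intersecting with a fixed set is monotone for inclusion, and $\emptyset$ is below everything. Since the residuals of $L$ are totally ordered, so are those of $L'$.

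As a cross-check, I would also verify the condition of \Cref{prop:totally_ordered_anti_dict} for $D\cup\{a\}$. Take $xy,uv\in D\cup\{a\}$.
\begin{itemize}
\item If both lie in $D$, the hypothesis on $D$ gives a suitable $d\in D$.
\item If $xy=a$ with $x=a$ and $y=\varepsilon$, then $a\leq_{SUB}xv$.
\item If $xy=a$ with $x=\varepsilon$ and $y=a$, then $a\leq_{SUB}uy$.
\item The case $uv=a$ is symmetric.
\end{itemize}
If the criterion is stated for the minimal anti-dictionary, the pruned $d$ were exactly those containing $a$, and any witness $d$ containing $a$ may be replaced by $a$ itself. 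The only real subtlety is this minimality bookkeeping; the order argument itself is routine.
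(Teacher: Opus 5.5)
Your proof is correct, and your main argument takes a different route from the paper's. The paper's entire proof is your cross-check. It verifies the criterion of \Cref{prop:totally_ordered_anti_dict} for $D\cup\{a\}$ by noting that for $d=d_1d_2\in D$ and the splittings $a=\varepsilon a=a\varepsilon$, one of $d_1a$, $ad_2$ contains $a$; pairs inside $D$ are covered by hypothesis. Your primary argument instead computes the residuals of $L'=L\cap(\Sigma\setminus\{a\})^*$ directly. Each is $\emptyset$ or $R_L(u)\cap(\Sigma\setminus\{a\})^*$, so the chain is inherited because intersection with a fixed set is monotone. This avoids \Cref{prop:totally_ordered_anti_dict} altogether and actually shows more: for any language $L$ with totally ordered residuals and any $B\subseteq\Sigma$, $L\cap B^*$ has totally ordered residuals. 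Subword-closure of $L$ is only needed to know that $L'$ is again subword-closed. Your separate case $u\notin L$ is already covered by the formula, since then $R_L(u)=\emptyset$. The paper's route is a two-line check once the anti-dictionary criterion is available. However, it silently treats $D\cup\{a\}$ as an anti-dictionary even when some $d\in D$ contains $a$, so the set is not minimal. This is harmless, because the proof of \Cref{prop:totally_ordered_anti_dict} never uses minimality. Your explicit pruning to $\{a\}\cup\{d\in D\mid a \text{ does not occur in } d\}$, with any witness containing $a$ replaced by $a$ itself, makes that bookkeeping precise.
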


A Rabin condition is a finite collection of pairs $(U_i,V_i)$, where $U_i$ must be visited infinitely often, and $V_i$ must be visited finitely often. Since we can encode each $V_i$ as single-character entries in an anti-dictionary, we can represent these conditions as finite unions of languages of the form \Cref{prop:pref_ind_pos_concrete} - this also subsumes parity conditions. Some further examples of interest of subword-closed languages with totally ordered residuals are found in \cite{refId0}.

\section{State-Labelled Positional Omega-Regular Languages}\label{sec:statepos}


\subsection{Language-Theoretic Characterisation}
\label{subsec:state_syntactic}

We can characterise state-labelled $\omega$-regular positional properties with similar conditions to those that characterise edge-labelled positional properties. Recall that a strategy is not positional whenever it visits the same state twice and prescribes different moves each time. In state-labelled arenas, since each state has a single label, we know that if we visit the same state twice the label seen will be the same each time. So, our closure conditions only need to apply when we see the same label twice, as otherwise we have not seen the same state twice.

\begin{proposition}\label{prop:statepos}
    An $\omega$-regular language $L$ is non-uniform 1P state-labelled positional iff the following conditions hold for all $u,y\in \Sigma^*$, $v,w,x\in \Sigma^+$ where $v[0] = w[0]$:
    \begin{enumerate}
        \item[(1)]  if $uvwx^\omega\in L$ then either $uv^\omega \in L$ or $uwx^\omega\in L$
        \item[(2)]  if $u(vwy)^\omega\in L$ then either $uvw^\omega \in L$ or $u(wy)^\omega\in L$
    \end{enumerate}
\end{proposition}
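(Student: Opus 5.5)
The plan is to adapt the proof of \Cref{prop:1pnonuniformpos} almost verbatim. The single change is that in a state-labelled arena a repeated state must carry a repeated label. So every cycle in a play begins and ends on letters that agree, which is exactly why the hypotheses only range over $v,w$ with $v[0]=w[0]$. Both directions work with ultimately periodic plays in finite 1P arenas. For $\omega$-regular objectives a winning strategy may be taken finite-memory \cite{Bchi1969}, and in a 1P arena such a strategy produces a single play of the form $pl^\omega$.

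\emph{Necessity.} For each condition I would build a small 1P state-labelled arena that realises the left-hand word with memory, and in which the only positional plays from the initial state are the ones named on the right-hand side.
\begin{itemize}
\item For (1): a chain of fresh states spelling $u$ leads to a state $q$ with $\pi(q)=v[0]=w[0]$. Two paths leave $q$. One spells $v$ and returns to $q$. The other spells $w$ (so its first letter is read at $q$) and then enters a fresh cycle spelling $x$.
\item The word $uvwx^\omega$ is then a winning play, obtained by going round the $v$-loop once and then taking the $w$-branch. Every positional strategy either always takes the $v$-loop, producing $uv^\omega$, or takes the $w$-branch, producing $uwx^\omega$. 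So non-uniform positionality gives (1).
\item For (2): the same state $q$ carries the $v$-loop and the $w$-path. The $w$-path leads to a state from which there is a choice between the $y$-path back to $q$ and re-entering the $w$-path. The arrangement must be such that the two positional choices produce exactly $uvw^\omega$ and $u(wy)^\omega$.
\end{itemize}
Getting this gadget right, so that fresh states are used wherever identification is not forced and no spurious positional plays appear, is the part I expect to need the most care. I would fix it by checking which states are genuinely forced to coincide by the equal first letters.

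\emph{Sufficiency.} Take a winning ultimately periodic play $pl^\omega$ (with $p,l$ sequences of edges) that minimises the number of \emph{violations of positionality}, meaning states from which two different edges are taken somewhere in $pl$. If there is no violation, the play defines a positional strategy and we are done. Otherwise, pick a state $s$ visited twice in $pl$ with different outgoing edges, and split the trace at these visits.
\begin{itemize}
\item If both visits lie in the prefix, or one lies in the prefix and one in the loop, the trace has the form $uvwx^\omega$. Here $v$ is the cycle from $s$ back to $s$, and $v[0]=w[0]=\pi(s)$. Condition (1) lets us either loop on $v$ forever or cut $v$ out.
\item If both visits lie in the loop $l$ (after rotating it), the trace has the form $u(vwy)^\omega$ with $v[0]=w[0]=\pi(s)$. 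Condition (2) lets us replace it by $uvw^\omega$ or $u(wy)^\omega$.
\end{itemize}
In each case the new word is the trace of an actual play of the arena, because the arena returns to the same state $s$. We must also check that the new play $p'l'^\omega$ has strictly fewer violations than $pl$. This holds because the new play uses a subset of the edges of $pl$ and merges the two choices at $s$ into one. This contradicts minimality, so the minimal play is positional.

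The main obstacles are two. The first is the bookkeeping in the loop case: choosing the decomposition so that the two outcomes of (2) are genuine plays, and so that the violation count drops rather than merely not increasing. The second is matching the necessity gadget for (2) exactly to the disjuncts $uvw^\omega$ and $u(wy)^\omega$.
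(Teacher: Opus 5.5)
Your route (a finite-memory winning play in a 1P arena is a lasso $pl^\omega$, remove violations using (1) and (2), and prove necessity with small 1P gadgets) is the ``same method'' the paper appeals to, and your gadget for (1) is correct. The genuine gap is the necessity of (2). You leave it open, and it cannot be closed as the condition is stated. In your gadget, after $u$ the play is at a state $q$ with $\pi(q)=v[0]=w[0]$. There $v$ is a cycle at $q$, $wy$ is a second cycle at $q$, and $w$ alone returns to $q$ only if $y=\varepsilon$. The two positional choices at $q$ therefore give $uv^\omega$ and $u(wy)^\omega$, so the gadget proves the disjunction with $uv^\omega$, not $uvw^\omega$.

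In fact the stated implication is false. Over $\Sigma=\{a,b\}$ let $L$ be the set of words containing the factor $aa$ infinitely often. Take $u=y=\varepsilon$, $v=a$, $w=ab$, so $v[0]=w[0]$. Then $(aab)^\omega\in L$, while $a(ab)^\omega\notin L$ and $(ab)^\omega\notin L$. Yet $L$ is state-labelled positional. On a state-labelled arena it says that edges joining two $a$-labelled states are used infinitely often. In a finite 1P arena, Player 1 wins by taking a shortest path to a simple cycle through such an edge and staying on it. Concretely, let $q$ be labelled $a$ with a self-loop and an edge to $r$ labelled $b$, and let $r\to q$. The memoryful play $(aab)^\omega$ is matched positionally by $a^\omega=uv^\omega$.

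So what the method establishes, and presumably what is intended, is (2) with $uv^\omega$ in place of $uvw^\omega$. Its necessity gadget is a chain spelling $u$ into $q$, plus two cycles at $q$ through fresh states spelling $v$ and $wy$. This corrected form is also what your sufficiency argument needs. Rotate the loop to start at the first of the two visits to $s$. The trace becomes $u'(vw')^\omega$ with $v$ and $w'$ both cycles at $s$, and both $u'v^\omega$ and $u'w'^\omega$ are genuine plays.

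With the stated disjunct, $uvw^\omega$ is a play only when $y=\varepsilon$. Even then it leaves $s$ once by the first edge of $v$ and forever after by the first edge of $w$. So your claim that the two choices at $s$ are merged fails there, and you would have to follow up with (1) applied to $uv\,w\,w^\omega$.

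Finally, the number of violating states is not a valid potential. If the prefix leaves $s$ by a third edge, $s$ is still violating after the replacement. Instead, minimise the length $|p|+|l|$ of a lasso representation of a winning play. After re-rotating the loop as in the paper's edge-labelled proof, each application of (1) or of the corrected (2) strictly shortens it. Hence a shortest winning lasso has no violation and yields a positional strategy.
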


This can be shown with the same method as Proposition \ref{prop:1pnonuniformpos}. These weaker requirements for positionality allow for more properties to be positional. For example, we can have properties that contain permutations of distinct labels, since violations of positionality can only occur when we see the same label twice.

\begin{example}
    The language $(abc)^\omega$ is 1P state-labelled positional.  
\end{example}

In fact, the language in this example is 1P uniform positional. Despite this, the language does not have a totally ordered set of residuals. This requirement is also weaker for state-positional languages; we only require that for words ending in the same label, the set of residuals is totally ordered.

\begin{restatable}{proposition}{totordresiduals}
    If $L$ is a state-labelled positional $\omega$-regular language, then for each $a\in \Sigma$, the set of residuals for words ending in $a$, $\{R_L(ua)| u\in \Sigma^*\}$, is totally ordered by inclusion.  
\end{restatable}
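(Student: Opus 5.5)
We argue by contraposition, adapting the edge-labelled argument of \cite[Lemma 4.1]{theoretics:14945} to the state-labelled setting. Suppose that for some $a\in\Sigma$ there are words $ua$ and $u'a$ whose residuals $R_L(ua)$ and $R_L(u'a)$ are incomparable. Then there are infinite words $x\in R_L(ua)\setminus R_L(u'a)$ and $y\in R_L(u'a)\setminus R_L(ua)$. Since $L$ is $\omega$-regular, its residuals are $\omega$-regular as well. Two $\omega$-regular languages that differ already differ on an ultimately periodic word, so we may take $x$ and $y$ ultimately periodic. This ensures every arena below is finite.

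We next build a finite 1P state-labelled arena with two starting states. Starting from the first of these, a deterministic chain of states spells out $u$. Starting from the second, a disjoint deterministic chain spells out $u'$. Both chains lead into one shared state $q$ labelled $a$. This sharing is possible precisely because the two prefixes end in the same label $a$, and it is the only place where the hypothesis on the common last letter is used. From $q$ Player 1 has two successor edges. One enters a lasso reading $x$, built as a finite path followed by a cycle because $x$ is ultimately periodic. The other enters a lasso reading $y$. The case of empty $u$ or $u'$ is harmless, since $q$ itself can then serve as the initial state.

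Under the trace convention $\lambda[i]=\pi(src(\varsigma[i]))$, the trace of the play that starts at the first state and follows $x$ is $uax$. Likewise the play from the second state that follows $y$ has trace $u'ay$. The labels must be placed so that the letter read at $q$ is exactly the final $a$ of both $ua$ and $u'a$; this bookkeeping needs a little care. With those choices, a strategy with one bit of memory, namely which chain the play came from, wins from both initial states, because $uax\in L$ and $u'ay\in L$.

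By uniform positionality there would then be a single positional strategy winning from both states. This strategy fixes a single edge out of $q$. Taking the $x$-branch makes the play from the second state produce $u'ax\notin L$. Taking the $y$-branch makes the play from the first state produce $uay\notin L$. Either choice gives a contradiction, so the residuals of words ending in $a$ are totally ordered by inclusion.

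I expect the main obstacle to be the finiteness bookkeeping. The $\omega$-regular characterisation is stated over finite arenas, so $x$ and $y$ must be chosen ultimately periodic. Beyond that the argument is direct.
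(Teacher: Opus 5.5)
Your proof is correct and takes essentially the same route as the paper. Two deterministic chains spelling $u$ and $u'$ merge into one $a$-labelled Player~1 state that branches to $x$ and $y$, so any positional strategy loses from one of the two starting states; your extra step of choosing $x$ and $y$ ultimately periodic (valid because $R_L(ua)\setminus R_L(u'a)$ and $R_L(u'a)\setminus R_L(ua)$ are nonempty $\omega$-regular languages) also makes explicit the finiteness of the arena, which the paper's figure leaves implicit.
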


We can now make the analogous statement of \Cref{prop:1puniformpos}, which can be proven with the same method.

\begin{proposition}
    $L$ is uniform 1P state-labelled positional iff for each $a\in \Sigma$, the set $\{R_L(wa) | w\in \Sigma^* \}$ is totally ordered by inclusion, and conditions (1) and (2) hold from \Cref{prop:statepos}.
\end{proposition}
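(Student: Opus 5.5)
We prove the two directions separately, following the proof of \Cref{prop:1puniformpos} but keeping track of labels as in \Cref{prop:statepos}.

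\textbf{Necessity.} Suppose $L$ is uniform 1P state-labelled positional. Uniform positionality implies non-uniform positionality, so conditions (1) and (2) of \Cref{prop:statepos} hold. For the residual condition we reuse the argument of the preceding proposition, which only used 1P arenas. Suppose $R_L(ua)$ and $R_L(u'a)$ are incomparable. Pick witnesses $x\in R_L(ua)\setminus R_L(u'a)$ and $x'\in R_L(u'a)\setminus R_L(ua)$; by $\omega$-regularity we may take them ultimately periodic. Build a 1P state-labelled arena with two disjoint paths, labelled $u$ and $u'$ and starting at distinct initial states. Both paths lead into a single shared state labelled $a$. From that state Player 1 chooses between two lassos, one tracing $x$ and one tracing $x'$. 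Player 1 wins from both initial states, but a positional strategy fixes one choice at the shared state and so loses from one of them. This contradicts uniform positionality.

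\textbf{Sufficiency.} Take a finite 1P state-labelled arena and a winning region $W$. By finite memory, Player 1 has a winning strategy from each $q\in W$ whose play is ultimately periodic. We apply the reduction from \Cref{prop:statepos} to each such play. Whenever the play revisits a state with a different outgoing choice, both visits carry the same label. Conditions (1) and (2) with $v[0]=w[0]$ then let us either remove the intermediate cycle or repeat it forever, while the trace stays in $L$. Iterating this strictly decreases the number of positionality violations, so each play becomes internally positional.

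It remains to remove conflicts \emph{across} the plays started from different initial states. Suppose two plays, with prefixes $p$ and $p'$, reach a common state $s$ with label $a$. Their traces up to and including $s$ are $ua$ and $u'a$. By the residual hypothesis, we may assume $R_L(u'a)\subseteq R_L(ua)$. Then the play from the first initial state can adopt the continuation of the second play from $s$ onward and still win.

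We process the states in a fixed order and commit to one outgoing edge at each state. This mirrors the edge-labelled proof: pick an ordering of the finitely many plays, and redirect every play through the continuation of the play whose residual at that state is smallest. A continuation chosen this way may pass through states that already have commitments. The termination argument is therefore the main obstacle.

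To handle it, we redirect as follows. At a state $s$ shared by several plays, every play adopts the continuation belonging to the play with the \emph{smallest} residual at $s$; that continuation wins from every other visit because of the inclusion of residuals. We re-normalise after each redirection using (1) and (2). Termination follows because the set of states with a committed move grows monotonically, and each redirection never reintroduces a conflict at an earlier-committed state. Since all conflicts occur at states whose visits share the same final label, the label-indexed total order is exactly what the argument needs. The resulting choices assemble into a positional strategy winning from all of $W$.
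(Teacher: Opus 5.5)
Your proposal is correct and follows the paper's route. Necessity uses uniform $\Rightarrow$ non-uniform positionality together with the arena in which two prefixes merge at a shared $a$-labelled state, as in the proposition on label-indexed residuals. Sufficiency uses the two operations from \Cref{prop:1puniformpos}: internal normalisation by (1) and (2), then redirecting plays onto the continuation with the smallest residual $R_L(ua)$ at a shared state.

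Your choice of the \emph{smallest} residual is the right direction; the written proof of \Cref{prop:1puniformpos} says ``maximal'', which is a slip. To make your termination claim airtight, state explicitly two things: you split each play at its \emph{first} visit to $s$, and neither operation ever introduces an edge not already used by some current play. That is exactly why a committed state stays committed.
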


Additionally, as seen previously in the edge-labelled case, the conditions for 1P uniform positionality also characterise positionality on 2P arenas (often called a 1-to-2 player lift), although demonstrating this is quite involved: our plays now form trees, so we have to show these conditions allow us to replace trees of winning plays with trees of winning plays containing fewer violations of positionality. Interestingly, the requirement for totally ordered residuals does most of the `work' here, allowing us to, given two subtrees of plays rooted at the same state, replace one with the other. The other conditions enforce a similar property to progress consistency in \cite{theoretics:14945}; once we have shown we can recursively nest a subtree within itself an arbitrary number of times and still have a winning play, these allow us to extend this nesting to an infinite depth and still have a winning play.

\begin{restatable}{theorem}{stateposchar}\label{prop:state_pos_characterisation}
    $L$ is positional over finite 2P state-labelled arenas iff for each $a\in \Sigma$, the set $\{R_L(wa) | w\in \Sigma^* \}$ is totally ordered by inclusion, and conditions (1) and (2) hold from \Cref{prop:statepos}. 
\end{restatable}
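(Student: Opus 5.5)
The forward direction is immediate. Positionality over finite 2P state-labelled arenas in particular gives uniform positionality over finite 1P state-labelled arenas. The preceding (uniform 1P) proposition then yields both the totally ordered residuals $\{R_L(wa)\mid w\in\Sigma^*\}$ for each $a$ and conditions (1) and (2) of \Cref{prop:statepos}. The whole difficulty is the converse: a 1-to-2 player lift for state-labelled arenas.

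For the converse, fix a finite state-labelled arena with objective $L$ and winning region $W$ for Player 1. Since $L$ is $\omega$-regular, Player 1 has a finite-memory winning strategy $(\mathcal{M},\sigma)$ from all of $W$. I would take the product arena $G\times\mathcal{M}$ and view the behaviour of $\sigma$ against all opponents as a finite \emph{strategy graph}. This is a finite arena in which each Player 1 vertex $(v,m)$ has a single chosen successor. Every infinite path from an initial vertex yields a trace in $L$, and the label at $(v,m)$ is $\pi(v)$. A \emph{violation} is a pair of vertices $(v,m)\neq(v,m')$ with $v\in V_1$ whose chosen moves differ. The plan is to induct on the number of memory states used at Player 1 vertices, removing one memory state at a time while keeping the strategy winning from all of $W$.

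\textbf{Merging step.} Fix $v\in V_1$ reached with two memory states $m,m'$. Every history reaching $(v,\cdot)$ ends in the label $a=\pi(v)$. Consider the sets of histories $H_m$ and $H_{m'}$ reaching $(v,m)$ and $(v,m')$. Because $\{R_L(wa)\}$ is a finite chain, the set $R_m:=\bigcap_{h\in H_m} R_L(h)$ equals a minimal residual. The outcome set $O_m$ of continuations from $(v,m)$ lies in $R_m$, and similarly for $m'$.

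Totally ordered residuals give, without loss of generality, $R_m\subseteq R_{m'}$. I would then redirect every arrival at $(v,m')$ to $(v,m)$, so that the subtree rooted at $(v,m)$ is grafted in place of the one rooted at $(v,m')$. A play that passes through $v$ finitely often is then fine: after its last redirection, its suffix lies in $O_m\subseteq R_m\subseteq R_{m'}$, and also in every residual of histories ending at $(v,m')$.

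\textbf{Main obstacle.} The hard case is plays that revisit $v$ infinitely often. After grafting, the subtree of $(v,m)$ may re-enter $v$, so the new strategy nests the subtree inside itself. Totally ordered residuals handle any finite nesting depth, by induction on the number of redirections. Infinite nesting needs conditions (1) and (2). An infinite play through the modified graph decomposes as $u\,c_1c_2c_3\cdots$, where each $c_i$ is a segment from $v$ back to $v$, and every segment starts with $a$, since $v[0]=w[0]=a$ is exactly the hypothesis of \Cref{prop:statepos}.

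Because the graph is finite, I would use Ramsey/pumping arguments (or the finite $\omega$-semigroup recognising $L$) to reduce to ultimately periodic plays. Concretely, I would show that if some losing ultimately periodic play $u(c)^\omega$ existed, one could unfold the original winning strategy along it. Applying condition (2), in the form $u(vwy)^\omega\in L\Rightarrow uvw^\omega\in L$ or $u(wy)^\omega\in L$, together with condition (1), would then reconstruct a witness in $L$ that contradicts the loss. This is the analogue of progress consistency from \cite{theoretics:14945}.

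I expect this step, which turns ``finite nesting wins'' into ``infinite nesting wins'' against arbitrary Player 2 branching, to be the technical core. I would first try to carry it out by working in the syntactic Wilke algebra: replace segments by their images and appeal to idempotent powers, so that only finitely many cases arise. Iterating the merge strictly decreases the number of (vertex, memory) pairs at Player 1 vertices, so the procedure ends with a positional strategy winning from all of $W$.
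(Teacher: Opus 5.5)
\paragraph{Gap: the direction of the merge.}
Your forward direction is fine. The scaffolding of the converse is also sound: you work in the finite strategy graph on $G\times\mathcal{M}$, redirect all arrivals at $(v,m')$, and handle finitely many redirections by monotonicity of residuals ($R_L(g)\subseteq R_L(h)$ implies $R_L(gs)\subseteq R_L(hs)$).

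The genuine gap is the direction of the merge. ``Without loss of generality $R_m\subseteq R_{m'}$'' is not a WLOG. When $R_m=R_{m'}$, the residuals say nothing about which redirection survives infinite nesting, and the wrong choice loses. So the step you announce, that a losing play after redirection would contradict (1)/(2), is false as stated.

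For instance, take $L=\Sigma^*c\,\Sigma^\omega$ over $\{a,b,c,d\}$, a reachability objective that satisfies all the hypotheses. Let $v$ be a Player 1 state labelled $a$, with one successor labelled $b$ that leads back to $v$ and one labelled $c$ that leads to a $d$-self-loop. Let $\sigma$ move to the $b$-successor on the first visit (memory $m$) and to the $c$-successor on the second (memory $m'$); its trace from $v$ is $abacd^\omega\in L$. Here $R_m=R_L(a)=L=R_L(aba)=R_{m'}$. Yet redirecting $(v,m')$ to $(v,m)$ produces $(ab)^\omega\notin L$. Only the opposite redirection, with trace $acd^\omega$, is winning.

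\paragraph{What is missing.}
You need the dichotomy that the paper's proof establishes: at least one of the two redirections is winning, and which one depends on more than the residual order.

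When $R_m\subsetneq R_{m'}$, your direction does work, and only (1) is needed. Take $ua\in H_m$ with $R_L(ua)=R_m$, and let $p$ be any concatenation of old segments from $(v,m)$ to $(v,m')$. Monotonicity gives $R_L(upa)\supsetneq R_L(ua)$. So some ultimately periodic $z\in a\Sigma^\omega$ has $upz\in L$ and $uz\notin L$, and (1) yields $up^\omega\in L$, since $p[0]=z[0]=a$. A Ramseyan factorisation of the block sequence then handles arbitrary infinite nesting.

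When the residuals coincide, the direction must be read off from the nesting structure; this is the case split in the paper's proof. Write $X$ (resp.\ $Y$) for the old segments from $(v,m)$ to $(v,m')$ (resp.\ back).
\begin{itemize}
\item If one of $X,Y$ is empty, so one subtree never re-enters the other, redirect into the non-re-entering one. No nesting then arises at all.
\item If both are nonempty, you need condition (2) together with the residual order to show $uX^\omega\subseteq L$ or $uY^\omega\subseteq L$. The paper applies (2) to words of the form $u\,x_1(x_2y_2)^\omega$ and $u\,y_1(x_2y_2)^\omega$, obtained from Ramsey factorisations $x_1x_2\cdots\in X^\omega$ and $y_1y_2\cdots\in Y^\omega$.
\end{itemize}
This is the technical core of the theorem. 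Your proposal leaves it as an unproved plan, and attaches it to a direction choice that can fail.
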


\subsection{Relation to Edge-Labelled Positionality}
\label{subsec:edge_to_state}

There is a connection between edge-labelled positional and state-labelled positional objectives. It is known that edge-labelled positional languages can be played positionally on state-labelled arenas, and that the converse is not true - there are more state-labelled positional objectives than edge-labelled \cite{kopthesis}. We make precise the converse relation between them:

\begin{restatable}{proposition}{stateedgepos}\label{prop:edgevsstate}
    An $\omega$-regular language $L$ is positional over state-labelled arenas iff it can be played optimally with memory of the previous label in edge-labelled arenas.
\end{restatable}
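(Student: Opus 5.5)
The plan is to prove both directions with two arena transformations between state-labelled arenas and edge-labelled arenas equipped with a one-label memory. Throughout we read the memory in ``played optimally with memory of the previous label'' as the fixed memory structure $\mathcal{M}_{\Sigma}=(\Sigma\cup\{\bot\},\bot,\mu)$ with $\mu(m,c)=c$. The traces in the two settings are offset by one position: a state-labelled trace $\lambda[i]=\pi(src(\varsigma[i]))$ records the label of the state being left. We handle this offset either by inserting a dummy initial label or by noting that the edge-labelled trace of the simulating arena is exactly the state-labelled trace. We choose the constructions so that the latter holds.

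($\Leftarrow$) Assume $L$ is played optimally with memory $\mathcal{M}_\Sigma$ on edge-labelled arenas, and let $G=(V_1,V_2,\delta,\pi)$ be state-labelled. Build the edge-labelled arena $G'$ on the same states, with an edge $(v,\pi(v),v')$ for each $(v,v')\in\delta$. Plays of $G$ and $G'$ correspond bijectively and have identical traces, so winning regions coincide. Take an optimal $\mathcal{M}_\Sigma$-strategy $\sigma(v,m)$ on $G'$. When the play arrives at $v$ after at least one step, the memory holds the label of the previous edge, that is, $\pi$ of the previous state, not $\pi(v)$. So the memory is not redundant a priori. To remove it we apply the same trick one level up: we encode the last label into the state. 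Equivalently, and more simply, we reverse the labelling convention and put $\pi(v')$ on the edge $(v,v')$, adding a dummy first letter $\pi(q)$ via a fresh initial copy of each state. Then the memory at $v$ is always $\pi(v)$ (or $\bot$ at the fresh initial copies), so $\sigma(v,\pi(v))$ is a positional strategy on $G$. We must check that the initial copies do not break uniformity. Each initial copy is visited only once, so its choice can be set to the choice at $v$ provided the strategy at $v$ is winning from $v$ as an initial state. We will argue this using the winning regions of $G'$.

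($\Rightarrow$) Assume $L$ is state-labelled positional, and let $H$ be an edge-labelled arena. Form the product state-labelled arena $H^\times$ whose states are pairs $(e)$ of an edge of $H$ (carrying the label $col(e)$), plus intermediate states $(v,c)$ recording ``at $v$ having last seen $c$''. We must avoid the intermediate states contributing extra letters to the trace. The cleanest version makes the states of $H^\times$ the pairs $(v,c)$, labelled $c$, with moves $(v,c)\to(v',c')$ for each edge $(v,c',v')$ of $H$. Then the state-labelled trace from $(v',c')$ onward is exactly the label sequence of the $H$-play, shifted by one letter with a known prefix. That prefix is handled by residuals, using the initial states $(q,\bot)$ with $\bot$ a dummy letter, or by starting from all $(q,c)$ and working with residual languages. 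Since residuals of state-positional languages are again state-positional (an easy check from the definition), a positional strategy on $H^\times$ is exactly a strategy on $H$ with memory $\mathcal{M}_\Sigma$, and winning regions transfer.

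The main obstacle is bookkeeping the one-letter offset between the two trace conventions together with uniformity. Winning from a set $W$ must be preserved when the initial letter, or the initial memory value $\bot$, differs from what the product construction naturally provides. I expect to handle this with the residual-closure observation and by treating $\bot$ as an extra letter that appears only at position $0$.
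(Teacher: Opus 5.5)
\textbf{($\Leftarrow$) is fine.} This direction is essentially the paper's argument. You label $(v,v')$ by $\pi(v')$, so the remembered label on entering $v$ is always $\pi(v)$, and you supply the first letter from a fresh source. The paper uses a single Player-2 source $*$ with an edge $(*,\pi(w),w)$ for each $w\in W$. Your per-state copies work equally well under the uniform reading, provided each copy has exactly one edge $\hat q\xrightarrow{\pi(q)}q$. Then there is no choice at copies to reconcile, and $\tau(v,\pi(v))$ is immediately winning from every $q\in W$. If your copies instead duplicate $q$'s moves, then swapping $\tau(\hat q,\bot)$ for $\tau(q,\pi(q))$ is unjustified, and ``we will argue this using the winning regions of $G'$'' is not an argument.

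\textbf{The gap is in ($\Rightarrow$).} Your $H^\times$ is the paper's arena, but your treatment of the first letter does not work.

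If you add sources $(q,\bot)$, the objective you need on $H^\times$ is $\bot L$ over $\Sigma\cup\{\bot\}$, so you must know that $\bot L$ is state-positional. Closure under residuals gives the converse implication: if $\bot L$ is positional then so is $\bot^{-1}(\bot L)=L$. It does not give what you need.

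If you keep $L$ itself as the objective, the residual fact plays no role. What is then missing is exactly what ``winning regions transfer'' asserts without proof. You need that the states where the simulation should start lie, all simultaneously, in the $L$-winning region of $H^\times$. You also need to say how the move made with memory $\bot$ is chosen.

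\textbf{The missing observation.} This is how the paper proceeds: there is no offset at all. The state-labelled trace of the $H^\times$-play from $(v',c)$ is literally the edge-labelled trace of the $H$-play from any predecessor $w$ along $(w,c,v')$. So use $L$ itself on the $\Sigma$-labelled arena $H^\times$, with no $\bot$-states, and argue as follows.
\begin{itemize}
\item Fix, for each $(v',c)$, one predecessor $w\in W$ whose first move is $(w,c,v')$: Player 1's chosen move if $w\in V_1$, any move if $w\in V_2$. Copy the given strategy from $w$. This is a single strategy on $H^\times$ that wins from every such $(v',c)$.
\item State-positionality of $L$ then gives one positional $\sigma'$ winning on all of these states.
\item The strategy on $H$ plays, from $w$ with memory $\bot$, an edge $(w,c,v')$ into this region. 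Afterwards, at $v$ with previous label $c$, it plays according to $\sigma'(v,c)$.
\end{itemize}
Alternatively, you can keep your $\bot$-sources and prove directly that $\bot L$ is positional on $H^\times$. This works because those sources have no incoming edges and all their successors lie in the $\Sigma$-part. But that is the same argument, not residual closure.
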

Here, `memory of the previous label' means a strategy $\sigma:V\times \Sigma \rightarrow V$, where when generating a play, the current state $s$ and label of the previous edge $c$ is given to $\sigma$ to obtain the successor state $\sigma(s,c)$.
This can be seen in the extension of the conditions for edge-labelled positionality (\Cref{cor:omega-pos}) to state-labelled positionality (\Cref{prop:state_pos_characterisation}) - since we can `see' the previous label in state-labelled arenas, we only need that these conditions hold at points where the previous labels were equal. \Cref{prop:pre_ind_conditions} can be extended to state-labelled positionality in the same way, giving us that the class of prefix-independent state-labelled positional languages is closed under finite union.

Similarly to the edge-labelled case, state-labelled positional $\omega$-regular languages are expressible in {\sf LTL}. This can be shown through a very similar argument as in Section \ref{subsec:starfree}. Note that the proof of Lemma \ref{cflemma} also holds for languages where just the sets $\{R_L(ua) | u\in \Sigma^*\}$ are totally ordered by inclusion for each $a\in \Sigma$, since $uv^i$ will end in the same character as $uv^j$ for any $i,j >0$. For the analogue to Lemma \ref{necwcf}, it is simple to adapt the given counterexample to a state-labelled game, from which the same argument follows. Therefore:

\begin{theorem}
    \label{theorem:state_games_positional}
    If a language $L\subseteq \Sigma^\omega$ is $\omega$-regular and positional over state-labelled arenas, then it is $\omega$-star-free.
\end{theorem}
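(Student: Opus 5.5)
We follow the route of \Cref{theorem:positional_express_ltl}. By \Cref{starfree} it is enough to show that an $\omega$-regular $L$ that is positional over state-labelled arenas is both counter-free and $\omega$-counter-free. We verify these two conditions separately, using state-labelled analogues of \Cref{cflemma} and \Cref{necwcf}.

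\textbf{Counter-freeness.} By the earlier proposition, for each $a\in\Sigma$ the set $\{R_L(ua)\mid u\in\Sigma^*\}$ is totally ordered by inclusion. It is also finite, since $R(L)$ is finite for $\omega$-regular $L$. Fix $u,w$ and a nonempty $v$, and let $a$ be the last letter of $v$. Then every word $uv^i$ with $i\ge1$ ends in $a$, so the residuals $R_L(uv^i)$ all lie in the totally ordered chain for $a$. We now rerun the argument of \Cref{cflemma} inside this chain. Since $R_L(uv^{i+1})$ is the residual of $R_L(uv^i)$ under $v$, the map $X\mapsto v^{-1}X$ acts on the finite chain. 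Totality forces the sequence $R_L(uv^i)$ to be eventually monotone. Otherwise we get an oscillation between distinct residuals, and iterating the same shift produces incomparable pairs or infinitely many residuals, as in the edge case. An eventually monotone sequence in a finite chain stabilises after at most $|R(L)|$ steps. This yields a uniform $n$ such that $uv^nw\in L$ iff $uv^{n+m}w\in L$. The case $v=\varepsilon$ is trivial.

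\textbf{$\omega$-counter-freeness.} We adapt \Cref{necwcf}. Suppose $L$ is not $\omega$-counter-free. Using the Arnold syntactic congruence as in the edge case, we obtain $u,v,x,y$ and a period $p>1$ such that membership of $u(vx^iy)^\omega$ in $L$ is eventually periodic in $i$ with period $p$ and not eventually constant. Running the edge-labelled counterexample arena with states instead of edges gives a one-player arena in which a winning play is forced to count modulo $p$, and then no positional strategy can win.

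Concretely, the arena has a path labelled $u$, then a gadget for $v$, then an $x$-loop at a state $q$, then a $y$-path returning to the start of $v$. The problem is that in the state-labelled setting, revisiting a state means repeating a label. To keep the only choice point at $q$, we duplicate states so that each letter occurrence of $u,v,x,y$ gets its own state, and only the last state of the $x$-block branches, either back into $x$ or out into $y$. A positional strategy then fixes one choice per state. So it either loops $x$ forever, producing $u v x^\omega$, or it yields $u(vx^{k}y)^\omega$ for a single fixed $k$ realisable in the arena.

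The main obstacle is arranging the arena so that positional strategies realise only "bad" exponents, while some finite-memory strategy realises a winning exponent, and so that the escape $uvx^\omega$ is also losing. In the edge case this is handled by choosing $i$ large and aligned to a losing residue. Here we fix it in the same way. We unroll the $x$-loop into a chain of $N$ copies followed by a single loop of length $p$ back into the chain, with exits to $y$ only at the loop states. Since $N$ is chosen large, only residues mod $p$ matter. Positional strategies then give all exponents in the chosen residue class, which is losing. Alternating the residue, however, needs memory, and the combined pattern is winning because membership depends only on the residue and there is a winning residue class.

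If the whole period pattern is needed, we instead use \Cref{prop:statepos}(2) directly with $v[0]=w[0]$ guaranteed by the choice of blocks. That condition rules out non-constant periodic membership, giving $\omega$-counter-freeness. Combining both parts with \Cref{starfree} completes the proof.
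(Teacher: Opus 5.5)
Your counter-freeness half is correct and follows the paper's route: rerun \Cref{cflemma} inside the chain $\{R_L(za)\mid z\in\Sigma^*\}$. The clean justification for your ``eventually monotone'' step is that $X\mapsto v^{-1}X$ maps this chain into itself and preserves inclusion. Hence $R_L(uv^i)$, $i\ge 1$, is monotone, and so constant from $i=|R(L)|$ on.

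The gap is in the $\omega$-counter-freeness half. The counterexample of \Cref{necwcf}, which the paper transplants to states, is a \emph{two}-player arena, and the opponent is essential. In your one-player gadget (a chain of $N$ copies of $x$, then a cycle of $p$ copies with exits to $y$ at the cycle states), Player~1 owns every exit. So a positional strategy can always exit at the cycle state that realises a winning residue $r$. Every round then follows the same path, and the play is $u(vx^{N+r}y)^\omega\in L$. Your own remark that a positional strategy yields $u(vx^ky)^\omega$ for a single realisable $k$ already shows this. Player~1, not the arena, chooses the residue class. So the claim that positional strategies give only a losing class is false, and no alternation or memory is ever needed. The fallback via \Cref{prop:statepos}(2) is unsupported. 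Applied at the $x$-loop state, it only trades a winning exponent for the escape word $uvx^\omega$ or a smaller exponent, which says nothing when $uvx^\omega\in L$. Deriving $\omega$-counter-freeness from the one-player conditions alone would essentially mean redoing the lift in \Cref{prop:state_pos_characterisation}.

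The repair is to keep Player~2. Let $\varphi$ be a morphism to a finite $\omega$-semigroup recognising $L$, and let $p$ be the period of $\varphi(x)$. Take $M\ge 1$ a multiple of $p$ beyond the index of $\varphi(x)$. Membership of $u(vx^ny)^\omega$ is then $p$-periodic for $n\ge 2M$ and non-constant there. So pick $0\le r,r'<p$ with $u(vx^{2M+r}y)^\omega\in L$ and $u(vx^{2M+r'}y)^\omega\notin L$.

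Build the state-labelled arena as follows. The $u$-path leads into a $v$-path ending at a Player~2 state $c$ labelled $x[0]$. From $c$, disjoint branches spelling $x^{M+a}$ ($0\le a<p$) all enter a single Player~1 state $d$ labelled $x[0]$. From $d$, branches spelling $x^{M+b}$ ($0\le b<p$) lead into a $y$-path returning to the start of the $v$-path. Every branch leaving a state begins with that state's label, so the labelling is consistent.

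A memoryful Player~1 answers $a$ with $b\equiv r-a \pmod p$. Since $\varphi(x^{2M+r})=\varphi(x^{2M+r+p})$, every block has the image of $vx^{2M+r}y$. The play therefore has image $\varphi(u)\varphi(vx^{2M+r}y)^\omega$ and is winning. A positional strategy fixes $b$, and Player~2 always plays the $a$ with $a+b\equiv r' \pmod p$. This produces a play with the image of $u(vx^{2M+r'}y)^\omega\notin L$, contradicting positionality over state-labelled arenas.
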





    

\section{Varieties}
\label{sec:varieties}

In the following section we will use `positional' to refer to properties positional over edge-labelled arenas. {\sf LTL} allows us to compositionally build formulae to specify properties, using temporal operators and Boolean operations. However, these operations do not respect positionality; for example: $Fp \wedge Fq$ is not positional, even though both $Fp$ and $Fq$ are. We might wonder whether it is possible to construct some similar specification language closed under Boolean operations which is restricted to expressing only positional languages, thereby providing a compositional syntax for specifying positional properties. To this end, we will investigate varieties of positional languages - classes of positional languages closed under certain combinations of Boolean operations. It is known that varieties of languages correspond to varieties of $\omega$-semigroups, so the (non)existence of varieties of $\omega$-semigroups that recognise only positional languages will determine the (non)existence of Boolean-closed classes of positional languages. An overview of recognition of languages by $\omega$-semigroups can be found in \cite{200475}, and an overview of the correspondence between language varieties and varieties can be found in \cite{2004265}. For algebraic convenience, we will be deal with sets of mixed finite and infinite words, i.e. subsets of $\Sigma^\infty$. However, our claims will only focus on the infinite part of the languages, so for any definition relating to $\omega$-languages (prefix-independence, positionality, etc.), we take $L\subseteq\Sigma^\infty$ to meet these definitions when $L\cap \Sigma^\omega$ meets these definitions.


Firstly, we find that prefix-independent languages are closed under Boolean operations, taking residuals, and pre-images of free $\omega$-semigroup morphisms, so form an $\infty$-variety. Furthermore, each corresponding $\omega$-semigroup $(S_+,S_\omega)$ has the property that the mixed product mapping $S_+\times S_\omega\rightarrow S_\omega$ is right projection, so $uv \mapsto v$ for all $u\in S_+,v\in S_\omega$. This gives us the following:

\begin{restatable}{proposition}{preindvariety}\label{prop:prefindvariety}
    Prefix-independent languages form an $\infty$-variety corresponding to the identity $uv = v$ for $u\in \Sigma^*,v\in\Sigma^\omega$.
\end{restatable}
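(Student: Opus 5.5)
The proof splits into two parts: first show that prefix-independent languages form an $\infty$-variety, then identify the corresponding variety of $\omega$-semigroups by the identity $uv=v$. Throughout, a language $L\subseteq\Sigma^\infty$ counts as prefix-independent when $L\cap\Sigma^\omega$ is.

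\textbf{Closure properties.} I would check the three closure conditions in turn.

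For Boolean operations, membership of $x$ and of $ux$ are equivalent in each of $L_1,L_2$. Hence they are equivalent in $L_1\cup L_2$ and in the complement.

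For residuals, take a residual $u^{-1}L$ with $u\in\Sigma^*$. Its infinite part satisfies $x\in u^{-1}L$ iff $ux\in L$, iff $x\in L$. So $u^{-1}L\cap\Sigma^\omega=L\cap\Sigma^\omega$, which is prefix-independent. If the definition of $\infty$-variety also requires right residuals by finite or $\omega$-words, these only affect the finite part, or they produce sets of finite words whose $\omega$-part is empty, which is trivially prefix-independent. I would state this explicitly.

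For preimages, let $h:\Delta^\infty\to\Sigma^\infty$ be a morphism of free $\omega$-semigroups, so $h$ preserves infinite products. For $y\in\Delta^\omega$ and $u\in\Delta^*$ we have $h(uy)=h(u)h(y)$. Prefix-independence of $L$ then gives $uy\in h^{-1}(L)$ iff $h(y)\in L$, iff $y\in h^{-1}(L)$.

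The one point needing care is the edge case where $h$ erases letters, so that $h(y)$ might be finite. If $h(y)$ is finite then $h(uy)=h(u)h(y)$ is finite too, and prefix-independence only speaks about infinite words. I would handle this either by restricting to morphisms that map letters into $\Sigma^+$, which is the standard setting for $\omega$-semigroup morphisms, or by noting that the finite part is irrelevant by convention.

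\textbf{The identity.} Next I would show that $L$ is prefix-independent iff its syntactic $\omega$-semigroup (the Arnold congruence) satisfies $st=t$ for all $s\in S_+$ and $t\in S_\omega$.

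For the forward direction, let $t=\varphi(x)$ and $s=\varphi(u)$. I need $ux\sim x$ in the syntactic congruence, meaning every context gives the same membership. The $\omega$-contexts are of the form $p\cdot\_$ with $p\in\Sigma^*$, and membership of $pux$ and $px$ in $L$ both reduce to membership of $x$. So $ux\sim x$.

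For the converse, suppose a recognising $\omega$-semigroup satisfies the identity. Then $\varphi(ux)=\varphi(u)\varphi(x)=\varphi(x)$, so $ux\in L$ iff $x\in L$.

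Finally I would note that the class of $\omega$-semigroups satisfying $st=t$ is closed under subsemigroups, quotients and finite products, because the identity is an equation. By the Eilenberg-type correspondence \cite{2004265}, the languages recognised by this variety are exactly the prefix-independent ones. This matches the closure properties established in the first part.
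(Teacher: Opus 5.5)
Your proposal is correct and follows essentially the same route as the paper: you check closure under Boolean operations, left residuals and preimages directly, then show that the syntactic $\omega$-semigroup of a prefix-independent language satisfies $uv=v$, and conversely that any $\omega$-semigroup satisfying it recognises only prefix-independent languages. Your extra remarks on right residuals, non-erasing morphisms and the identity-defined class being a variety are harmless refinements that the paper leaves implicit.
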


However, if we attempt to find within this a variety consisting of \textit{positional} prefix-independent languages, we can only find trivial examples:

\begin{restatable}{proposition}{preindtrivvariety}\label{prop:prefindimpossible}
    If $V$ is an $\infty$-variety consisting entirely of prefix-independent positional languages, then the only $\omega$-languages it contains are $\emptyset$ and $\Sigma^\omega$.
\end{restatable}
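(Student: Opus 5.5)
The plan is to argue by contradiction, using the closure of $V$ under Boolean operations and preimages of morphisms, together with the characterisation in \Cref{prop:pre_ind_conditions}. A prefix-independent $\omega$-regular language $L$ is positional iff $(uv)^\omega\in L$ implies $u^\omega\in L$ or $v^\omega\in L$. Since $V$ is closed under complement, every $L\in V$ also has $\Sigma^\omega\setminus L\in V$. So both $L$ and its complement satisfy this condition. Taking the contrapositive for the complement gives: if $u^\omega\in L$ and $v^\omega\in L$, then $(uv)^\omega\in L$. Hence the set $P_L=\{u\in\Sigma^+ \mid u^\omega\in L\}$ has two properties. It is closed under products of the form $uv$, and $uv\in P_L$ implies $u\in P_L$ or $v\in P_L$. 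Its complement in $\Sigma^+$ satisfies the same two properties.

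Now suppose $L\in V$ is an $\omega$-language other than $\emptyset$ and $\Sigma^\omega$. By prefix-independence, membership of an ultimately periodic word $xu^\omega$ depends only on $u$. Because $\omega$-regular languages are determined by their ultimately periodic words, there exist $u\in\Sigma^+$ with $u^\omega\in L$ and $v\in\Sigma^+$ with $v^\omega\notin L$. Let $\Gamma=\{a,b\}$ and consider the morphism $h:\Gamma^\infty\to\Sigma^\infty$ with $h(a)=u$ and $h(b)=v$. Then $K=h^{-1}(L)\in V$ is a prefix-independent positional language over $\Gamma$ with $a^\omega\in K$ and $b^\omega\notin K$. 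Applying the same reasoning to $K$ and to its complement, we may in addition take the preimage under $a\mapsto ab$, $b\mapsto b$, or under similar substitutions, and intersect or union these with $K$.

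The key step is to produce a single language in $V$ that violates the condition of \Cref{prop:pre_ind_conditions}. I expect this to be the main obstacle. First I would decide membership of $(ab)^\omega$ in $K$. Suppose $(ab)^\omega\in K$. Consider $K'=K\cap g^{-1}(K^c)$, where $g$ is a morphism chosen so that $g((ab)^\omega)\notin K$; for example, $g(a)=b$ and $g(b)=a$ is itself a morphism, and $g(a^\omega)=b^\omega\notin K$. Then $a^\omega\in K\cap g^{-1}(K)^c$, since $g(a^\omega)=b^\omega\notin K$. Similarly $b^\omega\notin K\supseteq K'$. The symmetric combination $M=K\,\triangle\, g^{-1}(K)$ contains both $a^\omega$ and $b^\omega$. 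Because $(ab)^\omega$ and $(ba)^\omega$ agree up to a prefix, $g((ab)^\omega)=(ba)^\omega$ lies in $K$ exactly when $(ab)^\omega$ does. Hence $(ab)^\omega\notin M$. This shows that the complement $M^c\in V$ contains $(ab)^\omega$ but neither $a^\omega$ nor $b^\omega$, contradicting \Cref{prop:pre_ind_conditions} for $M^c$. This argument works regardless of whether $(ab)^\omega\in K$, so no case split is actually needed.

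Finally, I would check that $M^c$ really lies in $V$: the preimage $g^{-1}(K)$ is in $V$ by closure under morphism preimages, and the symmetric difference and complement are in $V$ by Boolean closure. I would also check that its $\omega$-part is prefix-independent, which holds because $V$ consists of prefix-independent languages by hypothesis. This yields the contradiction, so the only $\omega$-languages in $V$ are $\emptyset$ and $\Sigma^\omega$.
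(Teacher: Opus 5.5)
Your proof is correct. The key step holds. With $g$ the letter swap, $a^\omega\in K\setminus g^{-1}(K)$ and $b^\omega\in g^{-1}(K)\setminus K$. Meanwhile $(ab)^\omega$ lies in both or neither, since $g((ab)^\omega)=b(ab)^\omega$ and $K$ is prefix-independent. Hence $(K\triangle g^{-1}(K))^c\in V$ contains $(ab)^\omega$ but neither $a^\omega$ nor $b^\omega$. This violates \Cref{prop:pre_ind_conditions}, and the violation is already visible on a single Player~1 state with two self-loops labelled $a$ and $b$.

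You should delete the exploratory material about $K'$, the substitution $a\mapsto ab$ and the case split. It is unused, and it is internally inconsistent: when $(ab)^\omega\in K$, the swap gives $g((ab)^\omega)=(ba)^\omega\in K$, not $\notin K$ as you required. The observation about $P_L$ in your first paragraph is correct but also unnecessary.

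Your route differs from the paper's in framework rather than in its core idea. The paper passes to the corresponding variety of $\omega$-semigroups. There it derives the identities $uv^\omega=v^\omega$ (so $(st)^\omega=(ts)^\omega$) and $(uv)^\omega\in\{u^\omega,v^\omega\}$. It then notes that in $S\times S$ the element $((u,v)(v,u))^\omega$ has equal coordinates, whereas $(u,v)^\omega$ and $(v,u)^\omega$ do not.

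Your argument is the language-level transcription of exactly this trick. The languages $K$ and $g^{-1}(K)$ are recognised by $S\times S$ via $a\mapsto(u,v)$, $b\mapsto(v,u)$. The language $K\triangle g^{-1}(K)$ is then the set of words whose two coordinates disagree on membership in the accepting set.

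What your version gains is that it is elementary and self-contained. It uses only inverse images under two explicit morphisms, symmetric difference and complement. It needs no residuals and no appeal to the correspondence between $\infty$-varieties and varieties of $\omega$-semigroups. Since these operations preserve prefix-independence, it also yields \Cref{coro:no_variety_pos_preind} directly, without passing through the variety generated by $L$. The paper's algebraic phrasing instead presents the obstruction as the failure of an equational property, which suits its framework of varieties of $\omega$-semigroups.
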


\begin{remark}
    Since closure under complement means these languages are bipositional, and bipositional prefix-independent languages encode parity conditions \cite{COLCOMBET2006190}, this can equally be phrased as the nonexistence of a non-trivial class of parity conditions closed under Boolean operations.
\end{remark}

This places some restrictions on varieties of positional languages in general. Suppose we have a variety $\mathcal{V}$ entirely consisting of positional languages. If this contained a prefix-independent language $L$, the variety generated by $L$ must be prefix-independent (by \Cref{prop:prefindvariety}) and must be a subvariety of $\mathcal{V}$. This would give us a variety of prefix-independent languages, which we know to be impossible by the above. Therefore:

\begin{restatable}{corollary}{nogo}
    \label{coro:no_variety_pos_preind}
    An $\infty$-variety containing only positional languages must not contain a (non-trivial) prefix-independent language.
\end{restatable}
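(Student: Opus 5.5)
Suppose $\mathcal{V}$ is an $\infty$-variety all of whose languages are positional, and suppose towards a contradiction that $\mathcal{V}$ contains a prefix-independent language $L$ with $L\cap\Sigma^\omega\notin\{\emptyset,\Sigma^\omega\}$. The plan is to isolate inside $\mathcal{V}$ an $\infty$-variety made up only of prefix-independent positional languages, and then apply \Cref{prop:prefindimpossible} to it.

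\textbf{Step 1.} Let $\mathcal{W}$ be the $\infty$-variety generated by $L$, namely the smallest class containing $L$ and closed under Boolean operations, residuals, and preimages of morphisms between free $\omega$-semigroups. This class exists because $\infty$-varieties are closed under intersection. Since $\mathcal{V}$ is itself an $\infty$-variety containing $L$, minimality gives $\mathcal{W}\subseteq\mathcal{V}$. Hence every language in $\mathcal{W}$ is positional.

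\textbf{Step 2.} By \Cref{prop:prefindvariety}, the prefix-independent languages form an $\infty$-variety $\mathcal{P}$. This $\infty$-variety contains $L$, so again by minimality $\mathcal{W}\subseteq\mathcal{P}$. Alternatively, one can argue through the Eilenberg-type correspondence \cite{2004265}: the variety of $\omega$-semigroups generated by the syntactic $\omega$-semigroup of $L$ satisfies the identity $uv=v$, and so every language it recognises is prefix-independent. In either form, every language in $\mathcal{W}$ is both prefix-independent and positional.

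\textbf{Step 3.} Now $\mathcal{W}$ is an $\infty$-variety consisting entirely of prefix-independent positional languages, and $L\in\mathcal{W}$ with $L\cap\Sigma^\omega$ neither $\emptyset$ nor $\Sigma^\omega$. This contradicts \Cref{prop:prefindimpossible}.

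\textbf{Main obstacle.} The delicate point is in Step 2: it must be the case that the generated class stays prefix-independent. This holds only if \Cref{prop:prefindvariety} really gives closure under all three operations, including preimages under arbitrary morphisms. A morphism can send letters to words of any length, but the preimage of a prefix-independent language still depends only on tails, so this should go through.

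A second care point is the convention that languages live in $\Sigma^\infty$ with only their $\omega$-part mattering. Because of this, "non-trivial" has to be read as a condition on $L\cap\Sigma^\omega$, which is exactly the form in which \Cref{prop:prefindimpossible} is stated.
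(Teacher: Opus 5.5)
Your proof is correct and follows essentially the same route as the paper. The $\infty$-variety generated by $L$ sits inside both $\mathcal{V}$ and the prefix-independent $\infty$-variety of \Cref{prop:prefindvariety}, so it consists only of prefix-independent positional languages while containing the non-trivial $L$, which contradicts \Cref{prop:prefindimpossible}.
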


However if we weaken our closure requirements, we can find a positive $\infty$-variety (an $\infty$-variety minus the requirement to be closed under complement) of prefix-independent positional languages:

\begin{restatable}{proposition}{fgvariety}
    The class of languages containing, for each alphabet $\Sigma$, the languages of the form $FG(A_1) \cup \ldots \cup FG(A_n)$ where $A_i\subseteq \Sigma$ form a positive $\infty$-variety.
\end{restatable}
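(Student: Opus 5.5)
The plan is to verify the closure properties of a positive $\infty$-variety one at a time, each directly on the syntactic form $FG(A_1)\cup\cdots\cup FG(A_n)$. Here $FG(A)=\{w\in\Sigma^\omega \mid w \text{ eventually only contains letters of } A\}$. Following the convention of this section, only the $\omega$-part of a language matters. So I fix the finite part to be empty, and I read a class of languages as this family for each alphabet. Denote the family by $\mathcal{F}(\Sigma)$.

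\textbf{Positive Boolean operations.} Closure under finite union holds by definition. The empty union gives $\emptyset$, and $FG(\Sigma)=\Sigma^\omega$. For intersection, the key identity is $FG(A)\cap FG(B)=FG(A\cap B)$. If $w$ eventually lies in $A$ from position $i$ and in $B$ from position $j$, then from $\max(i,j)$ it lies in $A\cap B$; the converse is immediate. Distributing intersection over union then gives
\[
\Bigl(\bigcup_i FG(A_i)\Bigr)\cap\Bigl(\bigcup_j FG(B_j)\Bigr)=\bigcup_{i,j}FG(A_i\cap B_j)\in\mathcal{F}(\Sigma).
\]

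\textbf{Residuals.} Every $FG(A)$ is prefix-independent, and prefix-independence is preserved by unions. Hence $u^{-1}L=L$ for every $u\in\Sigma^*$. Right quotients by finite words only affect the finite part, which is empty, so they also leave the language in $\mathcal{F}(\Sigma)$. This step is just bookkeeping with the $\Sigma^\infty$ convention.

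\textbf{Inverse morphisms.} This is the step with actual content. A morphism of free $\omega$-semigroups $\varphi:\Gamma^\infty\to\Sigma^\infty$ is determined by its values $\varphi(c)\in\Sigma^+$ on letters $c\in\Gamma$. These values are nonempty because it is a semigroup morphism, which guarantees that infinite words are sent to infinite words. Given $A\subseteq\Sigma$, set $B=\{c\in\Gamma\mid \varphi(c)\in A^+\}$; the goal is $\varphi^{-1}(FG(A))=FG(B)$.
\begin{itemize}
\item If $w\in FG(B)$, then from some point on every block $\varphi(w[k])$ lies in $A^+$, so $\varphi(w)\in FG(A)$.
\item Conversely, suppose $\varphi(w)$ lies in $A$ from position $N$ on. Each block $\varphi(w[k])$ has length at least $1$, so block $k$ starts at a position $\ge k$. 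Hence for all $k\ge N$ the whole block $\varphi(w[k])$ lies in the $A$-tail, so $w[k]\in B$.
\end{itemize}
Since preimages commute with unions, $\varphi^{-1}\bigl(\bigcup_i FG(A_i)\bigr)=\bigcup_i FG(B_i)\in\mathcal{F}(\Gamma)$.

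I expect the main obstacle to be definitional rather than combinatorial. The argument must make sure the morphisms in the definition of $\infty$-variety are non-erasing semigroup morphisms, as the free $\omega$-semigroup structure dictates. If letters could be sent to $\varepsilon$, then $\varphi^{-1}(FG(A))\cap\Gamma^\omega$ would become $FG(B)$ intersected with ``infinitely many non-erased letters''. That set is not of the required form, so the non-erasing property must be stated explicitly and used in the second direction above.
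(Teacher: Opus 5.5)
Your proposal is correct and follows essentially the same route as the paper: union holds by definition, intersection follows from $FG(A)\cap FG(B)=FG(A\cap B)$ plus distributivity, residuals are handled by prefix-independence, and inverse morphisms by $\varphi^{-1}(FG(A))=FG(B)$, where your $B=\{c\mid \varphi(c)\in A^+\}$ is exactly the paper's $col(\varphi^{-1}(A^+))$. Your explicit use of the non-erasing property (block $k$ starts at a position $\ge k$) spells out a step the paper's proof leaves implicit.
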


This suggests we may be able to find more specification languages for positional properties if we drop the need for closure under all Boolean operations.

\section{Applications to Temporal Logic}\label{sec:temporallogic}

In this section we apply the above characterisation of positional properties to identify fragments of the temporal logic {\sf ATL}* in which we can assume agents use positional strategies. We first introduce some notions specific to this section.

${\sf ATL}^*$ extends {\sf LTL} by allowing {\it path formulae}, defined in {\sf LTL}, to be bound to modalities $\langle C\rangle$ which determine (in the semantics) a set of subsets of traces from a state, according to what the coalition of agents $C$ can force.
A formula $\langle C\rangle\varphi$, where $\varphi$ is an {\sf LTL} formula, will be satisfied at a state $x$ when {\it there exists} a subset of traces that $C$ can force from $x$, such that {\it every} trace within satisfies $\varphi$.
An implicit parameter of this logic is the set of agents ${\sf Ag}$.

In this paper we often consider fragments of ${\sf ATL}^*$ generated by a fragment of ${\sf LTL}$ which are allowed as the path formulae.
To this end, we define the syntax of the logic as parameterised over a subset $\mathcal{A}$ of {\sf LTL} formula which possibly contain free variables.
The grammar for ${\sf ATL}+\mathcal{A}$ is separated into {\it state formula} $\psi$  and path formula $\varphi$.
For any $p\in Prop,C\subseteq Ag,\alpha(x_1,\ldots,x_n)\in\mathcal A$:
\begin{align*}
    \psi ::= p \mid \psi \vee \psi \mid \neg\psi \mid \langle C\rangle \varphi && \varphi ::=  X\psi \mid \psi U \psi \mid \psi R \psi \mid \alpha[\psi_1\ldots \psi_n]
\end{align*}
The semantics of the logic is taken over {\it concurrent game structures}, we refer to \cite[Chapter II, 9]{Demri_Goranko_Lange_2016} for details. When $\mathcal{A}=\emptyset$ we get plain {\sf ATL};
when $\mathcal{A}$ consists of all {\sf LTL} formulae possibly with free variables we obtain the full logic ${\sf ATL}^*$;
when we take $\mathcal{A}$ to contain $FG$ and $GF$ (with free variables in the obvious places) we get {\sf EATL};
when we take $\mathcal{A}$ to contain Boolean combinations of $X$, $U$, and $R$ we get ${\sf ATL}^+$;
and when we take $\mathcal{A}$ to take Boolean combinations of $X$, $U$, $R$, $GF$, and $FG$ we get ${\sf EATL}^+$.
When we impose that there is precisely one agent, we get {\sf CTL}, ${\sf CTL}^*$, ${\sf ECTL}$, ${\sf CTL}^+$, and ${\sf ECTL}^+$, using the values of $\mathcal{A}$ in the previous sentence.
In \Cref{sec:temporallogic} we will consider fragments ${\sf ATL}+\mathcal{A}$ where we take $\mathcal{A}$ to contain single formula with free variables, we shall abuse notation and omit the set notation and the free variables; for example, we will say ${\sf ATL}+GU$ to mean $ATL$ with the path formula $G(\psi U \psi)$ added to the syntax.

We can give a {\it positional semantics} for ${\sf ATL}+\mathcal{A}$ by taking the subsets of traces from a state $x$ as the subsets which arise from {\it positional strategies} for $C$.
Furthermore, we can consider a {\it bipositional semantics}, when each trace in each subset (which arises from a positional $C$-strategy) arises from a positional strategy for ${\sf Ag}\setminus C$.
It is a well-known fact that the positional semantics of ${\sf ATL}$ agrees with the regular semantics which use memoryful strategies
(in fact, the bipositional semantics of {\sf ATL} also agrees with the regular semantics). We call some instance of ${\sf ATL}+\mathcal{A}$ a \textit{(bi)positional fragment} if across all expressible formulae and models the regular semantics agrees with the (bi)positional semantics.

\subsection{Positional Fragments}

In the domain of temporal logic, \Cref{coro:no_variety_pos_preind} tell us that we must restrict the path formula in {\sf ATL}* drastically
to obtain a {\it positional semantics} - one where only positional strategies are considered - which agrees with the memoryful semantics.
Despite this, we examine two such restrictions. We consider two fragments of {\sf ATL}$^*$
where we add $GF\psi_1\land FG\psi_2$,
and $G(\psi_1 U\psi_2)$, to the path formula of {\sf ATL} respectively, as described above.
We see these as natural choices of path formula,
because in one-agent {\sf ATL} (i.e. {\sf CTL}), these give us the ability to express Rabin objectives (which subsume parity,
and maintain positionality wrt the first player),
and Rabin objectives {\it with} a safety objective, respectively.
These claims are substantiated shortly in \Cref{prop:rabin_safety}.

As mentioned in the introduction, we achieve efficient model checking for restrictions of {\sf ATL}* with
positional path formula, by reducing the problem to {\sf LTL} model checking (which is in PSPACE).
Given a formula $\varphi$ in the positional fragment of {\sf LTL},
to determine whether $\varphi$ holds at a state $x$, it suffices to (non-deterministically)
enumerate positional Player 1 strategies in the game, and do {\it universal} model checking on the {\sf LTL} formula $\varphi$ on the resulting transition system at $x$. We can say more when the formula $\varphi$ is bipositional. First note that {\it existential} {\sf LTL} model checking of a positional path formulae $\varphi$ is in NP, as we can guess a positional strategy to generate
an ultimately periodic trace $pl^\omega$ where $|pl|$ is at most the number of states, which can be checked in time $O(|pl|\cdot|\varphi|)$\cite{markey:hal-01194626}.
So, if $\varphi$ is bipositional, then $\neg\varphi$ is positional and checking $A \varphi\equiv \neg E\neg\varphi$ is in co-NP. Therefore, {\sf ATL}* model-checking of bipositional properties is in $\Sigma^p_2$ (i.e. in NP given a co-NP oracle \cite{Arora_Barak_2009}), by guessing a positional strategy for P1 and using the oracle to verify $A \varphi$.
These facts yield the following:

\begin{restatable}{proposition}{synthesiscomplexity}
    \label{prop:synthesiscomplexity}
    The model checking problem for any positional fragment of {\sf ATL}$^*$ is in PSPACE.
    The model checking problem for any bipositional fragment of {\sf ATL}$^*$ is in $\Sigma^p_2$.
\end{restatable}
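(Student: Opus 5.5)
The proof follows the classical labelling algorithm for {\sf ATL}$^*$: process the state subformulae of the input formula bottom-up. Each maximal subformula $\langle C\rangle\varphi$ has its inner state subformulae already evaluated, so we replace them by fresh atomic propositions. This leaves a path formula $\varphi$ that is a plain {\sf LTL} formula over the current labelling. There are at most $|\psi|$ such subformulae and $|V|$ states, so it suffices to decide, for a single state $x$ and a single $\langle C\rangle\varphi$, whether $x\models\langle C\rangle\varphi$ within the stated class. For PSPACE this is immediate from closure under polynomially many calls, since $\mathrm{P}^{\mathrm{PSPACE}}=\mathrm{PSPACE}$. For $\Sigma^p_2$ we must be more careful, and I would instead guess all labelling choices and strategies up front (see below).

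For the positional case, fix $x$ and $\langle C\rangle\varphi$. By the definition of a positional fragment, the memoryful semantics agrees with the positional semantics, so $x\models\langle C\rangle\varphi$ iff some positional $C$-strategy $\sigma$ exists all of whose outcomes from $x$ satisfy $\varphi$. A positional strategy is a choice of action per agent in $C$ per state, which has polynomial size. Fixing $\sigma$ prunes the concurrent game structure to a transition system $T_\sigma$ (polynomial size) whose paths from $x$ are exactly the outcomes. Whether all of them satisfy $\varphi$ is universal {\sf LTL} model checking, which is in PSPACE. So the algorithm is in $\mathrm{NP}^{\mathrm{PSPACE}}=\mathrm{PSPACE}$; equivalently, we enumerate all $\sigma$ in polynomial space, reusing the space between iterations.

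For the bipositional case the same guess of $\sigma$ is used, but the verification step must be placed in co-NP. By bipositionality, every trace of $T_\sigma$ that violates $\varphi$ can be taken to arise from a positional counter-strategy of ${\sf Ag}\setminus C$. A violating path therefore exists iff some positional choice in $T_\sigma$ yields an ultimately periodic lasso $pl^\omega$ from $x$ with $|pl|\le |V|$ that satisfies $\neg\varphi$. Such a lasso is a polynomial certificate, checkable in time $O(|pl|\cdot|\varphi|)$ as cited. Hence checking $A\varphi$ on $T_\sigma$ is in co-NP, and the whole procedure is NP with a co-NP oracle, i.e.\ $\Sigma^p_2$.

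The main obstacle is the nesting. A naive labelling algorithm makes adaptive polynomial-time calls to a $\Sigma^p_2$ procedure, which lands in $\mathrm{P}^{\Sigma^p_2}$, not obviously in $\Sigma^p_2$. To handle this I would guess, in the NP phase, the full truth labelling of every subformula at every state, together with positional witness strategies for every positive $\langle C\rangle$-claim. Every claim, positive or negative, is then verified by a single universal co-NP check. A negative claim $x\not\models\langle C\rangle\varphi$ is $\forall\sigma\,\exists$ violating lasso, so it is not co-NP as stated. We handle it using bipositionality of the complement: the opposing coalition has a positional strategy forcing $\neg\varphi$, so we guess that strategy instead and verify with the oracle. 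With this bookkeeping, all checks collapse into one $\exists\forall$ alternation.
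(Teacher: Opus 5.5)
Your single-modality argument is the paper's: guess a positional $C$-strategy, prune to a transition system, and run universal {\sf LTL} model checking (PSPACE). In the bipositional case, positionality of $\neg\varphi$ means a violating path exists iff a lasso $pl^\omega$ with $|pl|\le|V|$ does, so checking $A\varphi$ is in co-NP. The paper stops there and never discusses nesting. Your PSPACE treatment of nesting via $\mathrm{P}^{\mathrm{PSPACE}}=\mathrm{PSPACE}$ is fine, and you are right that naive labelling only gives $\mathrm{P}^{\Sigma^p_2}$ in the bipositional case. The genuine gap is in your fix for this, specifically the handling of negative claims.

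You assert that if $x\not\models\langle C\rangle\varphi$ then ${\sf Ag}\setminus C$ has a positional strategy forcing $\neg\varphi$. That is a determinacy statement, and it fails over concurrent game structures, which is where the paper's semantics is defined. Take matching pennies: at $x$, agents $1$ and $2$ simultaneously choose $h$ or $t$; a match leads to a self-looping $p$-state and a mismatch to a self-looping $\neg p$-state. Then $x\models\neg\langle 1\rangle Xp$, yet agent $2$ has no strategy at all forcing $X\neg p$.

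Plain {\sf ATL} is bipositional by the paper's own remark, and $Xp$ is even bipositional as a language. Still, your procedure rejects this true instance: neither the guess ``$\langle 1\rangle Xp$ holds'' nor the guess ``it fails'' admits a witness. What the bipositional-fragment definition actually gives is weaker: $x\not\models\langle C\rangle\varphi$ iff every positional $C$-strategy admits a positional counter-strategy producing a violating lasso. That is a $\forall\exists$ condition of $\Pi^p_2$ type, which an NP guess followed by a co-NP check does not capture.

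Your trick is sound only for turn-based (hence determined) structures with $\varphi$ bipositional as a language. There the opponent wins $\neg\varphi$ positionally, and verifying that opponent strategy is co-NP because $\varphi$ is positional. Without that restriction, your nesting layer does not establish the $\Sigma^p_2$ bound, and your argument proves no more than the paper's single top-level check. You should either state the turn-based assumption explicitly or confine the claim to that single check.
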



We now verify that adding various path formula of interest to {\sf ATL} maintains positionality of the semantics.

\begin{restatable}{proposition}{positionalformulas}
    \label{prop:pos_formulae}
    If we add any of the following composite {\sf LTL} modalities
    to the path formula of {\sf ATL}, the semantics remains positional:
    $GF\psi_1\land FG\psi_2$,
    $G(\psi_1 U\psi_2)$,
    $GF\psi_1\land FG\psi_2\land G\psi_3$, and $G(\psi\vee X\psi)$.
\end{restatable}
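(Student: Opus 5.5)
To prove \Cref{prop:pos_formulae}, I would reduce positionality of the semantics to positionality of the path objectives over state-labelled games, where the atoms are the truth values of the subformulae. The argument goes by induction on formula structure, in the style of the classical labelling algorithm. By induction, the truth value of each state subformula $\psi_i$ at each state is the same under the memoryful and the positional semantics. So every state carries a fixed label in $2^{\{\psi_1,\dots,\psi_n\}}$. Evaluating $\langle C\rangle\varphi$ at a state $x$ is then a two-player game on the concurrent game structure, turned into a turn-based state-labelled arena in the standard way: coalition $C$ plays against ${\sf Ag}\setminus C$, and the intermediate states inherit the label of their source. The objective is the $\omega$-language $L_\varphi$ over this alphabet. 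It therefore suffices to show that each listed $L_\varphi$ is state-labelled positional. We need the uniform notion only at the single state $x$, but uniform positionality gives this anyway. The inheritance of labels by intermediate states needs a quick check: either the languages are stutter-tolerant in the relevant sense, or we note that the listed objectives are insensitive to the duplication introduced. For $G(\psi\vee X\psi)$ this is delicate, so there I would instead let the intermediate states carry a fresh neutral label and check the condition directly.

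For the first three modalities I would use the known edge-labelled results together with the fact that edge-labelled positional implies state-labelled positional \cite{kopthesis}. $GF\psi_1\wedge FG\psi_2$ is a single Rabin pair (with $V$ the letters violating $\psi_2$), so it is positional. $GF\psi_1\wedge FG\psi_2\wedge G\psi_3$ is the intersection of this with a safety condition. Rather than cite intersection closure, I would verify \Cref{cor:omega-pos} directly. The residuals are $\emptyset$ and $L$, so they are totally ordered. Conditions (1) and (2) hold because removing a cycle or pumping one preserves the safety condition $G\psi_3$. It also preserves the Rabin part when we choose the branch that keeps the infinitely repeated segment. Alternatively, \Cref{prop:pref_ind_pos_concrete}-style reasoning applied after the safety prefix could be used. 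For $G(\psi_1 U\psi_2)$, I would observe that it equals $G(\psi_1\vee\psi_2)\wedge GF\psi_2$, which is again a Büchi condition intersected with safety, so the same argument applies.

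The remaining case, $G(\psi\vee X\psi)$, says that no two consecutive positions both violate $\psi$. This is a safety language, and it is not edge-labelled positional in general, since the residuals after $\neg\psi$ and after $\psi$ differ. It is, however, state-labelled positional, which I would show via \Cref{prop:state_pos_characterisation} (or \Cref{prop:edgevsstate}: memory of the previous label suffices). For residuals, the residual of $ua$ depends only on whether $u$ is already bad and on the last letter $a$, so for a fixed $a$ the residuals are $\emptyset$ and one further set, which are totally ordered. For (1) and (2), with $v[0]=w[0]$, the words $uv^\omega$, $uwx^\omega$ and $u(wy)^\omega$ consist only of factors of the original word, together with junctions where the letter following $v$ (namely $w[0]$) is replaced by $v[0]$, or where the letter following $u$ is replaced by $w[0]=v[0]$. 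Since these letters are identical, no new adjacent pair of violations appears, so membership is preserved. In fact both disjuncts hold.

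I expect the main obstacle to be the reduction step rather than the language checks. One difficulty is making rigorous that positional strategies in the turn-based translation of a concurrent game structure correspond to positional $C$-strategies. Another, specific to $G(\psi\vee X\psi)$, is the handling of labels at intermediate states. I would address the latter by giving intermediate states a label that never violates $\psi$ and checking that it changes no truth values.
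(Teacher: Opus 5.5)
For the first three modalities your checks are essentially the paper's. It verifies the condition of \Cref{prop:pre_ind_conditions} for $GF\psi_1\wedge FG\psi_2$, and for the other two it verifies \Cref{cor:omega-pos}: residuals $\emptyset$ and $L$, plus conditions (1) and (2). Citing Rabin positionality and rewriting $G(\psi_1U\psi_2)$ as $G(\psi_1\vee\psi_2)\wedge GF\psi_2$ are harmless shortcuts. In (2), though, the branch to keep is the one whose period contains a $\psi_1$-letter, not ``the infinitely repeated segment''. The paper's proof consists only of these language-level checks. It never discusses the translation from concurrent game structures to arenas, so that part of your proposal goes beyond it.

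For $G(\psi\vee X\psi)$ your premise is false: differing residuals do not obstruct positionality, only incomparable ones do. With $L=G(\psi\vee X\psi)$ the residuals form a chain $\emptyset\subseteq \psi\wedge XL\subseteq L$. Conditions (1) and (2) of \Cref{prop:1pnonuniformpos} also hold: if both candidate words contained a new adjacent pair of $\neg\psi$-letters, the original word would already contain one. So $L$ is edge-labelled positional, which is exactly what the paper proves. That route gives a stronger conclusion and avoids relying on \Cref{prop:state_pos_characterisation}, which is stated for finite arenas with only a proof sketch. Your state-labelled check still reaches the needed conclusion. However, ``both disjuncts hold'' is false for (2). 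Take $a\models\psi$, $b\not\models\psi$, $u$ empty, $v=ba$, $w=b$, $y=a$: then $(baba)^\omega\in L$ but $vw^\omega=bab^\omega\notin L$. Only $u(wy)^\omega\in L$ is guaranteed, which suffices.

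The step that genuinely fails is your patch for the intermediate states of the turn-based translation. If they carry a fresh label satisfying $\psi$, every play satisfies $G(\psi\vee X\psi)$, so truth values do change. The language you then verify is therefore not the objective of the translated game. You need to adjust the objective, not only the labels. One option is to keep inherited (doubled) labels: the doubled version of $\lambda$ satisfies $G(\psi\vee X\psi\vee XX\psi)$ iff $\lambda\models G(\psi\vee X\psi)$. That language is again positional, since its residuals form a chain and (1) and (2) check as before. Another option is to argue directly on the game structure via the safety fixpoint. At each winning state $x$, choose a $C$-action that keeps all successors in the winning set, and inside $\psi$ when $x\not\models\psi$.
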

\begin{proof}[Proof sketch]
    The first formula generates a prefix-independent set of traces, so it suffices to verify the conditions of \Cref{prop:pre_ind_conditions}. 
    For the remaining three, the claim follows from verifying the conditions in \Cref{cor:omega-pos}, as every property positional on edge-labelled arenas is positional on state-labelled arenas.
\end{proof}

When adding in two of the path formula considered in \Cref{prop:pos_formulae} to {\sf ATL},
it turns out we obtain logics of equal expressive power.

\begin{restatable}{proposition}{mutualexpress}
    \label{prop:mutualexpress}
    ${\sf ATL}+GF\wedge FG\wedge G$ and ${\sf ATL} + GU$ are mutually expressive.
\end{restatable}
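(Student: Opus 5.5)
The plan is to prove both inclusions by structural induction on formulae. In each direction we replace the one new path-formula modality by an equivalent state formula of the other logic. The remaining constructs of {\sf ATL} are shared, so the only work is to translate $\langle C\rangle\alpha[\psi_1,\ldots,\psi_n]$ for the added modality $\alpha$, assuming the $\psi_i$ have already been translated.

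\emph{From ${\sf ATL}+GU$ to ${\sf ATL}+GF\wedge FG\wedge G$.} This direction is a purely linear-time equivalence:
\[ G(\psi_1 U \psi_2) \equiv GF\psi_2 \wedge FG\top \wedge G(\psi_1\vee\psi_2). \]
For the left-to-right implication: every position satisfies $\psi_1 U\psi_2$, so in particular $\psi_1\vee\psi_2$ holds there, and $\psi_2$ recurs infinitely often. For the right-to-left implication: from any position, the next occurrence of $\psi_2$ exists, and every position before it satisfies $\psi_1$, because it satisfies $\psi_1\vee\psi_2$ but not $\psi_2$. Since the two path formulae have the same set of traces, $\langle C\rangle$ applied to either one gives the same state formula.

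\emph{From ${\sf ATL}+GF\wedge FG\wedge G$ to ${\sf ATL}+GU$.} Here I do not expect a path-level equivalence; instead I would use nesting of coalition modalities. Define
\[ Y := \langle C\rangle G\big((\psi_2\wedge\psi_3)\,U\,(\psi_1\wedge\psi_2\wedge\psi_3)\big), \]
and note that the path formula inside $Y$ is equivalent to $G(\psi_2\wedge\psi_3)\wedge GF\psi_1$. The claim is
\[ \langle C\rangle(GF\psi_1\wedge FG\psi_2\wedge G\psi_3)\equiv \langle C\rangle(\psi_3\, U\, Y). \]
For $\Leftarrow$, compose strategies: play the strategy witnessing $\psi_3 U Y$ until a $Y$-state is reached, then switch to the strategy witnessing $Y$. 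The regular semantics allows memoryful strategies, so this composition is legitimate, and every resulting trace satisfies the left-hand side.

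For $\Rightarrow$, which I expect to be the main obstacle, I would use \Cref{prop:pos_formulae}. By that result we may fix a \emph{positional} winning strategy $\sigma$ for $C$ from $x$ in the finite game structure. Let $B$ be the set of states $s$ reachable under $\sigma$ such that every $\sigma$-consistent play from $s$ stays in $\psi_2\wedge\psi_3$ forever. The key step is to show that every $\sigma$-play from $x$ eventually enters $B$. Suppose instead some play avoids $B$. Since $\sigma$ is positional, the opponent at each state outside $B$ can steer the play along a $\sigma$-consistent finite path to a $\neg\psi_2$ state. Repeating this forever yields a $\sigma$-play with infinitely many $\neg\psi_2$ positions, contradicting $FG\psi_2$. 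Positionality is what makes this possible: the set of $\sigma$-continuations from a state does not depend on the history. Now, for every $s\in B$, all $\sigma$-plays from $s$ satisfy $G(\psi_2\wedge\psi_3)$, and also $GF\psi_1$, since they are suffixes of winning plays. Hence $B\subseteq Y$. Together with $G\psi_3$ along the prefix, this shows that $\sigma$ witnesses $\psi_3\,U\,Y$ from $x$.

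Both translations are compositional and stay within the respective grammars, since the right-hand sides use only ${\sf ATL}$ connectives and the added modality applied to translated state formulae. The induction therefore closes and gives mutual expressiveness.
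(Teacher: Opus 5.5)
You state the same two equivalences as the paper. Your first translation, $G(\psi_1 U\psi_2)\equiv GF\psi_2\wedge FG\top\wedge G(\psi_1\vee\psi_2)$, is correct. So is the $\Leftarrow$ half of the second, by composing strategies. The gap is the $\Rightarrow$ half of $\langle C\rangle(GF\psi_1\wedge FG\psi_2\wedge G\psi_3)\equiv\langle C\rangle(\psi_3\,U\,Y)$.

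Your key step, that every $\sigma$-play from $x$ eventually enters $B$, is false. The ``repeat forever'' argument breaks because, after the opponent steers to a $\neg\psi_2$ state, the play may be forced into $B$, leaving nothing to repeat. Concretely, let $C$ have no influence on successors (e.g.\ $C=\emptyset$). Take states $x,t,b$ with edges $x\to x$, $x\to t$, $t\to b$, $b\to b$. Let $\psi_1,\psi_3$ be true everywhere and $\psi_2$ true exactly at $x$ and $b$.
\begin{itemize}
\item Every play from $x$ (namely $x^\omega$ and $x^n t b^\omega$) satisfies $GF\psi_1\wedge FG\psi_2\wedge G\psi_3$.
\item $Y$ holds only at $b$, so the play $x^\omega$ never reaches a $Y$-state and $\langle C\rangle(\psi_3\,U\,Y)$ fails at $x$.
\item Here $B=\{b\}$, and $x^\omega$ avoids $B$ while containing no $\neg\psi_2$ position at all.
\item The unique strategy of $C$ is trivially positional, so positionality does not help.
\end{itemize}
This is the classical fact that $AFG\,p$ does not imply $AF\,AG\,p$. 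So no argument can establish this equivalence for arbitrary coalitions.

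The paper's appendix proof only treats the existential one-agent case $E$, where the $\Rightarrow$ direction is immediate. You take the single witnessing trace $\lambda$, pick an index $n$ after which $\psi_2$ holds forever, and use $\lambda[n..]$ as the witness for the inner modality. The extension to general $\langle C\rangle$ is only asserted to be analogous, and your attempt shows where that analogy breaks. An intermediate state formula can split $F$ from $G$ only when a single trace is chosen, not when the opponent still resolves branching.

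To handle modalities where the opponent has choices, you need a different translation. In one-agent {\sf CTL}, $\langle\emptyset\rangle=A$ distributes over conjunction and $AFG\psi_2\equiv\neg EG(\top\,U\,\neg\psi_2)$, which yields a {\sf CTL}$+GU$ formula. For general coalitions in concurrent game structures no such decomposition is available, so a genuinely new idea, or a restriction of the statement, would be required.
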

\begin{proof}[Proof sketch]
    The following equivalences can be routinely verified:
    $\langle C\rangle(GF\psi\land FG\varphi\land G\phi)\equiv \langle C\rangle(\phi U (\langle C\rangle(G((\varphi\land\phi)
    U(\psi\land\varphi\land\phi)))$ and
    $\langle C\rangle(G\psi U \varphi) \equiv \langle C\rangle(GF\varphi \wedge FG\top \wedge G(\psi \vee \varphi))$.
\end{proof}


\begin{remark}
    \label{remark:gr1}
 GR(1) synthesis \cite{BLOEM2012911} restricts LTL specifications to an implication from a conjunction of B\"uchi conditions to a conjunction of B\"uchi conditions, to obtain a polynomial-time synthesis algorithm. 
 Our positional fragments
 are incomparable to this approach,
 as ${\sf FG}p\land {\sf GF}q$ is not directly expressible in GR(1),
 and we cannot express conjunctions of B\"uchi conditions as they are not positional
 (and are inexpressible in our fragments, see \ref{prop:ectlplusgu}).
 We leave a detailed comparison to future work.
\end{remark}

Over models with a single player, we now show that adding $GF\varphi\land FG\psi$ ($G\varphi U\psi$) to path formula lets us express Rabin conditions (mixed Rabin and safety conditions).
Recall a Rabin condition is a finite collection of pairs $(U_i,V_i)$, where $U_i$ must be visited infinitely often, and $V_i$ must be visited finitely often.
We say we can {\it encode} a Rabin condition in a fragment of {\sf CTL}* when, given $U_i$ and
$V_i$ are denoted by propositional variables $u_i$ and $v_i$,
we can write a formula which holds precisely when the Rabin condition is satisfied.
In a similar fashion, we can speak of safety conditions, where some set $W$ must never be visited, being encoded in a fragment of {\sf CTL}*.
The following proposition  relies on disjunctions distributing over $E$ in {\sf CTL}*, thus may not apply to the corresponding fragments of {\sf ATL}*.

\begin{proposition}
    \label{prop:rabin_safety}
    The fragment ${\sf CTL} + GF\wedge FG$ can encode a Rabin condition.
    The fragment ${\sf CTL}+GU$ can encode a Rabin condition
    combined with a safety condition.
\end{proposition}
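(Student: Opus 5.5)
The plan is to give explicit encodings and check them directly against the semantics of {\sf CTL}$^*$ over one-player structures. For the Rabin condition $\{(U_i,V_i)\}_{i\le k}$, take
\[
\Phi_{\mathrm{Rabin}} := \bigvee_{i=1}^{k} E\bigl(GF u_i \wedge FG\neg v_i\bigr),
\]
which lies in ${\sf CTL}+GF\wedge FG$ because the path modality is applied to propositional (hence state) formulas $u_i$ and $\neg v_i$. Since $E$ is existential, the implicit path quantifier over a single-agent model can be read as ``there exists a path''. A path satisfies the Rabin condition iff some pair $i$ holds along it. So $E$ of the disjunction equals the disjunction of the $E$'s, because $E(\alpha\vee\beta)\equiv E\alpha\vee E\beta$ in {\sf CTL}$^*$. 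This is the only nontrivial observation, and it is the point the remark flags as failing for coalitions in {\sf ATL}$^*$.

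For the second claim, add a safety condition ``never visit $W$'', with $w$ the proposition for $W$. The target is $E\bigl(G\neg w \wedge \bigvee_i (GFu_i \wedge FG\neg v_i)\bigr)$, which again splits over the disjunction into $\bigvee_i E(GFu_i\wedge FG\neg v_i\wedge G\neg w)$. Each disjunct has the form $GF\psi\wedge FG\varphi\wedge G\phi$. By \Cref{prop:mutualexpress}, instantiated with the single agent (so $\langle C\rangle$ becomes $E$), such a formula is expressible in ${\sf CTL}+GU$. Concretely, I will use
\[
E\bigl(\neg w\, U\, E\, G((\neg v_i\wedge\neg w)\, U\, (u_i\wedge\neg v_i\wedge\neg w))\bigr),
\]
and verify it directly on paths. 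The inner formula holds at a state iff from there some path stays in $\neg v_i\wedge\neg w$ forever and hits $u_i$ infinitely often, since $G(a U (b\wedge a))$ is equivalent to $G a\wedge GF b$. The outer $\neg w\,U$ prefix covers the finite initial segment before $FG\neg v_i$ takes effect, during which $v_i$ may occur but $w$ may not.

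The step I expect to need the most care is this last equivalence in the existential, branching setting. Witnesses for the outer and inner $E$ must be concatenated into a single path, and on a one-player structure a path to a state followed by a path from it is a path. I also need the converse: given a path satisfying $GFu_i\wedge FG\neg v_i\wedge G\neg w$, I choose the switch point as the first position after which $v_i$ never occurs. Both directions are routine but must be checked against the $U$ semantics, including the case where the switch point is position $0$.
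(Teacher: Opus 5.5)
Your proposal is correct and follows the paper's own argument: distribute $E$ over the disjunction of Rabin pairs, then apply the equivalence from \Cref{prop:mutualexpress}, whose right-hand side is exactly your displayed formula with $\psi=u_i$, $\varphi=\neg v_i$, $\phi=\neg w$. Your direct check of the path concatenation and of $G(a\,U\,(b\wedge a))\equiv Ga\wedge GFb$ simply unfolds what the paper delegates to that proposition, and your negated polarities ($FG\neg v_i$, $G\neg w$) are the correct reading of the paper's looser $FGv_i$, $Gw$.
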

\begin{proof} We show the second claim.
    Given a Rabin condition $(u_i,v_i)$ indexed by a finite set $I$ and a safety condition $w$
    (both encoded as atomic propositions),
    we can write a formula equivalent to
    $E(\bigvee_{i\in I}(GF u_i\land FG v_i)\land Gw)$,
    using distributivity of $E$ over $\lor$, and \Cref{prop:mutualexpress}.
\end{proof}

Note that parity conditions are subsumed by Rabin conditions, so can be expressed by both fragments in \Cref{prop:rabin_safety}.

\subsection{Inexpressivity results}
\label{subsec:inexpress}

In the remainder of this section, we make various inexpressivity claims for our different positional fragments,
along with {\sf EATL} (where $GF$ and $FG$ path formula are permitted), and {\sf EAT$\text{L}^+$} (where we further allow Boolean combinations of path formula). We collect the claims in the following theorem,
the proofs of which will follow.
\begin{theorem}\label{theorem:ATL_fragments} Let $>$ denote the strictly more expressive than relation,
    we have the following:
    \begin{center}
        $
        {\sf ATL}\overset{\cite{10.1007/3-540-47813-2_20}}{=}
        {\sf ATL}^+\,\overset{(\ref{coro:atl_eatl})}<\,
        {\sf EATL}\,\overset{(\ref{prop:eatl_gffg})}<\,
        {\sf ATL}+GF\land FG\,\overset{(\ref{prop:gffgtogu})}<\,
        {\sf ATL}+GU\,\overset{(\ref{prop:ectlplusgu})}<\,\,
        {\sf EATL}^+\,\overset{(\ref{coro:gpxp_eatl})}<\,
        {\sf ATL}^*
        $
    \end{center}
    Furthermore, ${\sf EATL}^+$ cannot express all state-labelled positional properties.
\end{theorem}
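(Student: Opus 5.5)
The theorem bundles six comparisons and one extra claim. I would first split every strict inequality into an inclusion and a separating property. The inclusions are all syntactic or near-syntactic; the real work is in the separations.

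\textbf{Inclusions.} ${\sf ATL}^+\subseteq{\sf EATL}$ follows from ${\sf ATL}={\sf ATL}^+$ \cite{10.1007/3-540-47813-2_20}, since ${\sf ATL}$ is a syntactic fragment of ${\sf EATL}$. The remaining inclusions are as follows.
\begin{itemize}
\item ${\sf EATL}\subseteq{\sf ATL}+GF\wedge FG$: take $\langle C\rangle GF\psi\equiv\langle C\rangle(GF\psi\wedge FG\top)$ and $\langle C\rangle FG\psi\equiv\langle C\rangle(GF\top\wedge FG\psi)$.
\item ${\sf ATL}+GF\wedge FG\subseteq{\sf ATL}+GU$: use \Cref{prop:mutualexpress}, instantiating the $G$ conjunct with $\top$.
\item ${\sf ATL}+GU\subseteq{\sf EATL}^+$: use $G(\psi U\varphi)\equiv GF\varphi\wedge G(\psi\vee\varphi)$, where $G(\cdot)$ is a Boolean combination of $U$/$R$ path formulas.
\item ${\sf EATL}^+\subseteq{\sf ATL}^*$: this is immediate.
\end{itemize}

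\textbf{Separations.} Each separation is witnessed by a family of pairs of concurrent game structures (single-agent suffices in most cases, i.e.\ Kripke structures) on which the weaker logic cannot distinguish the pair but a formula of the stronger logic can. I would prove indistinguishability by induction on formulas, or by a bisimulation-style game tailored to the allowed path modalities. The witnesses are:
\begin{itemize}
\item ${\sf ATL}^+<{\sf EATL}$: the witness is $E\,GF p$. This is the classical ${\sf CTL}$ versus ${\sf ECTL}$ separation (Emerson--Halpern style): $\langle C\rangle GFp$ cannot be expressed with finitely many nested $U$/$R$ modalities. Use the standard family of structures $M_n,N_n$ that agree on all ${\sf CTL}$ formulas of nesting depth $\le n$.
\item ${\sf EATL}<{\sf ATL}+GF\wedge FG$: the witness is $E(GFp\wedge FGq)$. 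Pure $GF$ and $FG$ path formulas cannot enforce the conjunction on a single path; build structures where one path satisfies $GFp$ and another satisfies $FGq$, yet no path satisfies both.
\item ${\sf ATL}+GF\wedge FG<{\sf ATL}+GU$: the witness is $E\,G(pUq)$, i.e.\ Büchi combined with safety on the same path. Construct structures where every safe region also admits visits to $q$ only along paths that leave safety.
\item ${\sf ATL}+GU<{\sf EATL}^+$: the witness is $E(GFp\wedge GFq)$, a conjunction of Büchi conditions. This is non-positional (as noted in \Cref{remark:gr1}). I would show that on structures where the two Büchi targets can only be alternated using memory, every ${\sf ATL}+GU$ formula agrees with its value on a modified structure. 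The key leverage here is that ${\sf ATL}+GU$ has positional semantics (\Cref{prop:pos_formulae}), so its formulas cannot detect the difference between "alternation is possible with memory" and "no positional path works". Concretely, take a state with self-loops to a $p$-state and a $q$-state. In ${\sf CTL}$ terms, though, every positional path reaches one of them only finitely often. So I would prefer a two-player construction, where the adversary forces the need for memory.
\item ${\sf EATL}^+<{\sf ATL}^*$: the witness is $G(p\vee Xp)$, which is positional by \Cref{prop:pos_formulae}. I would argue that Boolean combinations of $X,U,R,GF,FG$ with state-formula arguments cannot express this counting-like safety property. Using structures $M_n,N_n$ (cycles of length $n$ in which one cycle contains two consecutive $\neg p$ states), each ${\sf EATL}^+$ formula of size $<n$ agrees on both. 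This same witness also yields the final claim: $G(p\vee Xp)$ is state-labelled positional but not ${\sf EATL}^+$-expressible.
\end{itemize}

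\textbf{Main obstacle.} I expect the hardest step to be the ${\sf EATL}^+$ separations, because Boolean combinations of path formulas defeat simple inductive invariants. There I would adapt the Wilke/Emerson--Halpern technique of showing that ${\sf EATL}^+$ path formulas depend only on a finite "type" of the path: the truth values of the state subformulas visited, together with $GF$/$FG$ information. Two paths of equal type then cannot be separated, whereas $G(p\vee Xp)$ distinguishes them.
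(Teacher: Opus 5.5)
Your skeleton matches the paper. You get the inclusions from pathwise equivalences; your $G(\psi U\varphi)\equiv GF\varphi\wedge G(\psi\vee\varphi)$ is essentially the paper's second equivalence in \Cref{prop:mutualexpress}. Your separations use exactly the paper's witnesses, reduced to one agent. But the separations are the substance of the theorem, and yours are either unspecified or rest on ideas that fail.

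The clearest case is ${\sf ATL}+GU<{\sf EATL}^+$. Your ``key leverage'' is that a fragment with positional semantics cannot express a property whose witnesses need memory. This is false: positional semantics constrains each path operator separately, while nested $\langle C\rangle$ modalities can simulate memory. The theorem's own first equality refutes it. ${\sf ATL}$ is positional, yet ${\sf ATL}={\sf ATL}^+$ expresses $\langle C\rangle(Fp\wedge Fq)$, which needs memory; for one agent, $E(Fp\wedge Fq)\equiv EF(p\wedge EFq)\vee EF(q\wedge EFp)$. Having given up that route, you switch to an unspecified two-player construction, so nothing is actually proved. The paper stays with one agent. It takes the Emerson--Halpern families of \Cref{fig:GF-GF-separator}, which no ${\sf ECTL}$ formula distinguishes but $E(GFp\wedge GFq)$ does. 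It then adds the single inductive case showing that $E(G\psi U\varphi)$ holds at $\alpha_i$ iff it holds at $a_i$, and lifts the result to ${\sf ATL}$ via \Cref{lem:express_transfer}.

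For ${\sf EATL}^+<{\sf ATL}^*$, and hence for the final claim, your cycle models cannot work. On a cycle there is only one path, so $EG(p\vee Xp)$ separates two cycles only if exactly one of them contains a $\neg p$-state followed by a $\neg p$-state. In that case the constant-size formula $EF(\neg p\wedge EX\neg p)$ already separates them for every $n$. A separating family must contain the $\neg p\neg p$ pattern in both structures and differ only in whether a single path can avoid it forever. That is what the Emerson--Halpern models behind \Cref{coro:gpxp_eatl} provide. Your ``type of a path'' heuristic ignores nested state subformulas under the path quantifier, so it cannot replace those models.

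The remaining separations are only gestured at. For ${\sf EATL}<{\sf ATL}+GF\wedge FG$, having ``one path with $GFp$, another with $FGq$, none with both'' does not by itself give ${\sf EATL}$-indistinguishability. The paper instead reuses Laroussinie's depth-indexed families (\Cref{fig:GF-FG-separator}). For $E\,G(pUq)$ it needs the hand-built model of \Cref{fig:GUseparator}.

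Finally, your inclusion ${\sf ATL}+GF\wedge FG\subseteq{\sf ATL}+GU$, like the paper's, relies on the first translation in \Cref{prop:mutualexpress}, which is sound only for one agent. Suppose an opponent at a $\varphi\wedge\psi$-state may either loop forever or pass once through a $\neg\varphi$-state into a $\varphi\wedge\psi$-sink. Then $\langle C\rangle(GF\psi\wedge FG\varphi)$ holds there, but $\langle C\rangle F\langle C\rangle G(\varphi U(\psi\wedge\varphi))$ does not.
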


Note that {\sf EATL} is positional whereas {\sf EATL$^+$} is not, so as far as we are aware {\sf ATL+$GU$} is the largest known fragment of {\sf ATL}* containing only positional path formulae.

Recall that the standard procedure for showing inexpressibility results in temporal logic
(see e.g. \cite[Lemma 10.3.2]{Demri_Goranko_Lange_2016}) is to show that a formula in one logic picks out a class of models,
which cannot be expressed by the other.
It turns out, that to compare our fragments, it is often sufficient to perform this procedure on just the one-agent restrictions of our logics:

\begin{restatable}{lemma}{express}
    \label{lem:express_transfer}
    Let $A$ be a set of (possibly composite) {\sf LTL} modalities, and $\varphi$ be an {\sf LTL} formula.
    Suppose we have two families of rooted transition systems $\{\mathcal{M}_n\}_{n\in\mathbb{N}}$
    and $\{\mathcal{N}_n\}_{n\in\mathbb{N}}$,
    for which $\mathcal{M}_n\vDash\exists\varphi$ and $\mathcal{N}_n\nvDash\exists\varphi$ for all $n$,
    and where $\mathcal{M}_n$ and $\mathcal{N}_n$ cannot be distinguished by any $CTL+A$ formula of modal depth less than $n$.
    This is enough to show that $\langle i\rangle\varphi$ is not expressible in ${\sf ATL}+\mathcal{A}$, for some agent $i\in{\sf Ag}$.
\end{restatable}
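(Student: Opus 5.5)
The plan is to embed the one-agent models into concurrent game structures in which all agents other than $i$ are dummies. Then $\langle i\rangle$ behaves like $E$ and every other coalition modality collapses to $E$ or $A$. From this we transfer the CTL$+A$ indistinguishability to ${\sf ATL}+\mathcal{A}$.

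Concretely, for a rooted transition system $\mathcal{M}$ let $\widehat{\mathcal{M}}$ be the concurrent game structure with the same states and labelling. At each state, agent $i$ chooses a successor, and every other agent has a single action. Under this construction, a coalition $C$ can force a set of traces as follows.
\begin{itemize}
\item If $i\in C$, then $C$ can force exactly one path of $\mathcal{M}$ chosen by a (possibly memoryful) strategy. So $\langle C\rangle\varphi$ holds iff $E\varphi$ holds in $\mathcal{M}$.
\item If $i\notin C$, then $C$ forces the set of all paths. So $\langle C\rangle\varphi$ holds iff $A\varphi$ holds.
\end{itemize}
Since $A\varphi\equiv\neg E\neg\varphi$, we get a translation $t$ from ${\sf ATL}+\mathcal{A}$ formulae to CTL$+A$ formulae (allowing negated path formulae under $E$, or simply keeping $A$ as primitive in CTL$+A$). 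The translation satisfies $\widehat{\mathcal{M}},s\vDash\psi$ iff $\mathcal{M},s\vDash t(\psi)$. It is defined by structural induction and preserves modal depth, since it replaces each $\langle C\rangle$ by $E$ or $A$ and leaves the path operators $X,U,R$ and those in $A$ unchanged. The induction step for $\langle C\rangle\alpha[\psi_1,\ldots,\psi_n]$ uses the observation above together with the inductive hypothesis on the $\psi_j$. The inductive hypothesis lets us replace the subformulae by their translations pointwise along traces.

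Now suppose, for contradiction, that some ${\sf ATL}+\mathcal{A}$ formula $\theta$ is equivalent to $\langle i\rangle\varphi$. Let $d$ be its modal depth and take $n>d$. We have $\widehat{\mathcal{M}_n}\vDash\langle i\rangle\varphi$, since this is equivalent to $\mathcal{M}_n\vDash E\varphi$. Likewise $\widehat{\mathcal{N}_n}\nvDash\langle i\rangle\varphi$. Hence $\theta$ distinguishes $\widehat{\mathcal{M}_n}$ from $\widehat{\mathcal{N}_n}$, so $t(\theta)$, a CTL$+A$ formula of depth $d<n$, distinguishes $\mathcal{M}_n$ from $\mathcal{N}_n$. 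This contradicts the hypothesis.

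The main obstacle is the case $i\notin C$. There we need CTL$+A$ to be understood as closed under the universal path quantifier, that is, under negation of $E$-formulae. If CTL$+A$ only has $E\alpha$ with $\alpha\in A$, then $A\alpha$ becomes $\neg E\neg\alpha$, and $\neg\alpha$ might not lie in $A$. I would handle this by reading the hypothesis's CTL$+A$ as the one-agent instance of ${\sf ATL}+\mathcal{A}$ with ${\sf Ag}=\{i\}$. In that instance, $\langle\emptyset\rangle$ provides $A$, and this is exactly the logic the paper calls CTL$+A$, so the translation lands inside it with depth preserved.
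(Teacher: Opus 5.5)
Your proposal is correct and follows essentially the same route as the paper. It uses the same embedding of a transition system into a concurrent game structure in which all agents other than $i$ are dummies, and the same key fact that $\langle C\rangle$ then collapses to $E$ if $i\in C$ and to $A$ otherwise. Your explicit depth-preserving translation into CTL$+A$ (read, as the paper does, as the one-agent instance of ${\sf ATL}+\mathcal{A}$, so that $\langle\emptyset\rangle$ supplies $A$) is a more careful rendering of the paper's structural induction and its appeal to the standard depth-bounded inexpressibility lemma.
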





In \cite[Theorem 4.4]{10.1145/567067.567081}, the authors show that $CTL^+<ECTL$
by proving $E(GFp)$ is not expressible in $CTL^+$ using the assumption of \Cref{lem:express_transfer}, thus we have as a corollary:

\begin{corollary}
    \label{coro:atl_eatl}
    $\langle i\rangle GFp$ is not expressible in ${\sf ATL}^+$.
\end{corollary}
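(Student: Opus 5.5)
The plan is to apply \Cref{lem:express_transfer} with $A=\{X,U,R\}$ closed under Boolean combinations, so that ${\sf CTL}+A$ is ${\sf CTL}^+$, and with $\varphi = GFp$. We reuse the two families of rooted transition systems $\{\mathcal{M}_n\}$ and $\{\mathcal{N}_n\}$ constructed in \cite[Theorem 4.4]{10.1145/567067.567081}. That paper shows $E(GFp)$ is not expressible in ${\sf CTL}^+$, and its argument already has exactly the shape the lemma asks for. We need three facts about these families:
\begin{itemize}
\item $\mathcal{M}_n\vDash E\,GFp$ for every $n$;
\item $\mathcal{N}_n\nvDash E\,GFp$ for every $n$;
\item no ${\sf CTL}^+$ formula of modal depth less than $n$ distinguishes $\mathcal{M}_n$ from $\mathcal{N}_n$.
\end{itemize}

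Concretely, I would first recall the construction. $\mathcal{M}_n$ contains a path visiting $p$ infinitely often. $\mathcal{N}_n$ looks the same up to depth $n$, but every infinite path eventually stops seeing $p$, for instance because the $p$-visits are arranged along a chain of length depending on $n$. I would then check the first two facts directly; these are routine path arguments. The third fact is the substantive content of the cited theorem. I would take it as stated there, noting only that their indistinguishability is measured in nesting depth of path quantifiers, which matches the modal depth used in \Cref{lem:express_transfer}.

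With the hypotheses in place, \Cref{lem:express_transfer} yields an agent $i$ such that $\langle i\rangle GFp$ is not expressible in ${\sf ATL}+\mathcal{A}$. Here $\mathcal{A}$ is the Boolean combinations of $X$, $U$ and $R$, which is precisely ${\sf ATL}^+$.

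The main obstacle is only bookkeeping. One must confirm that the cited argument really establishes the depth-bounded indistinguishability, rather than using some different invariant such as a game or bisimulation-up-to argument. If it uses such an invariant, I would translate it: two states equivalent under the $n$-round ${\sf CTL}^+$ game satisfy the same formulas of depth less than $n$. The fact that the lemma fixes one agent $i$ is harmless, since by symmetry of agent names the choice is irrelevant.
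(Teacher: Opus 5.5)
Your proposal is correct and is exactly the paper's argument. The paper also obtains the corollary by taking the model families from \cite[Theorem 4.4]{10.1145/567067.567081}, which witness that $E(GFp)$ is not ${\sf CTL}^+$-expressible, and plugging them into \Cref{lem:express_transfer} with $\mathcal{A}$ the Boolean combinations of $X$, $U$, $R$. The paper likewise passes over your worry about whether that indistinguishability is stated by formula size or by nesting depth, and it is harmless: any fixed candidate ${\sf ATL}^+$ formula is bounded in both measures, so either version gives the inexpressibility conclusion.
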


To separate ${\sf EATL}$ and ${\sf ATL}+GF\land FG$, we can show the assumptions of \Cref{lem:express_transfer} are satisfied by adapting the proof of \cite[Lemma 2.8.4]{larou1994}, where it is shown that $E(GFp\land Gq)$ distinguishes two classes of models for which no ECTL formula can.
It turns out that $E(GFp\land FGq)$ also distinguishes the same two classes of models.

\begin{restatable}{proposition}{eatltogffg}
    \label{prop:eatl_gffg}
    $\langle i\rangle(GFp\wedge FGq)$ is not expressible in ${\sf EATL}$.
\end{restatable}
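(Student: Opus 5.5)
We will apply \Cref{lem:express_transfer} with $A=\{GF,FG\}$ and $\varphi = GFp\wedge FGq$. The task is to exhibit two families of rooted transition systems $\{\mathcal{M}_n\}$ and $\{\mathcal{N}_n\}$ with $\mathcal{M}_n\vDash E(GFp\wedge FGq)$ and $\mathcal{N}_n\nvDash E(GFp\wedge FGq)$ for every $n$. In addition, $\mathcal{M}_n$ and $\mathcal{N}_n$ must agree on every ${\sf ECTL}$ formula of modal depth less than $n$. We follow the construction in \cite[Lemma 2.8.4]{larou1994}, which separates $E(GFp\wedge Gq)$ from ${\sf ECTL}$.

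\textbf{Construction.} We take a chain of states $s_n,\dots,s_0$ labelled $q$. Each $s_k$ has an exit to a $q$-labelled $p$-cycle, and at level $k$ there is a way of returning that is arranged so that a single path either visits $p$ infinitely often or stays in $q$ forever, but not both. In $\mathcal{N}_n$, every path that sees $p$ infinitely often must also pass through $\neg q$ states infinitely often: each loop through $p$ is forced through an auxiliary $\neg q$ state. In $\mathcal{M}_n$, one extra loop at the bottom level $s_0$ sees $p$ without leaving $q$.

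\textbf{Steps.}
\begin{enumerate}
\item Check the two base facts. $\mathcal{M}_n$ satisfies $E(GFp\wedge FGq)$, witnessed by a lasso ending in the good loop. In $\mathcal{N}_n$, any path with infinitely many $p$ also has infinitely many $\neg q$, so $FGq$ fails. The argument for $E(GFp\wedge Gq)$ in \cite{larou1994} adapts directly, because the only thing that breaks the conjunction in $\mathcal{N}_n$ is the infinitely recurring $\neg q$ states.
\item Prove, by induction on the modal depth $d<n$, that the states $s_k$ in $\mathcal{M}_n$ and in $\mathcal{N}_n$ with $k\ge d$ satisfy the same ${\sf ECTL}$ formulae of depth $d$. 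The cases $EX$, $EU$ and $ER$ are handled as in the bisimulation-up-to-depth argument of the original lemma.
\item Handle the modalities $E\,GF\psi$ and $E\,FG\psi$ separately, where $\psi$ has depth less than $d$. These are single-argument, prefix-independent modalities. By the induction hypothesis, the deeper states on the two sides are indistinguishable by $\psi$, so a witness path on one side that eventually stays in a region where $\psi$ holds can be mimicked on the other side. The mimicking path routes through the corresponding level, which exists in both models because $k\ge d$.
\end{enumerate}

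The difference between the two models only appears through the \emph{conjunction} of a recurrence requirement and a persistence requirement along a single path. ${\sf ECTL}$ path formulae are atomic temporal modalities and cannot express such a conjunction.

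\textbf{Main obstacle.} The hard step is Step 3: showing that $E\,FG\psi$ and $E\,GF\psi$ cannot detect the extra loop at $s_0$. The plan is to show that the set of states on which any depth-$d$ formula $\psi$ holds is a union of the ``level classes''. Consequently, whenever one side has a cycle staying inside $\psi$, or visiting $\psi$ infinitely often, the other side has a corresponding cycle with the same property. Once this holds, \Cref{lem:express_transfer} transfers the result to ${\sf ATL}$ and gives the proposition.
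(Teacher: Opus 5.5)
\textbf{Genuine gap: the model is not Laroussinie's, and indistinguishability is not established.} Your top-level strategy is the paper's: apply \Cref{lem:express_transfer} with $A=\{GF,FG\}$ to Laroussinie's model families. But you do not actually use the models of \cite[Lemma 2.8.4]{larou1994}. For the model you sketch, the indistinguishability argument (Steps 2--3, your ``main obstacle'') is left as a plan, and it would not go through as set up.

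\emph{The description is inconsistent.} Taken literally, every $s_k$ in $\mathcal{N}_n$ can exit to a $q$-labelled cycle through $p$. Staying on that cycle already gives $\mathcal{N}_n\vDash E(GFp\wedge FGq)$, which contradicts your Step 1.

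\emph{The induction invariant does not fit the model.} You put the extra good loop at the bottom $s_0$, and you prove agreement level by level (``states $s_k$ with $k\ge d$ agree at depth $d$''). A witness for $E(\psi_1 U\psi_2)$ from $s_k$ can run down to the bottom gadget and realise $\psi_2$ there, at a depth cost independent of $n$. Your induction hypothesis says nothing about level-$0$ states, which are exactly where the two models differ. You would therefore have to match the bottom gadget of $\mathcal{M}_n$ against higher-level states of $\mathcal{N}_n$, and ``routing through the corresponding level'' does not do that.

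\emph{Laroussinie's construction is different.} In his models (the paper's figure), the difference sits at the root. $M_{i+1}$ and $N_{i+1}$ hang the same copy of $N_i$ below their top gadgets. They differ only in the edge $c_{i+1}\to a_{i+1}$, which has no counterpart $\gamma_{i+1}\to\alpha_{i+1}$.

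\textbf{The missing observation: Steps 2--3 are unnecessary.} Indistinguishability by bounded-depth ${\sf ECTL}$ formulae is a property of the two model families alone, not of the formula that separates them. So take Laroussinie's $M_i,N_i$ verbatim, inherit the indistinguishability from his lemma, and only check separation (with $p:=a$, $q:=b$).
\begin{itemize}
\item In $M_i$, the path $a_ic_ia_ic_i\cdots$ is always $b$ and visits $a$ infinitely often. Hence $M_i,a_i\vDash E(GFa\wedge FGb)$.
\item In $N_i$, edges between levels only go downwards. So every infinite path eventually stays within one level $\{\alpha_j,\beta_j,\gamma_j\}$. Inside a level, the only way back into $\alpha_j$ is through $\beta_j\vDash\neg b$. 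Hence $GFa$ forces infinitely many $\neg b$, and $N_i,\alpha_i\nvDash E(GFa\wedge FGb)$.
\end{itemize}
This is exactly the paper's proof: it reuses the pair separating $E(GFp\wedge Gq)$ from ${\sf ECTL}$ and notes that $E(GFp\wedge FGq)$ separates it too.
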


It easily follows from \Cref{prop:mutualexpress}, that ${\sf ATL}+GU$ is as at least as
expressive as ${\sf ATL}+GF\land FG$. To show it is strictly more, we can construct two classes of rooted transitions systems which satisfy \Cref{lem:express_transfer} by hand.

\begin{restatable}{proposition}{gffgtogu}
    \label{prop:gffgtogu}
    $\langle i\rangle(GpUq)$ is not expressible in ${\sf ATL} + GF\wedge FG$.
\end{restatable}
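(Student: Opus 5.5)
We will apply \Cref{lem:express_transfer} with $A=\{GF\wedge FG\}$ and $\varphi = G(pUq)$. This reduces the claim to a one-agent statement: we must build two families of rooted transition systems $\{\mathcal{M}_n\}$ and $\{\mathcal{N}_n\}$ such that $\mathcal{M}_n\vDash E\,G(pUq)$ and $\mathcal{N}_n\nvDash E\,G(pUq)$. In addition, no ${\sf CTL}+GF\wedge FG$ formula of modal depth less than $n$ may distinguish them.

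The idea is that $G(pUq)$ combines a liveness component ($q$ infinitely often) with a safety component (every position before the next $q$ satisfies $p$). The formula $E(GF\psi_1\wedge FG\psi_2)$ only sees the eventual behaviour of a path, and the ${\sf CTL}$ operators only see bounded-depth or until-structure. So the models should differ only in a \emph{safety violation that can occur arbitrarily late}, interleaved with the cycles producing infinitely many $q$'s.

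Concretely, I would take a chain of $n$ gadgets $g_1,\dots,g_n$. Each gadget offers a loop through a $q$-state, and an exit to the next gadget through a $p$-state. In $\mathcal{M}_n$ every state on every path is labelled $p$ or $q$, so a path visiting $q$ infinitely often satisfies $G(pUq)$. In $\mathcal{N}_n$ the same chain is used, but one gadget (or the states reachable in every $q$-revisiting pattern) contains a state labelled neither $p$ nor $q$. This state must be placed so that every path visiting $q$ infinitely often passes through it infinitely often, or at least once before the eventual cycle. Meanwhile, paths avoiding it eventually stay in a region where $q$ is not visited. The key is to arrange that, from the viewpoint of $E(GF\psi_1\wedge FG\psi_2)$, the set of ``eventual cycles'' available is identical in both models up to depth-$n$ equivalent states.

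The proof of indistinguishability proceeds by induction on modal depth $k<n$. We show that states $s_i$ in $\mathcal{M}_n$ and $s'_i$ in $\mathcal{N}_n$ at chain position $i$ with $i>k$ (or with suitable index offset) satisfy the same formulas of depth $\le k$. The ${\sf CTL}$ cases $EX$, $EU$, $AU$ (or $ER$) are handled as in the standard \cite[Lemma 10.3.2]{Demri_Goranko_Lange_2016}-style arguments, matching successor positions by shifting indices. The new case $E(GF\psi_1\wedge FG\psi_2)$ requires showing that a witness path in one model, whose tail lies in some strongly connected part, has a counterpart in the other model. Its tail runs through states that are depth-$(k-1)$ equivalent, so $\psi_1,\psi_2$ hold along it by induction.

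The main obstacle is this last case. The witness path's infinite suffix may route through the distinguishing non-$p$ state in $\mathcal{N}_n$ (or avoid it in $\mathcal{M}_n$), and we must argue this is invisible to $GF\wedge FG$ at bounded depth. I would first try making the offending state in $\mathcal{N}_n$ bisimilar-at-depth-$(k-1)$ to a $p$-state of $\mathcal{M}_n$ in every respect except the atomic proposition. This forces any bounded-depth formula to look only finitely deep, while the violation sits beyond depth $n$ along the unique route to further $q$'s. If that fails, I would adapt the construction from \cite[Lemma 2.8.4]{larou1994} used for \Cref{prop:eatl_gffg}, replacing its $Gq$ safety part by the ``$p$ until next $q$'' pattern.
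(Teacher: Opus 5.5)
\textbf{There is a genuine gap: the reduction is right, but the models you sketch violate the lemma's hypothesis, and you do not supply ones that satisfy it.}

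Your reduction via \Cref{lem:express_transfer} to ${\sf CTL}+GF\wedge FG$, followed by an induction on modal depth, is the same frame the paper uses. The problems are with the models.
\begin{itemize}
\item If every reachable state of $\mathcal{M}_n$ is labelled $p$ or $q$, while $\mathcal{N}_n$ has a reachable state labelled $\neg p\wedge\neg q$, then the depth-one formula $EF(\neg p\wedge\neg q)$ already separates them.
\item Your first placement option (every $q$-recurrent path of $\mathcal{N}_n$ meets the bad state infinitely often) is separated by $E(GFq\wedge FG(p\vee q))$, which lies in the fragment.
\item Placing the violation ``beyond depth $n$'' confuses modal (nesting) depth with distance along a path. $EF$, $EU$ and $EG$ inspect arbitrarily long paths at nesting depth one, so distance buys nothing.
\item ``Bisimilar at depth $k-1$ except for the atomic proposition'' is not a meaningful requirement, since agreement on propositions is part of bisimilarity at every depth.
\end{itemize}

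\textbf{The missing idea is what actually makes the two sides indistinguishable.} The paper uses a single model with states $s_i$ and $s'_i$, both labelled $p\wedge\neg q$.
\begin{itemize}
\item From $s_i$ there are loops $\sigma_i$ on which $pUq$ fails, an edge to a $\neg p\wedge\neg q$ state $l$ leading to $s'_i$, and $w_i$-paths (satisfying $pUq$) down to $s_{i-1}$. The state $s_0$ is a $q$-free self-loop.
\item From $s'_i$ there are short loops $\sigma'_i$ realising \emph{every} labelling, and the same $w_i$-paths to $s_{i-1}$.
\end{itemize}
This has three consequences.
\begin{itemize}
\item Both sides see $\neg p\wedge\neg q$ states.
\item Every trace from $s'_i$ is a suffix of a trace from $s_i$. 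So the prefix-independent modality $E(GF\psi_1\wedge FG\psi_2)$ is the \emph{easy} case, not the main obstacle you identify.
\item The only difference is this: a path from $s_i$ that stays inside $p\vee q$ can visit $q$ only boundedly often, by descending the chain, and the bound grows with $i$. By contrast, $s'_i$ has such a path with infinitely many $q$'s.
\end{itemize}
Detecting that difference requires nested untils that count $q$-visits along paths staying in $p\vee q$. Hence a formula of size $n$ cannot separate $s_{n_1}$ from $s'_{n_2}$ when $n_1,n_2\ge n$. The real work is therefore in the until cases. There one matches segments through $\sigma'_i$ with segments in $\sigma_i$ or $w_i$ that have the same subformula profile.

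Without a construction with these features, the induction you describe has nothing to run on. Your fallback of ``adapting'' \cite{larou1994} does not name one.
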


We can easily see ${\sf EATL}^+$ is at least as expressive as ${\sf ATL}+GU$, by using \Cref{prop:mutualexpress} and the fact that $GF\varphi\land FG\psi\land G\psi$ is a Boolean combination of ${\sf EATL}$ path formulas.
To get that ${\sf EATL}^+$ is strictly more expressive, we can show the models given in the proof of $ECTL<ECTL^+$ in \cite[Theorem 4.3]{10.1145/567067.567081} cannot be distinguished by ${\sf ATL}+GU$ formulas, and apply \Cref{lem:express_transfer}.

\begin{restatable}{proposition}{ectlplusgu}
    \label{prop:ectlplusgu}
    $\langle i\rangle(GFp\wedge GFq)$ is not expressible in ${\sf ATL} + GU$.
\end{restatable}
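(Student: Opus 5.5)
The plan is to apply \Cref{lem:express_transfer} with $\varphi = GFp\wedge GFq$ and $A=\{GU\}$. Conveniently, \Cref{prop:mutualexpress} lets us instead work with ${\sf CTL}+GF\wedge FG\wedge G$, which is a syntactic fragment of ${\sf ECTL}^+$ path formulas; but the natural route is to reuse the two families of models $\{\mathcal M_n\}$, $\{\mathcal N_n\}$ from the proof of ${\sf ECTL}<{\sf ECTL}^+$ in \cite[Theorem 4.3]{10.1145/567067.567081}. There, $\mathcal M_n\vDash E(GFp\wedge GFq)$ and $\mathcal N_n\nvDash E(GFp\wedge GFq)$, and the two families are built so that any single $GF$/$FG$ (or $X$, $U$, $R$) path property realisable from a state of one model is realisable from the corresponding state of the other. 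Concretely, I expect $\mathcal M_n$ to contain a cycle alternating through a $p$-state and a $q$-state, and $\mathcal N_n$ to offer separate loops through $p$ and through $q$ with no path visiting both infinitely often. These loops would be nested to depth $n$ so that bounded-depth formulas cannot detect the difference. So the first step is to recall these models precisely and fix a correspondence (a bisimulation-like relation indexed by remaining modal depth) between states of $\mathcal M_n$ and $\mathcal N_n$.

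The second step is an induction on modal depth $k<n$. The claim is that related states satisfy the same ${\sf CTL}+GU$ formulas of depth at most $k$, where the depth-$k$ relation shrinks as $k$ grows. Boolean cases are trivial. The cases $EX$, $EU$ and $ER$ are handled exactly as in \cite{10.1145/567067.567081}, since those are already ${\sf ECTL}$ modalities. The new case is $E(G(\psi_1 U\psi_2))$ where $\psi_1,\psi_2$ have depth $<k$, so by induction they have the same truth value at related states. I would use the equivalence from \Cref{prop:mutualexpress}, $E(G\psi_1U\psi_2)\equiv E(GF\psi_2\wedge FG\top\wedge G(\psi_1\vee\psi_2))$, to reduce to showing that a path which visits $\psi_2$ infinitely often while staying inside $\psi_1\vee\psi_2$ exists in one model iff it does in the other. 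The key point is that this asks for only \emph{one} recurrence condition together with a safety condition. The distinguishing feature of the models, namely the ability to satisfy two recurrence conditions simultaneously, is therefore never probed. Given a witnessing path in $\mathcal M_n$, I will map it to a path in $\mathcal N_n$ that traverses a single loop containing $\psi_2$-states, and argue that all states on that loop are labelled within $\psi_1\vee\psi_2$ by the induction hypothesis; the converse direction is symmetric.

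The main obstacle is this $GU$ case, in particular the safety component $G(\psi_1\vee\psi_2)$. It forces the whole path, including the approach to the recurrent loop, to remain in a set defined by lower-depth formulas. I must check that the loops in $\mathcal N_n$ used to simulate the $\mathcal M_n$-path are entirely contained in states whose depth-$(k-1)$ type matches the states visited in $\mathcal M_n$. If the original models do not have this property, I would modify them by padding with duplicated states, so that every loop in $\mathcal N_n$ through a $p$- or $q$-state has a twin in $\mathcal M_n$ with identical depth-bounded type. Once the induction is complete, \Cref{lem:express_transfer} yields that $\langle i\rangle(GFp\wedge GFq)$ is not expressible in ${\sf ATL}+GU$.
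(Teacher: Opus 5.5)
Your plan is the same as the paper's proof: reuse the Emerson--Halpern models from \cite[Theorem 4.3]{10.1145/567067.567081}, invoke \Cref{lem:express_transfer}, and add one inductive case for $E(G\psi_1 U\psi_2)$, in which a witness is simulated by a path that reaches a $\psi_2$-state through $(\psi_1\vee\psi_2)$-states and then stays on a single loop. What settles your ``main obstacle'' (so no padding is needed) is that every state of those models has a self-loop: the witness in the model without the alternating cycle is just the reaching path followed by $s^\omega$ at a $\psi_2$-state $s$, and the case reduces to the $EU$ case.
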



The formula $EG(p\lor Xp)$ was shown not to be expressible in $ECTL^+$ in \cite[Theorem 4.2]{10.1145/567067.567081}.
Again, we can use two classes of models to apply \Cref{lem:express_transfer},
thus we obtain:
\begin{corollary}
    \label{coro:gpxp_eatl}
    $\langle i\rangle G(p\lor Xp)$ is not expressible in ${\sf EATL}^+$
\end{corollary}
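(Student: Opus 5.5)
Since $\langle i\rangle$ in a one-agent structure behaves as $E$, we reduce the corollary to the {\sf CTL}-level statement that $EG(p\lor Xp)$ is not expressible in ${\sf ECTL}^+$, and then transfer it to ${\sf ATL}$ via \Cref{lem:express_transfer}. We take $A$ to be the ${\sf EATL}^+$ path modalities, namely Boolean combinations of $X$, $U$, $R$, $GF$ and $FG$, so that ${\sf CTL}+A$ is exactly ${\sf ECTL}^+$. We take $\varphi = G(p\lor Xp)$. What remains is to exhibit two families of rooted transition systems $\{\mathcal{M}_n\}$ and $\{\mathcal{N}_n\}$ satisfying the hypotheses of the lemma.

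For the families we reuse the models from the proof of \cite[Theorem 4.2]{10.1145/567067.567081}, which separate $EG(p\lor Xp)$ from ${\sf ECTL}^+$. In $\mathcal{M}_n$ there is a path on which $p$ never fails at two consecutive positions. In $\mathcal{N}_n$, every path eventually contains two consecutive $\neg p$ states, but the point where this happens is pushed to a depth depending on $n$ through a chain of nested choices. We then check $\mathcal{M}_n\vDash EG(p\lor Xp)$ and $\mathcal{N}_n\nvDash EG(p\lor Xp)$ directly from the construction. We also verify that both families are finite, serial, rooted Kripke structures over $\{p\}$, as \Cref{lem:express_transfer} requires.

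The main obstacle is the indistinguishability hypothesis: no ${\sf ECTL}^+$ formula of modal depth less than $n$ separates $\mathcal{M}_n$ from $\mathcal{N}_n$. The cited theorem proves inexpressibility, and it is not immediate that its argument is organised by modal depth as the lemma demands. We would first check whether the original proof is an induction on formula size or depth, showing that corresponding states agree on all formulas of bounded depth. If it is, we restate it in terms of modal depth. Otherwise we would prove the indistinguishability ourselves by induction on depth $k$, pairing states of $\mathcal{M}_n$ and $\mathcal{N}_n$ at distance at least $n-k$ from the distinguishing region and showing that they agree on all depth-$k$ formulas. The inductive step has to handle an $E$ applied to a Boolean combination of $X,U,R,GF,FG$ over depth-$(k-1)$ subformulas. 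Here we would use that, for every path in one model, there is a path in the other whose sequence of depth-$(k-1)$ types has the same first letter, the same $U/R$-relevant prefix behaviour, and the same sets of infinitely-often and eventually-always types. The difference between the models is only a consecutive-$\neg p$ pattern, which $GF/FG$ of state formulas cannot detect.

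Once this is in place, \Cref{lem:express_transfer} yields that $\langle i\rangle G(p\lor Xp)$ is not expressible in ${\sf EATL}^+$. Combined with \Cref{prop:pos_formulae} and the fact that ${\sf ATL}^*$ contains this formula, this also gives the strictness ${\sf EATL}^+<{\sf ATL}^*$.
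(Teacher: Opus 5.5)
Your proposal follows the paper's argument: reduce to the one-agent fact that $EG(p\lor Xp)$ is not ${\sf ECTL}^+$-expressible, reuse the separating models of \cite[Theorem 4.2]{10.1145/567067.567081}, and transfer via \Cref{lem:express_transfer} with $A$ the Boolean combinations of $X,U,R,GF,FG$. The paper does not address your concern about modal depth versus the cited proof's induction measure, but it is harmless: the proof of \Cref{lem:express_transfer} works for any measure (e.g.\ formula size) that is preserved up to a constant factor by the translation replacing $\langle C\rangle$ with $\exists$ or $\forall$, so your fallback induction is not needed.
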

This corollary tells us two things: firstly that ${\sf ATL}^*$ is strictly more expressive than ${\sf EATL}^+$,
and secondly that not all positional properties are expressible in ${\sf EATL}^+$ (we saw in \Cref{prop:pos_formulae} that $G(p\lor Xp)$ is a positional property).

    


\section{Conclusion}

 We have shown that positional $\omega$-regular languages are $\omega$-star-free, and provided a language-theoretic characterisation of these languages to complement the automata-theoretic characterisation of \cite{theoretics:14945}. We have applied these results to problems in temporal logic. Firstly, a negative result showing we cannot find a property specification language closed under Boolean operations which expresses only positional and prefix-independent properties. Secondly, we have presented various fragments of {\sf ATL}$^*$
 with positional path formulae (and so more tractable model-checking), and shown where they fall expressively in the hierarchy of {\sf ATL}$^*$ fragments.
 
 We present directions for future work. Further work is required to refine exactly where positional $\omega$-regular languages are situated within the various hierarchies of sub-$\omega$-regular languages. We also do not know whether there are $\omega$-regular 1P non-uniform positional languages which are not $\omega$-star-free. On another note, totally-ordered residuals are a key requirement for positionality, but they are not entirely understood. Some work has been done for the finite case in \cite{refId0}, where they are termed `Ferrers languages'. Due to \Cref{cor:edgeunion}, it would be good to know exactly when the union of two languages has totally ordered residuals. It would also be interesting to know whether there are any non-trivial varieties consisting entirely of positional languages. Additionally, in the domain of finite languages, there are variety theorems for several restricted combinations of Boolean operations, e.g. disjunctive varieties \cite{Polak2004}, basic varieties \cite{10.1007/978-3-030-68195-1_1}. If these can be extended to $\omega$-languages, then we can find classes of positional languages which are instances of these weaker types of variety, which may provide canonical algebraic identities to classify these languages. Finally, since {\sf ATL}$^+$ is expressible in {\sf ATL} \cite{10.1007/3-540-47813-2_20}, there may be an analogue of \Cref{prop:rabin_safety} which holds for {\sf ATL}$^*$ through a similar construction.

\bibliography{lipics-v2021-sample-article}

\newpage
\appendix
\section{Omitted Proofs}
\label{appendix:proofs}

\starfree*
\begin{proof}
    $(\implies)$ Suppose $L$ is $\omega$-star-free. This means it is recognised by a finite aperiodic monoid. That is, a monoid $M$ where for each $x\in M$, it is the case that for some $n$, $x^n=x^{n+m}$ for all $m$. Call the monoid $M$ and the recognising homomorphism $h$. First, we show counter-freeness. Take some $uvw\in \Sigma^\omega$. We know that $M$ is aperiodic, so $h(v)^n = h(v)^{n+m}$ for some $n$ and all $m$. Since $h$ is a monoid homomorphism, $h(v^n) = h(v^{n+m})$. Therefore, $uv^nw\sim _h uv^{n+m}w$ so $uv^nw\in L$ iff $uv^{n+m}w\in L$. We now have an $n$ for each $u,v,w$ at which membership of $L$ no longer changes, but to show the language is counter free we need to show there is single $n$ that works across every word. Since $M$ is finite, we can take the least $n$ such that $x^n=x^{n+1}$ for all $x\in M$. Taking this $n$, we know that there exists an $n$ such that $uv^nw\in L$ iff $uv^{n+m}w\in L$ as required. 

    The argument is similar to show $L$ is $\omega$-counter-free. We know that given some $u,v,x,y\in \Sigma^*$, taking the $n$ for which $h(x^n) = h(x^{n+m})$, that $u(vx^ny)^\omega \sim_h u(vx^{n+m}y)^\omega$, and therefore $u(vx^ny)^\omega\in L$ iff $u(vx^{n+m}y)^\omega\in L$. Again, we take the least $n$ such that $x^n = x^{n+1}$ for all $x\in M$ and we have that there exists an $n$ such that for all $u,v,x,y\in \Sigma^*$ it is the case that $u(vx^ny)^\omega\in L$ iff $u(vx^{n+m}y)^\omega\in L$ as required.

    $(\impliedby)$ Given an $\omega$-regular $L$, let us suppose it also is counter-free and $\omega$-counter-free. Since it is $\omega$-regular, then by \cite{ARNOLD1985333} the following syntactic congruence is finite and saturates $L$:\\~

    $w\approx w'$ iff for all $u,v_1,v_2\in \Sigma^*$, $v_1wv_2u^\omega \in L$ iff $v_1w'v_2u^\omega \in L$ and $u(v_1wv_2)^\omega\in L$ iff $u(v_1 w' v_2)^\omega$\\~

    We take the finite monoid $\Sigma^*/\approx$ where multiplication is defined as $[s] \cdot [t] = [s\cdot t]$. It can be seen that the function $[-]:\Sigma^*\rightarrow \Sigma^*/\approx$ that maps $x\mapsto [x]$ is a monoid homomorphism that recognises $L$. We will show that $\Sigma^*/\approx$ is aperiodic, giving us a finite aperiodic monoid that recognises $L$.

    Take some arbitrary $[x] \in \Sigma^*/\approx$. We know since $L$ is CF and $\omega$-CF there exists an $n_1$ such that for any $u,v\in \Sigma^*$,$w \in \Sigma^\omega$, it is the case that $uv^{n_1}w\in L$ iff $uv^{n_1+m}w\in L$ for all $m$. We also know there exists an $n_2$ such that for any $u,v,x,y\in \Sigma^*$, $u(vx^{n_2}y)^\omega\in L$ iff $u(vx^{n_2+m}y)^\omega\in L$ for all $m$. Take the largest of $n_1,n_2$ and call this $n$. We will show $[x]^n = [x]^{n+m}$ for all $m$, for which it is sufficient to show $[x]^n = [x]^{n+1}$. We know $[x]^n = [x^n]$ and $[x]^{n+1}=[x^{n+1}]$, so we just need to show that $x^n\approx x^{n+1}$.

    Take some arbitrary $u,v_1,v_2\in \Sigma^*$. Suppose $v_1 x^n v_2 u^\omega\in L$. By CF, this is the case iff $v_1x^{n+1}v_2u^\omega\in L$. Suppose $u(v_1 x^n v_2)^\omega \in L$. By $\omega$-CF, this is the case iff $u(v_1 x^{n+1} v_2)^\omega\in L$. Therefore, $x^n\approx x^{n+1}$ as required. So, $L$ is recognised by a finite aperiodic monoid and is therefore $\omega$-star-free.
\end{proof}

\residualcounterfree*
\begin{proof}
    Suppose we have a language $L\subseteq \Sigma^\omega$ which has a finite, totally ordered set of residuals $R(L)$. We will first show that for any choice of $u,v\in \Sigma^*$, $w\in \Sigma^\omega$ the word $uv^iw$ can only `switch' membership of $L$ at most once as $i$ increases from 0. In other words, if $uv^0w=uw\in L$ and there is some $uv^iw\notin L$, then $uv^jw\in L$ for $j< i$ and $uv^kw \notin L$ for $k \geq i$, and vice versa. To show this, first suppose this was not the case. We begin with the case where $uw\in L$. We would have some $i,j\in \mathbb{N}$ with $0<i<j$ such that $uv^iw\notin L$ and $uv^jw\in L$. If we take $i$ and $j$ to be the least such numbers where the language switches membership, we know that for $0\leq p < i$, $uv^pw\in L$, and for $i \leq q < j$, $uv^qw\notin L$. Let us look at the residuals for the words $uv^{(i-1)}$ and $uv^{(j-1)}$. We know that $w\in R_L(uv^{(i-1)})$ and $w\notin R_L(uv^{(j-1)})$, as $uv^{(i-1)}w\in L$ but $uv^{(j-1)}w\notin L$. So, $R_L(uv^{(i-1)})\not\subseteq R_L(uv^{(j-1)})$. We know that $vw\notin R_L(uv^{(i-1)})$ and $vw\in R_L(uv^{(j-1)})$, as $uv^{i}w\notin L$ but $uv^{j}w\in L$. Therefore, $R_L(uv^{(j-1)})\not\subseteq R_L(uv^{(i-1)})$. So, the set of residuals cannot be totally ordered, contradicting our assumption. The argument for the case where $uw \notin L$ is analogous. Therefore, it must be that for any choice of $u,v,w$, the word $uv^iw$ switches membership at most once as $i$ increases from 0.

    Given this, we will assume that $L$ is not counter-free. Therefore, for all $n$, there exists some $u,v,w,m\in \mathbb{N}$ such that $uv^nw\in L$ iff $uv^{n+m} w\notin L$. So for any $n$, we can find a word with substring $v$ where the switch in membership happens at some point after $v$ is iterated $n$ times. Let us call the point at which the membership switches $m_n$ for each $n$. Then we have the following, where $m_n > n$:

    $$n=0, u_0w_0\in L, u_0v_{0}^{m_0}w_0\notin L$$
    $$n=1, u_1v_{1}w_1\notin L, u_1v_{1}^{m_1}w_1\in L$$
    $$n=2, u_2v_{2}^2w_2\notin L, u_2v_{2}^{m_2}w_2\in L$$
    $$...$$

    Although this example is just illustrative; the switch in membership may happen in the other direction for any $n$. 
    
    Note that for a given $n$, we generate at least $n$ different residuals. To see this, suppose for some $n=i$, $uv^i w \in L$ and $u v^{m_i} w\notin L$. We know membership of the language can only switch once, so $uv^jw \in L$ for $j < m_i$ and $uv^jw \notin L$ for $j \geq m_i$. Then we can look at the residuals for each $uv^k$ where $k$ ranges from $0\leq k \leq m_i$. We can see that $uv^{k+l}w\in L$ iff $k + l < m_i$. This means the residual $R_L(uv^k)$ must be distinct from any residual $R_L(uv^{k-j})$ for $1\leq j \leq k$, as the latter contains $v^{m_i - k}w$ whereas the former does not, as this would take it out of the language. So, each of the residuals $R_L(uv^0)$ up to $R_L(uv^k)$ are distinct sets. We can argue symmetrically for the case when $uv^iw\notin L$ and $uv^{m_i}w\in L$.

    Since the set of residuals $R(L)$ must at least contain each of these residuals for each $n$, it must be infinite. If it were finite, say of size $k$, then it could not contain all of the distinct residuals for the case of $n=k+1$. This contradicts our assumption that $R(L)$ is finite, so $L$ must in fact be counter-free.
\end{proof}

\regposcounterfree*
\begin{proof}
    Take a language $L\subseteq \Sigma^\omega$ which is $\omega$-regular and positional. Take some arbitrary $u,v,w,x\in \Sigma^*$. We will first show that a word $u(vx^nw)^\omega$ cannot be out of $L$ at some point $n=i$, in the language at some point $n=j$ where $j>i$, and again out of the language at some point $n=k$ where $k>j$. This means that for every word $u(vx^nw)^\omega$, there is a finite $n$ at which increasing $n$ no longer changes membership of the language. First, for some $u,v,w,x\in \Sigma^*$ take $i$ to be some value where $u(vx^iw)^\omega\notin L$, take $j$ to be the first value such that $j>i$ and $u(vx^jw)^\omega\in L$, and take $k$ to be the first value such that $k>j$ and $u(vx^kw)^\omega\notin L$. Let us construct the following 2-player game with $L$ as a winning condition, where P1 controls the square nodes:
\begin{center}
    \begin{tikzpicture}
        \node[rectangle, draw] (n1) {};
        \node[rectangle,draw,right=2cm of n1] (n2) {};
        \node[circle, draw, right=2cm of n2] (n3) {};
        \node[rectangle,draw,right=2cm of n3] (n4) {};
        \node[rectangle,draw,right=2cm of n4] (n5) {}; 
        \node[right=0.6cm of n4] (dots) {$\ldots$}; 
        \node[right=0.6cm of n3] (dots1) {$\ldots$}; 

        \draw (n1) -> (n2) node [midway, above, fill=white] {$u$};
        \draw (n2) -> (n3) node [midway, above, fill=white] {$v$};

        \draw[->] (n3) to[bend left]
          node[midway, above, fill=white] {$x^{j-1}$}
          (n4);
        
        \draw[->] (n3) to[bend right]
          node[midway, above] {$x^{k-1}$}
          (n4);
        
        \draw[->] (n4) to[bend left]
          node[midway, above, fill=white] {$\epsilon$}
          (n5);

        \draw[->] (n4) to[bend right]
          node[midway, above] {$x^{k-j}$}
          (n5);

        \draw[->] (n4) to
          node[midway, above] {$x$}
          (n5);
        
        \draw[->] (n5.south) to[bend left]
          node[midway, above, fill=white] {$w$}
          (n2.south);
    \end{tikzpicture}
\end{center}
    Note that P1 has a memoryful strategy which guarantees all traces are in $L$, by always aiming for a total of $x^{k-1}$. However, no choice of positional strategy can guarantee that P1 wins, as P2 can always force either $x^{j-1}$. or $x^k$ However, this is contradictory as we know by assumption that $L$ is positional - so it must be the case there is always some value $n$ after which $u(vx^nw)^\omega$ does not change membership of $L$.

    Since $L$ is $\omega$-regular, we can use the syntactic congruence $\approx$ of \cite{ARNOLD1985333} to obtain a finite monoid that recognises $L$. Suppose this monoid is of size $k$. For each element $[x]\in M$, it must be that there is $[x]^n = [x]^m$ for some $n\leq k$, $n\leq m\leq k$. Furthermore, $[x]^n = [x]^{n+i(m-n)}$ for all $i$. So, iterating $[x]$ eventually gives us a cycle of period $m-n$. Since $[x]^n = [x]^m = [x^n] = [x^m]$, this also means that $x^n\approx x^{n+i(m-n)}$, so for any $u,v,w\in \Sigma^*$, $u(vx^nw)^\omega\in L$ iff $u(vx^{n+i(m-n)}w)^\omega\in L$.

    We will show that for any word of the form $u(vx^iw)^\omega$, the $i$ after which it no longer switches from being in $L$ to outside of $L$ or vice versa is bounded by $k=|\Sigma^*/\approx| + 1$. Let $n,m$ be the points at which $[x]^n=[x]^m$ as detailed above. Suppose the $i$ at which the word switches occurs on $i=(n + j(m-n))$ for some $j$. Since, by the above, $u(vx^nw)^\omega\in L$ iff $u(vx^{n+i(m-n)}w)^\omega\in L$, the switch would have to happen at $u(vx^nw)^\omega$, and $n < k$. Suppose the $i$ at which the final switch occurs at some $n+ j(m-n)<i<n+ (j+1)(m-n)$ for some $j>0$. Either $u(vw^{(n+j(m-n))}x)^\omega$ (and therefore also $u(vw^{(n+(j+1)(m-n))}x)^\omega$) is in $L$ and $u(vw^{i}x)^\omega$ is not in $L$ or vice versa. However, as $i$ is the value of the final switch, $u(vw^{(n+(j+1)(m-n))}x)^\omega$ must be in $L$ iff $u(vw^{i}x)^\omega$ is in $L$, which contradicts the previous statement. So, it must be that if the word switches membership it occurs at $n$ or earlier. Therefore, the switch must occur before $k$, so $k$ is such that for any $u,v,w,x\in \Sigma^*$, $u(vw^kx)^\omega \in L$ iff $u(vw^{k+l}x)^\omega\in L$ for all $l$, as required for $\omega$-CF.
\end{proof}

\finiteunion*
\begin{proof}
    Given a finite set $\mathbf{L}$ of such languages, take the union $\cup \mathbf{L}$. Take any trace $uvw^\omega\in \cup\mathbf{L}$; this trace must also be in some $\omega$-regular non-uniform 1P positional language $L\in \mathbf{L}$. Therefore either $uv^\omega \in L$ or $uwx^\omega\in L$, so one of these is in $\cup\mathbf{L}$. Similarly for any trace $u(vwx)^\omega\in \cup\mathbf{L}$. Therefore $\cup \mathbf{L}$ meets both conditions for positionality, and as a finite union of $\omega$-regular languages must also be $\omega$-regular.
\end{proof}

\nonuniformpositional*
\begin{proof}
    ($\impliedby$) 
    Suppose we have a winning strategy from a given initial state $s$ over a 1P arena $(V,\delta)$ with winning condition $L$. Since $L$ is $\omega$-regular, we have a finite-memory winning strategy with memory states $M$. Let us consider the play beginning at $s$ generated by this strategy. There are at most $|V|\times |M|$ possible configurations of states and memory states, and clearly the play will begin looping once the same configuration is seen twice, which must happen within $|V|\times |M| + 1$ steps. Therefore this strategy will generate an ultimately periodic play $pl^\omega\subseteq\delta^\omega$ with prefix $p\subseteq\delta^*$ and infinitely looping section $l\subseteq\delta^+$. Suppose in the course of the play, we see the same state twice but select a different edge each time, meaning our strategy is not positional. i.e. the play contains for some $v\in V$, two edges $e_1,e_2\in \delta$ such that $src(e_1)=src(e_2)=v$ but $e_1\neq e_2$. We can use conditions (1) and (2) to modify our play to remove this violation of positionality in such a way that we still generate a winning ultimately periodic play. We will count the number of violations in a finite sequence of edges $p\in\delta^+$ as $vio(p)= \Sigma_{v\in V}count(p,v)$ where $count(p,v) = 0$ if for all edges $e_1,e_2$ in $p$ such that $src(e_1)=src(e_2)=v$ then $e_1 = e_2$, and otherwise $count(p,v)$ is the number of indices $i$ in $p$ at which $src(p[i])=v$. It can be seen that once $vio(p) = 0$, then for any $p_1p_2 = p$, the ultimately periodic play $p_1p_2^\omega$ will contain no two edges $e_1,e_2$ where $src(e_1)=src(e_2)$ but $e_1\neq e_2$.
    
    Let us first look at violations where at least one of these violating edges is in the prefix $p$; we will use condition (1) to fix these. Suppose our play generates the trace $uvwx^\omega\in L$. We refer to the segments of the play which generate $u$, $v$, $w$, and $x$ as $p_1,p_2,p_3$ and $p_4$ respectively. Suppose $p_2[0] = e_1$ and $p_3[0] = e_2$ for two violating edges which begin from the same state as described earlier. If $uv^\omega \in L$, we can replace the play $pl^\omega$ with the winning play $p_1p_2^\omega$, which generates the trace $uv^\omega$. Otherwise, $uwx^\omega\in L$, and we replace $pl^\omega$ with $p_1p_3p_4^\omega$, which generates $uwx^\omega$. Note that in the case where one violation is in the prefix (say, at index $i$ of $p$) and one violation is in the loop (say, at index $j$ of $l$) then by following this process we can reach $p[0,i-1]l[j,|l|-1]l^\omega$, for which $vio$ does not necessarily decrease. However, this is equal to $p[0,i-1](l[j,|l|-1]l[0,j-1])^\omega$, and since every edge in $p[0,i-1]l[j,|l|-1]l[0,j-1]$ is contained within $pl$ the same or fewer number of times and with at least one violation removed, $vio$ decreases for this play.

    For violations entirely within $l$, we must alter the strategy to remove this at every instance of $l$ within $pl^\omega$. Therefore, we will require condition (2). Suppose our play generates a trace $u(vwx)^\omega\in L$, where $l = l_1l_2l_3$ st. $l_1$ generates $v$, $l_2$ generates $w$, $l_3$ generates $x$, and $l_2$ and $l_3$ begin on the same state but select different edges. Suppose $uvw^\omega \in L$; then can replace $pl^\omega$ with $pl_1l_2^\omega$. Otherwise, $u(vx)^\omega\in L$ and we can replace $pl^\omega$ with $p(l_1l_3)^\omega$.

    Each application of these conditions replaces our play $pl^\omega$ with a new ultimately periodic play $p'l'^\omega$. Let us denote by \textit{cycle} a sequence of edges $e_1\ldots e_n$ such that $src(e_1) = tgt(e_n)$. It can be seen that $p'l'^\omega$ is a valid play on our arena, since we have either fixed our play to end on a cycle which already existed in our original play, or we have cut out a cycle which existed in our original play. Additionally, it can be seen that $vio(p'l')<vio(pl)$, i.e. $p'l'$ has strictly fewer violations of positionality than $pl$, and moreover $|p'l'|<|pl|$. Since $pl$ is finite, there are finitely many violations of positionality within $pl$, and once $pl$ contains no violations of positionality then $pl^\omega$ contains no violations of positionality. Therefore, we can remove all violations of positionality with a finite number of applications of these conditions, leaving us with a play where a given state is always followed by the same edge. From this it is simple to construct a positional strategy which generates this play from the same initial state.

    ($\implies$) Let us take an $\omega$-regular language $L$ which is non-uniform 1P positional. Suppose for some $uvwx^\omega\in L$, neither $uv^\omega \in L$ nor $uwx^\omega\in L$. Then, $L$ would not be positional in the following game, where P1 controls all nodes:

    \begin{center}
    \begin{tikzpicture}
        \node[rectangle, draw] (n1) {};
        \node[rectangle,draw,right=2cm of n1] (n2) {};
        \node[rectangle, draw, right=2cm of n2] (n3) {};

        \draw[->] (n1) -> (n2) node [midway, above, fill=white] {$u$};
        \draw[->] (n2) -> (n3) node [midway, above, fill=white] {$w$};

        \draw[->] (n2) to[loop above,looseness=10,in = 120, out = 50]
          node[midway, above] {$v$}
          (n2);
    
        
        
        \draw[->] (n3) to[loop right]
          node[midway, above] {$x$}
          (n3);
    \end{tikzpicture}
    \end{center}
    Suppose for some $u(vwx)^\omega\in L$, neither $uvw^\omega \in L$ nor $u(vx)^\omega\in L$. Then, $L$ would not be positional in the following game where P1 controls all nodes:

    \begin{center}
    \begin{tikzpicture}
        \node[rectangle, draw] (n1) {};
        \node[rectangle,draw,right=2cm of n1] (n2) {};
        \node[rectangle, draw, right=2cm of n2] (n3) {};

        \draw[->] (n1) -> (n2) node [midway, above, fill=white] {$u$};
        \draw[->] (n2) -> (n3) node [midway, above, fill=white] {$v$};

        \draw[->] (n3) to[loop right]
          node[midway, above] {$w$}
          (n3);
    
        
        
        \draw[->] (n3.south) to[bend left]
          node[midway, above, fill=white] {$x$}
          (n2.south);
    \end{tikzpicture}
    \end{center}

    Additionally, in the case where $x$ is empty, the following game provides a counterexample:

        \begin{center}
    \begin{tikzpicture}
        \node[rectangle, draw] (n1) {};
        \node[rectangle,draw,right=2cm of n1] (n3) {};

        \draw[->] (n1) -> (n2) node [midway, above, fill=white] {$u$};

        \draw[->] (n3) to[loop above]
          node[midway, above] {$v$}
          (n3);

        \draw[->] (n3) to[loop below]
          node[midway, below] {$w$}
          (n3);
        
        
    \end{tikzpicture}
    \end{center}
\end{proof}

\poneuniformpos*
\begin{proof}
    For the left-to-right direction, recall that we require totally ordered residuals for positionality.

    For the right-to-left direction, suppose we have a finite-memory strategy over a 1P arena, which is winning from states $W\subseteq V$. For each $w\in W$, the strategy generates an ultimately periodic play $\varsigma_w = p_wl_w^\omega$. We will modify these plays so that whenever a state is reached at multiple different points (possibly across different plays) the same edge is selected each time. 

     We will change the way we count violations of positionality from \Cref{prop:1pnonuniformpos}. This time, given a set of set of plays $P\subseteq \delta^\omega$, the violation count of a state $v\in V$ is $count(P,v) = |\{e\in\delta \mid src(e) = v, e=p[i]\text{ for some $i\in \mathbb{N}$ and some $p\in P$}\}|$. In other words, the violation count of $v$ in $P$ is the number of distinct edges across any play in $P$ which are outgoing from $v$. Once $count(P,v)=1$ for all $v\in V$, we can easily construct a positional strategy which generates all the traces in $P$. We will perform two different operations to lower the violation count of a given set of ultimately periodic plays $P$:

    \begin{enumerate}
        \item Remove violations along a single play $\varsigma_w$ using the method outlined in \ref{prop:1pnonuniformpos}.
        \item Remove violations across multiple plays using totally ordered residuals.
    \end{enumerate}

    We will expand on the second operation. Let us suppose we have first applied operation (1) to remove all violations along each play - so in a given play, for every state $v\in V$, there is a single distinct outgoing edge from $v$ that features in the play. However, the outgoing $v$ edge in one play may not be equal to the outgoing $v$ edge in another play. We will first select some state $v\in V$ which appears across multiple plays but is not followed by the same edge each time. We will split each play $\varsigma_w$ containing $v$ into $\varsigma_w= \varsigma_{w,1}\varsigma_{w,2}$ where $\varsigma_{w,2}$ begins at the first occurrence of $v$ in the play. Let $\lambda_{w,1},\lambda_{w,2}$ be the words generated by $\varsigma_{w,1}$ and $\varsigma_{w,2}$ respectively. Since $L$ has totally ordered residuals, there is some $w\in W$ such that for all other $w'\in W$, $R_L(\lambda_{w',1})\subseteq R_L(\lambda_{w,1})$, which means that for each $w'$, $\lambda_{w',1}\lambda_{w,2}\in L$. Therefore, if we replace each $\varsigma_{w'}$ with $\varsigma_{w',1}\varsigma_{w,2}$ for this maximal $w$, each play is still winning. It can also be seen that $\varsigma_{w,2}$ only contains a single distinct edge outgoing from $v$, namely the edge at $\varsigma_{w,2}[0]$. Since we did this at the first instance of an outgoing $v$ edge across every play, every edge $e$ with $src(e)=v$ across all plays is now equal to $\varsigma_{w,2}[0]$. Additionally, every play is still ultimately periodic. We will call this new set of plays $P$. This means our violation count for $v$ is $count(P,v)=1$. This operation may introduce violations along the plays for some states, in terms of the count used in \Cref{prop:1pnonuniformpos}. However, note that further applications of this operation and operation (1) will preserve the fact that every instance of $v$ is followed by the same outgoing edge, since none of these operations add new edges that did not already exist in some play. So, once we have that $count(P,v)=1$ for some $v$, this cannot be increased by operations (1) or (2). Therefore, once we repeat this process for each $v\in V$ of applying operation (1) along all plays and then operation (2) to set all edges outgoing from $v$ to the same choice, we are left with a set of winning plays from each $w\in W$ where a given state is always followed by the same edge. From this it is simple to construct a positional strategy which generates these plays.

\end{proof}

\wilkealg*
\begin{proof}
    Suppose $F\subseteq S_\omega$ is the set and $\varphi:(A^+,A^\omega)\rightarrow(S^+,S^\omega)$ the map such that $\varphi^{-1}(F) = L$. To check for totally ordered residuals, for each $s\in S_+$ we take the set $s^{-1}L = \{w\in S_\omega | sw \in F\}$. $L$ has totally ordered residuals if $\{s^{-1}L | s\in S_+\}$ is totally ordered by inclusion. This can be checked in time $O(|S_+|\cdot|S_\omega|)$. For each of the other conditions, we need to check $uvwx^\omega$ and $u(vwx)^\omega$ (and also $vwx^\omega$, $u(vw)^\omega$) for each $u,v,w,x\in S_+$ , which can be done in time $O(|S_+|^4)$.
\end{proof}

\preindconds*
\begin{proof}
    Prefix-independent languages have only one distinct residual, so the set of residuals is trivially totally ordered. Suppose $uvwx^\omega\in L$. By prefix-independence, this is the case iff $x^\omega\in L$ iff $uwx^\omega\in L$, meeting the first condition for positionality. Similarly, $u(vw)^\omega\in L$ iff $(vw)^\omega\in L$, $uv^\omega\in L$ iff $v^\omega\in L$ and $uw^\omega\in L$ iff $w^\omega\in L$. So, all conditions for positionality from Cor. \ref{cor:omega-pos} are met when $(uv)^\omega\in L$ implies either $u^\omega\in L$ or $v^\omega\in L$.
\end{proof}

\specialform*
\begin{proof}
    Clearly $L$ is prefix-independent. Let us take a word $(vw)^\omega\in L$. WLOG let us assume $v=v_1sv_2s\ldots sv_n$ and $w=w_1sw_2s\ldots s w_n$. We will show either $w^\omega\in L$ or $v^\omega\in L$. Clearly each $v_i$ or $w_i$ is in $R$, so this amounts to showing either $v_nv_1$ or $w_nw_1$ is in $R$. Suppose both $v_nv_1$ and $w_nw_1$ are not in $R$. However, we know $v_nw_1\in R$ and $w_nv_1\in R$, since $(vw)^\omega\in L$. So, $w_1\in R_R(v_n)$, $w_1\notin R_R(w_n)$ and $v_1\notin R_R(v_n)$, $v_1\in R_R(w_n)$, contradicting our assumption of totally ordered residuals for $R$. So, at least one of $v_nv_1$ or $w_nw_1$ is in $R$, meaning either $v^\omega$ or $w^\omega$ is in $L$ as required. If there are no occurences of $s$ in $v$ or no occurences of $s$ in $w$, we have either $v_nwv_1\in R$ or $w_nvv_1\in R$, from which can use the closure under subwords to guarantee that either $v^\omega$ or $w^\omega\in L$ respectively.
\end{proof}

\ordantidict*
\begin{proof}
    Suppose $L$ has totally ordered residuals. Then, either $R_L(x)\subseteq R_L(u)$ or vice versa - let us suppose the former. Since $v\notin R_L(u)$ then $v\notin R_L(x)$, so $xv\notin L$. Therefore there must be some $d\in D$ such that $d \leq_{SUB} xv$. The argument is symmetric for the other case. 
    
    Suppose for every $xy,uv\in D$, there is some $d\in D$ such that $d\leq xv$ or $d \leq uy$. Let us look at $R_L(u)$ and $R_L(v)$ for some arbitrary $u,v$. Suppose there is some $x\in R_L(u),y\notin R_L(u)$ and $x\notin R_L(v),y\in R_L(v)$. So, there must be some $p,q\in D$ such that $p= p_1p_2$, $q=q_1q_2$, and $p_1\leq_{SUB}u$,$p_2\leq_{SUB}y$,$q_1\leq_{SUB}v$,$q_2\leq_{SUB}x$. We know there is some $d\in D$ such that either $d\leq_{SUB}p_1q_2$ or $d\leq_{SUB}q_1p_2$. However, this means either $d\leq_{SUB}ux$ or $d\leq_{SUB}vy$, which contradicts our assumption they were in the language. So, $L$ must have totally ordered residuals.
\end{proof}

\subwordclosed*
\begin{proof}
    Suppose $L$ is a subword-closed language with anti-dictionary $D=\{w\}$. Let $|w| = l$. Let us take $xy,uv\in \Sigma^*$ such that $xy=uv=w$. Let $|x| = i$ and $|u| = j$. Suppose $i\leq j$. The word $uy$ is equal to $w[1,j] \cdot w[i+1,l]$. Since $j\geq i$, this must contain $w$ as a subword, so $w\leq_{SUB} uy$. The case is symmetric where $j\leq i$. Therefore, $L$ meets the condition for totally-ordered residuals.
\end{proof}

\totordresiduals*
\begin{proof}
    Suppose this is not the case, so $x\in R_L(ua),y\in R_L(va),y\notin R_L(ua),x\notin R_L(va)$ for $u,v,x,y\in \Sigma^*,a\in \Sigma$, s.t. $u=u_0\ldots u_n,v=v_1\ldots v_m$ for $u_i,v_i\in \Sigma$. Then, $L$ is not positional over the following game:

    \begin{center}
    \begin{tikzpicture}
    \node[rectangle,draw] (n2) {$a$};
    \node[circle, draw,below left=0.5cm and 1.5cm of n2] (n1) { $u_0$};
    \node[circle,draw,above left=0.5cm and 1.5 of n2] (n5) {$v_0$};

    \node[above right=1cm of n2] (n3) {$\ldots$};
    \node[below right=1cm of n2] (n4) {$\ldots$};

    \draw[->]
      (n5) to[bend left]
      node[midway, above] {$u_1\ldots u_n$}
      (n2);
    
    \draw[->]
      (n1) to[bend right]
      node[midway, above] {$v_1\ldots v_n$}
      (n2);
    
    \draw[->]
      (n2) to[bend left]
      node[midway, above] {$x$}
      (n3);
    
    \draw[->]
      (n2) to[bend right]
      node[midway, above] {$y$}
      (n4);
\end{tikzpicture}
\end{center}

\end{proof}

\antidict*
\begin{proof}
    Take some $d\in D$ such that $d=d_1d_2$ for $d_1,d_2\in \Sigma^*$. We also have $a=\varepsilon a = a \varepsilon\in D\cup \{a\}$. Both $d_1a$ and $ad_2$ have $a\in D\cup\{a\}$ as a subword, so $D\cup\{a\}$ meets the required condition for totally-ordered residuals from Prop \ref{prop:totally_ordered_anti_dict}.
\end{proof}

\stateposchar*
\begin{proof}[Proof sketch]

    The idea will be to represent a strategy in a 2-player game as a collection of infinite trees, where the infinite paths through the trees give us all plays generated by that strategy. A strategy is then winning if all the infinite paths lie within the objective $L$. The nodes in the tree will correspond to states. A node corresponding to a P1 controlled state will have a single successor, encoding the move chosen by the strategy. A node corresponding to a P2 controlled state will have a successor for every transition from that state. We will show that we can adjust the moves taken at P1 controlled nodes to obtain a tree where for a given P1 state, at every node corresponding to that state we select the same next move. 

    We will first note that condition (1) can be actually be applied to words without a periodic tail, as by the Ramseyan factorization theorem \cite{200475} every word $z\in \Sigma^\omega$ admits a factorization $z=z_1z_2\ldots$ such that for any $u\in\Sigma^*$, $uz_1z_2^\omega\in L$ iff $uz\in L$. So, given a word $uvw\in L$, we take the above factorisation $w_1w_2\ldots$ and replace our word with $uvw_1w_2^\omega$, which must be in $L$. We then apply condition (1) to either obtain $uw_1w_2^\omega\in L$ or $uv^\omega\in L$. If the former, we know this is the case iff $uw\in L$. We will use this to apply condition (1) to infinite paths through the tree of plays which may not necessarily be ultimately periodic.

    Suppose we have a state-labelled 2P game $(V_1,V_2,\delta,\pi)$ with a finite-memory strategy $\sigma$ for P1, which is winning from a set of states $W\subseteq V$. We will assume the strategy is implemented by a memory structure $(M,m_0,\mu)$ with $M$ memory states, and transitions $\mu:M\times V \rightarrow V$ are labelled with states $V$, so the memory state along a play is determined by the history of states seen so far rather than labels in $\Sigma$.  We refer to \cite{theoretics:9608} for more details on implementing strategies with memory structures. Note that for each $w\in W$, $out(\sigma, w)$ forms an infinite tree of winning plays from $w$. We can think of the subtree as being rooted at $w$: if $w\in V_1$, then this has a single successor $tgt(\sigma(w))$. Otherwise, if $w\in V_2$, this has a successor for each $v\in \delta(w)$. Similarly for any node $v$ which was reached via a partial play $\varsigma\in \delta^+$, if $v\in V_1$ then there is a single successor $\sigma(\varsigma)$, otherwise there is a successor for each $v'\in\delta(v)$. Each node on the tree is reached with a unique sequence of edges, i.e. a partial play $p\in \delta^+$, although since we are dealing with state-labelled arenas we can also specify this as the sequence of states $src(p[0])src(p[1])\ldots src(p[|p|-1])$. We will abuse notation and use $p$ to contextually refer to either a partial play, the equivalent sequence of states, or the partial trace $\pi(p[0])\pi(p[1])\ldots\pi(p[|p|-1])\in \Sigma^+$ generated by the partial play.

    A forest of plays (given a state-labelled 2P game $(V_1,V_2,\delta,\pi)$) is $T = (V^+,e,\tau)$, where $V^+$ is the set of nonempty sequences of states, $e\subseteq V^+\times V^+$ is a relation encoding edges, and $\tau:V^+\rightarrow V$ is a function labelling each sequence of states $u\in V^+$ with the final state $\tau(u) = u[|u|-1]$. The relation $e$ is defined such that for $x,y\in V^+$:
    \begin{itemize}
        \item If $(x,y)\in e$, then $y = xv$ for some $v\in V$, and $(\tau(x),\tau(y))\in \delta$ (all edges must exist in the corresponding game)
        \item If $\tau(x)\in V_1$, then $|e(x)| = 1$ (P1 can choose only one edge on their turn)
        \item If $\tau(x)\in V_2$, then $(\tau(x),\tau(y))\in \delta$ implies $(x,y)\in e$ ($e$ contains all edges which P2 could select on their turn)
    \end{itemize}
    
    We will only be interested in the trees rooted at each $v\in V$, and the subtrees of such trees.
    
    From a P1 strategy $\sigma$, we can generate a forest of plays by adding, for any $x\in V^+$ where $\tau(x)\in V_1$, the edge chosen by $\sigma$ $(x,tgt(\sigma((x[0],x[1])(x[1],x[2]),\ldots)))$ to $e$. Each $v\in V$ is the root of a tree, where the infinite paths through the tree corresponding to infinite paths in $out(\sigma, v)$. If $\sigma$ is winning in states $W\subseteq V$, then for each $w\in W$, each infinite path of the form $\pi(\tau(v_1))\pi(\tau(v_1v_2))\pi(\tau(v_1v_2v_3))\ldots$ is in $L$ where $v_1 = w$ and $(v_i,v_{i+1})\in e$. We will refer to such trees with the same notation as the corresponding sets of traces, i.e. $out(\sigma,w)$. We will refer to the `traces generated by a tree', by which we mean the set of infinite paths through the tree of the above form, taking the $\pi$-labelling at each vertex in the tree. We will talk of subtrees being identical if they are bisimilar under the labelling $\tau$. We will talk of replacing the subtree rooted at $u\in V^+$ with a subtree $T'$ rooted at $u'$, by which we mean redefining the edge relation $e$ for sequences of states extending $u$. First, we remove all edges in $e$ of the form $(ux,y)$ for $x,y\in V^*$. Then, for any $ux\in V^+$ where $x\in V^*$, we add $(ux, uxv)$ to $e$ for each $(u'x,u'xv)\in e$. We will only do this when $\tau(u)=\tau(u')$, which ensures that this is well defined, and that $\tau(ux) = \tau(u'x)$. We will refer to `subtrees rooted at $v$', by which we mean infinite subtrees rooted at some $u\in V^+$ for which $\tau(u) = v$. 

    We also note there is a correspondence between finite-memory strategies and infinite trees of plays which consist of a finite number of subtrees. Suppose we have a memory structure $(M,m_0,\mu)$ and strategy $\sigma:V\times M\rightarrow \delta$. The strategy prescribes moves based on the current memory state and the current state of the arena. Given two subtrees of $out(\sigma,w)$, if the roots of the subtrees are at both the same memory and arena states, then the subtrees will be identical. This means there are $|M|\times |V|$ possible subtrees in $out(\sigma,w)$. On the other hand, if we have an infinite tree of (valid) plays with $n$ different subtrees, note that along every play we must see a repeated subtree somewhere before depth $n+1$. Therefore, the root of every distinct subtree is present somewhere before $n+1$ in the tree. This means we can take the tree up to depth $n$ as the states of our finite-memory strategy $\sigma'$, and for each node in the tree $v$, for each successor $v'$ of $v$, we add a transition pointing to said successor with the label of $v'$. For the nodes at depth $n$, for each of their successors $v'$ in the original tree, we find a point where this subtree is rooted before depth $n$; we can then add a transition in our finite-strategy automaton which points to the pre-depth $n$ root of this subtree. We can pick all other required transitions arbitrarily. Essentially, we have `folded' our tree of plays into a finite automaton. We then set the strategy such that at a memory state rooted at $v\in V_1$, we pick out the corresponding successor. It can be seen that $out(\sigma',w)$ will generate our original tree.
    


    Given a forest of plays, and the tree rooted at some $q\in V$, we will define a process which, for a given state $v\in V_1$, strictly reduces the number of distinct subtrees rooted at $v$. Firstly, we identify along every play, the first point at which $v$ is seen. For each such $v$, we denote by $T$ the subtree rooted at this node, and we denote by $u$ the path to this subtree from our overall tree. We will pick a subtree of $T$, call this $T'$, rooted at $v$ and in which a different successor from the root is chosen than $T$. We will show that within $T$, either we can recursively replace all instances of $T'$ with $T$ and still be left with tree of winning plays overall, or we can recursively replace all instances of $T$ with $T'$ and still be left with a tree of winning plays overall. 

    We have two possible cases; either $T'$ contains $T$ as a subtree, or $T'$ does not contain $T$ as a subtree. Let us denote by $\Lambda_{T'}$ the set of traces generated by infinite paths through the tree $T'$ and by $\Lambda_{T}$ the traces of $T$, and let us denote by $X$ the set of all paths from the root of $T$ to the root of the first instance of $T'$ along any path in $T$. We will first deal with the latter. 
    
     Let us pick a particular $x$. We know that $ux\Lambda_{T'}\subseteq L$. To show that replacing every instance of $T$ with $T'$ still leaves us with a winning tree of plays, we can show $u\Lambda_{T'}\subseteq L$ (essentially replacing the $T$ rooted at $u$ with $T'$). Whereas to show (recursively) replacing every instance of $T'$ with $T$ still leaves us with a winning tree of plays, we can show both $u(X)^n\Lambda_{T}\subseteq L$ and $uX^\omega\subseteq L$ (this covers recursively replacing $T'$ with $T$ at any depth, as well replacing $T'$ with $T$ infinitely). Firstly, we know for any $x\in X$ and $\lambda\in \Lambda_{T'}$ that $ux\lambda\in L$. Therefore we will check each of these traces against condition (1). If for any $x$, we have for all $\lambda$, $u\lambda\in L$, we are done, as $u\Lambda_{T'}\subseteq L$, so now let us assume this is not the case. Otherwise, from (1) we know that for all $x$, $ux^\omega\in L$. From totally ordered residuals, we know for any $x$ either $R_L(u) \supseteq R_L(ux)$ (if the other way around, we would have $u\Lambda_{T'}\subseteq L$), so $ux\Lambda_{T}\subseteq L$, and so for any $x'$, we have $uxx'\Lambda_{T'}\subseteq L$. Similarly we have $R_L(u)\supseteq R_L(uxx')$, and it can be seen that we can repeat this procedure for any sequence of paths in $X$, so $uX^n\Lambda_{T}\subseteq L$ and $uX^n\Lambda_{T'}\subseteq L$. We will now move on to proving the other required part, that we also have $uX^\omega\in L$.
     
     For any sequence $p\in X^n$, we have $up\Lambda_{T'}\subseteq L$, so by condition (1) - since we know we do not have $u\Lambda_{T'}$ - we have $up^\omega\in L$. We will now show that for any $t\in X^*$,$p\in X^n$ we have $utp^\omega$. Since $ut\Lambda_{T'}\subseteq L$ and $up^\omega\in L$, by totally ordered residuals, either $u\Lambda_{T'}\subseteq L$ (which is false by assumption) or $utp^\omega\in L$ as required. 
     
     Now let us take any infinite word $z\in X^\omega$. Let us take a finite $\omega$-semigroup $S=(S_+,S_\omega)$ recognising our language $L$ with a map $\varphi:\Sigma^\infty\rightarrow S$. By the Ramseyan factorization theorem \cite{200475}, the word $z$ admits a factorisation $z=z_1z_2z_3\ldots$ such that $\varphi(z_1z_2\ldots)= \varphi(z_1)\varphi(z_2)\ldots= s\cdot e\cdot e\ldots = se^\omega$ for some $s,e\in S_+$ for idempotent $e$. Observe that by the previous paragraph, $uz_1z_2^\omega\in L$. Since $\varphi(z_1z_2^\omega) = \varphi(z_1)\varphi(z_2)\varphi(z_2)\ldots = se^\omega = \varphi(z)$, then $\varphi(uz) = \varphi(u)\varphi(z)=\varphi(u)\varphi(z_1z_2^\omega)=\varphi(uz_1z_2^\omega)$. So, it must be that $uz_1z_2^\omega\in L$ iff $uz\in L$, so $uz\in L$ for any $z\in X^\omega$. Therefore, we have our other required condition that $uX^\omega\subseteq L$.

    Now let us assume both $T'$ is a subtree of $T$, and $T'$ is a subtree of $T$. Let us further denote by $Y$ the set of all paths from the root of $T'$ to the first instance of a subtree $T$ along some path in $T'$. Showing we can recursively replace every instance of $T$ with $T'$ amounts to showing (both $uX^n\Lambda_{T}\subseteq L$ and $uX^\omega\subseteq L$), whereas the converse amounts to showing (both $uY^n\Lambda_{T'}\subseteq L$ and $uY^\omega\subseteq L$).  
    Our first goal is to show that $u(X^*Y^*)^*\Lambda_{T}\subseteq L$ and $u(X^*Y^*)^*\Lambda_{T'}\subseteq L$. To do this, we will show that given some path $p\in u(X^*Y^*)^*$ where we know $p\Lambda_T\subseteq L$ ($p\Lambda_{T'}\subseteq L$), and so $pX\Lambda_{T'}\subseteq L$ ($pY\Lambda_{T}\subseteq L$) we can remove a layer of $X$-paths ($Y$-paths) and get that $p\Lambda_{T'}\subseteq L$ ($p\Lambda_{T}\subseteq L$). Beginning with the fact that $u\Lambda_T\subseteq L$, repeated applications of this and expansions of $X\Lambda_{T'}\subseteq \Lambda_T$ and $Y\Lambda_{T}\subseteq \Lambda_{T'}$ allow us to get that $u(X^*Y^*)^*\Lambda_{T}\subseteq L$ and $u(X^*Y^*)^*\Lambda_{T'}\subseteq L$. Note that along any path $p\in u(X^*Y^*)^*$ which leads to subtree rooted at $T$, we can apply the previous part of the proof to get that either we have $p(X)^n\Lambda_{T}\subseteq L$ and $pX^\omega\subseteq L$, or we have $p\Lambda_{T'}\subseteq L$. In the latter case we are done, so let us consider what happens if $p(X)^n\Lambda_{T}\subseteq L$ and $pX^\omega\subseteq L$. By totally ordered residuals, since $uX\Lambda_{T'}\subseteq L$, either $uX^\omega\subseteq L$ or $p\Lambda_{T'}$. In the former case, we now have the conditions required to recursively replace all $T'$ subtrees with $T$, so we do not to continue with the proof. Otherwise, we have $p\Lambda_{T'}$ as required. Similarly, with $T$ and $T'$, $X$ and $Y$ swapped.
    

     We can use this to replace a layer of $T$ trees with $T'$ trees at any depth, and vice versa. Let us note that $(XY)^\omega\subseteq \Lambda_{T}$ and $(YX)^\omega\subseteq\Lambda_{T'}$, so we have $u(X^*Y^*)^*(XY)^\omega\subseteq L$ and $u(X^*Y^*)^*(YX)^\omega\subseteq L$. From this, we can show our final required condition that either $uX^\omega\subseteq L$ or $uY^\omega\subseteq L$. Suppose this is not the case, so there is a $a\in X^\omega$ and $b\in Y^\omega$ such that $ua\notin L$ and $ub\notin L$. Again, we can find factorisations $a=a_{1}a_2a_3\ldots$ such that $\varphi(a)= \varphi(a_1)\varphi(a_2)\ldots= s\cdot e\cdot e\ldots = se^\omega = \varphi(a_1a_2^\omega)$ for some $s,e\in S_+$ for idempotent $e$ and $b=b_{1}b_2b_3\ldots$ such that $\varphi(b)= \varphi(b_1)\varphi(b_2)\ldots= s'\cdot e'\cdot e'\ldots = s'e'^\omega = \varphi(b_1b_2^\omega)$ for some $s',e'\in S_+$ for idempotent $e'$. So, we know $ua_1a_2^\omega\notin L$ and $ub_1b_2^\omega\notin L$. However, note that both $ua_1(a_2b_2)^\omega\in u(X^*Y^*)^*(XY)^\omega\subseteq L$ and $ub_1(a_2b_2)^\omega\in u(X^*Y^*)^*(XY)^\omega\subseteq L$. So, we can apply condition (2) to get either $ua_1a_2^\omega\in L$, which we assumed was not the case, or $ua_1b_2^\omega\in L$. Similarly, we are forced to have $ub_1a_2^\omega\in L$. Applying totally ordered residuals gives us either $ua_1a_2^\omega\in L$ or $ub_1b_2^\omega\in L$, in either case contradicting our initial assumption. Therefore, we have either $uX^\omega\subseteq L$ or $uY^\omega\subseteq L$ as required: if the former, within the tree rooted at path $u$ we can recursively replace all subtrees $T'$ with $T$, if the latter we can recursively replace all subtrees $T$ with $T'$.
    
    
    
    With the above two methods, we can replace all instances of $T$ in a subtree with $T'$ or vice versa, for two subtrees $T$ and $T'$ rooted at the same node.
    We repeat this for the $T$ rooted at the first instance of $v$ along every path; now at every subtree rooted at $v$, we have removed a positionality violation from within that subtree. We then apply totally ordered residuals to replace every tree rooted at a first instance of $v$ along a path with the same tree (this will make the next step simpler). We will call the overall tree at this point $T^q_{new}$, and suppose we have recursively replaced every subtree $T'$ of $T$ with $T$ itself. We first show that $T^q_{new}$ can be implemented by a finite-memory strategy; essentially, the goal is to track when we have reached $v$ for the first time, and then ensure that whenever we would be generating the subtree $T'$ we are in fact generating the subtree $T$. First we take our original memory structure $(M,m_0,\mu)$ and strategy $\sigma: V\times M \rightarrow V$ implementing our previous tree. We will construct a new memory structure which implements $T^q_{new}$. Suppose $m_T\in M$ is the memory state corresponding to the subtree $T$, and $m_{T'}$ is the memory state corresponding to the subtree $T'$. To track when we first reach the state $v$, we need to create 2 copies $M_1$ and $M_2$ of $M$; we will be in $M_1$ prior to reaching $v$ and $M_2$ after reaching $v$ for the first time.

    More formally, we replace our memory structure with $(M_1\sqcup M_2,(m_0,1),\mu')$ where:
    \begin{itemize}
        \item $M_1\sqcup M_2=\{(m,1)\mid m\in M\}\sqcup\{(m,2)\mid m\in M\}$
        \item $(m_0,1)\in M_1$ is the initial memory state.
        \item $\mu':M_1\sqcup M_2\times V\rightarrow M_1\sqcup M_2$ is defined as follows: $$\mu'((m,1),s) = \begin{cases}
            (\mu(m,s),1) & \text{ if $s\neq v$} \\
            m_T & \text{ otherwise}
        \end{cases}$$ 
        $$\mu'((m,2),s) = \begin{cases}
            (\mu(m,s),2) & \text{ if $\mu(m,s) \neq m_{T'}$} \\
            m_T & \text{ otherwise}
        \end{cases}$$
    \end{itemize}

    It can be seen that this ensures that the first subtree rooted at $v$ along any path is $T$, and that all $T'$ subtrees within $T$ have been replaced by $T$. It can be seen that this procedure preserves winning trees of plays, as proven above. Note that this strictly reduces the number of subtrees rooted at $v$ which select different successor states. This is because if we are at a state $v$, we must be in a memory state in $M_2$, and the memory state $(m_{T'},2)$ is now unreachable. So, the number of memory states reachable when at $v$ is reduced by at least 1. This continues if we repeat this process for further choices of $T$ and $T''$ rooted at $v$ (although the total space of memory states increases), so we can continue this process until we are left with a forest of plays in which the same successor state is chosen for every subtree rooted at $v$. Once this is the case, repeating this process for other states will never add different successor states for $v$. Therefore, once we repeat this for every $v\in V_1$, we have a collection of trees in which every P1 state is followed by the same choice of successor state, from which it is simple to construct a positional strategy.

\end{proof}

\stateedgepos*
\begin{proof}
    Suppose we have an $L$ which can be played optimally with memory of the previous colour in edge-labelled games. Take an arbitrary state-labelled game $S = (V_1,V_2,\delta,\pi)$. Take a memoryful strategy $\sigma$ in $S$ for P1, with winning states $W\subseteq V$. Take the edge-labelled game $E = (V_1, V_2\cup\{*\},  \delta')$ where $\delta' = \{(v,\pi(v'),v')| (v,v')\in \delta\} \cup \{(*,\pi(w),w)|w\in W\}$.  Given a strategy $\sigma$ in $S$, to obtain a strategy in $E$ we set $\sigma'(*q_1\ldots q_n) = \sigma'(q_1\ldots q_n) = (q_n,\pi(\sigma(q_1\ldots q_n)),\sigma(q_1\ldots q_n))$. It can be seen that if $\sigma$ in $S$ is winning from each $w\in W$, then $\sigma'$ is winning from $*$ in $E$. Since $L$ can be played optimally with a memory of the previous colour, we have a $\varsigma': V\cup (V\times\Sigma)\rightarrow V$ which is winning from $*$ when provided the previous colour seen in the play. We define a positional strategy in our state-labelled arena $\varsigma:V\rightarrow V$ such that $\varsigma(v)=\varsigma'(v,\pi(v))$. It can be seen that from a $w\in W$, all traces generated by $\varsigma$ are also generated by $\varsigma'$ from $*$ - since all traces generated by $\varsigma'$ from $*$ are winning, $\varsigma$ will be winning from all $w\in W$.



    For the other direction, suppose we have a state-labelled positional $L$. Take an arbitrary edge-labelled game $E = (V_1,V_2, \delta)$. Let $\delta_1 = \{(c,v')| (v,c,v')\in\delta, v'\in V_1\}$ and $\delta_2 = \{(c,v')| (v,c,v')\in\delta, v'\in V_2\}$. Let us take the state-labelled game $S = (\delta_1,\delta_2, \delta', \pi)$ where: $((c,v'),(c',q'))\in \delta'$ iff $(v',c',q')\in \delta$; and $\pi((c,v))=c$. Suppose we have a memoryful strategy $\sigma$ in $E$ which is winning at states $W\subseteq V$. From this, we can obtain a memoryful strategy $\varsigma$ over $S$ which is winning at, for each $w\in W$ the state $(c,v')$ where $\sigma(w) = (w,c,v')$. To do this, first for each $(c,v)$ we will select, if it exists, a state $v_{(c,v)}$ which is winning under $\sigma$ and is such that $\sigma(v_{(c,v)}) = (v_{(c,v)},c,v)$. If this does not exist, we select an arbitrary state $v_{(c,v)}$ with an outgoing edge $(v_{(c,v)},c,v)$ Then, we set $\varsigma((c_1,v_1)\ldots(c_n,v_n)) = (c_{n+1},v_{n+1})$ where $\sigma((v_{(c_1,v_1)},c_1,v_1)\ldots (v_{n-1},c_n,v_n)) = (v_n,c_{n+1},v_{n+1})$. This simply copies the edges chosen in winning plays where the first edge chosen is $(w,c,v')$ for some $w$, so can be seen to winning at any $(c,v')$ where either $(w',c,v')$ is chosen as the first edge from a winning play from some $w'\in V_1$, or where $(w',c,v)\in\delta$ for some winning $w'\in V_2$. Since $L$ is state-labelled positional, there will be a positional strategy, which is a function $\sigma':\Sigma\times V \rightarrow \Sigma\times V$, with the same winning states. From this we can obtain a function $\sigma'':(\Sigma\times V) \cup V \rightarrow \delta$ by, for each $c,v$ such that $\sigma'(c,v) = (c',v')$, setting $\sigma''(c,v) = (v,c',v')$. This will be defined for any $c,v$ pair where $c$ is a label on an outgoing edge from $v$. We will set the action of $\sigma''(v)$ to either: if there is state $(c,v')$ in $S$ for which $\sigma'$ is winning and such that $(v,c,v')\in\delta$, then select this, or otherwise, select an arbitrary outgoing edge from $v$. If we generate a play from $w\in W$ using $\sigma''$ in $E$, by providing the previously seen label, it can be seen that this will still be winning. Therefore, whenever $L$ is state-positional, for any memoryful strategy in an edge-labelled game we can find a strategy with memory of the previously seen label that has the same winning states.
\end{proof}

\preindvariety*
\begin{proof}
    First we will show closure under boolean operations. Suppose $x\in L_1\cap L_2$ for prefix independent $L_1,L_2$; this is the case iff $x\in L_1$ and $x\in L_2$ iff $ux\in L_1$ and $ux\in L_2$ iff $ux\in L_1\cap L_2$. Similarly for union. For complement, $x\in \overline{L}$ iff $x\notin L$ iff $ux\notin L$ iff $ux\in \overline{L}$.
    For prefix-independent $\omega-$languages, $R_L(u)=L$ for any $u\in \Sigma^*$, so they are closed under taking residuals.
    Given a map $\varphi:(A^+,A^\omega)\rightarrow (B^+,B^\omega)$ and a prefix-independent $L\subseteq B^\omega$, we will show $\varphi^{-1}(L)$ is prefix-independent. Suppose $x\in \varphi^{-1}(L)$ for some $x\in A^\omega$. Then, $\varphi(x)\in L$, so for all $u\in B^+$, $u\varphi(x)\in L$. This includes all $u\in \varphi(A^+)$, so for all $u\in A^+$, $\varphi(ux)\in L$, so $ux\in \varphi^{-1}(L)$. For the other direction, suppose for some $u\in A^+,x\in A^\omega$ we have $ux\in \varphi^{-1}(L)$, i.e. $\varphi(ux)\in L$, so $\varphi(u)\varphi(x)\in L$, and by prefix independence, $\varphi(x)\in L$. So, $x\in \varphi^{-1}(L)$ as required.

    Suppose an $\omega-$semigroup $(S_+,S_\omega)$ satisfies the identity $uv = v$. Take some map $\varphi:(A^+,A^\omega)\rightarrow (S_+,S_\omega)$, and some $F\subseteq S_\omega$. For any $u\in A^+,x\in A^\omega$, we have $\varphi(ux) = \varphi(u)\varphi(x)=\varphi(x)$, so $x\in \varphi^{-1}(F)$ iff $ux\in\varphi^{-1}(F)$. 
    
    Suppose we have a prefix-independent $L$, and we take the syntactic semigroup. For infinite words $x,y\in A^\omega$, then $x \sim_L y$ iff for all $u\in A^+$, $ux\in L$ iff $uy\in L$, which by prefix independence means $x\in L$ iff $y\in L$. For any $uv \in A^\omega$, by prefix independence $uv \sim_L v$, so in the syntactic semigroup the identity $uv = v$ holds.
\end{proof}

\preindtrivvariety*
\begin{proof}
    It is well known that $\infty-$varieties of languages correspond to varieties of $\omega-$semigroups. In other words, given an $\infty-$variety $\mathcal{V}$ there is a variety of $\omega$-semigroups $\mathbf{V}$ such that for any $V\in \mathbf{V}$ and $\varphi:\Sigma^+\rightarrow V$, and any $F\subseteq V$, we can find the language $\varphi^{-1}(F)$ in $\mathcal{V}$.

    The languages $\{\emptyset,\Sigma,\Sigma^\omega,\Sigma^\infty\}$ are all prefix-independent and positional, and form an $\infty$-variety. This can be generated by the trivial variety consisting of $\omega-$semigroup $(\{\star\},\{\star\})$.

    Suppose we have an $\infty-$variety consisting entirely of prefix-independent positional languages. We need that every language recognised by the corresponding variety of $\omega$-semigroups is positional and prefix-independent. A language $L$ is prefix independent when $v\in L$ iff $uv\in L$. If we have an $\omega$-semigroup $S=(S_+,S_\omega)$ and a set $F\subseteq S$ recognising a prefix-independent language $L$, then we require that for all $u\in S_+$, $v\in S_\omega$, $uv^\omega\in F$ iff $v^\omega\in F$. Since we could have any $F\subseteq S_\omega$, including the set $\{uv^\omega\}$, we require that the $S$ satisfies the equation $uv^\omega=v^\omega$. For positionality, we know we require $(uv)^\omega\in L$ implies either $u^\omega\in L$ or $v^\omega\in L$, so we need that in S for each $u,v\in S_+$ either $(uv)^\omega=u^\omega$ or $(uv)^\omega=v^\omega$.

    Suppose our variety contains an $\omega$-semigroup $S$ where $u^\omega\neq v^\omega$ for some $u,v\in S_+$. Every element in $S_\omega$ is equal to $t^\omega$ for some $t\in S_+$, so this is required to have more than one element in $S_\omega$ and recognise more $\omega$-languages than $\emptyset$ and $\Sigma^\omega$. Varieties are closed under finite products, so let us look at the product $S\times S$, and assume it recognises only prefix-independent and positional languages. In $S\times S$, $((u,v)\cdot(v,u))^\omega$ either equals $(u,v)^\omega=(u^\omega,v^\omega)$ or $(v,u)^\omega=(v^\omega,u^\omega)$ by positionality. $((u,v)\cdot(v,u))^\omega = (uv,vu)^\omega = ((uv)^\omega,(vu)^\omega)$ by the definition of multiplication and $\omega$ on the product. For any $s,t\in S_+$, $(st)^\omega=s(ts)^\omega=(ts)^\omega$, so $((uv)^\omega,(vu)^\omega)=(x,x)$ for some $x\in S_\omega$. However, we already know $((u,v)\cdot(v,u))^\omega$ equals either $(u^\omega,v^\omega)$ or $(v^\omega,u^\omega)$ where $u^\omega\neq v^\omega$, giving us a contradiction. Therefore this product does not recognise only positional and prefix-independent languages. Any variety of languages recognising more $\omega-$languages than $\emptyset$ and $\Sigma^\omega$ will have at least one $\omega-$semigroup in the corresponding variety where $u^\omega\neq v^\omega$ for some $u,v\in S_+$. Varieties of $\omega-$semigroups are closed under products, so a variety of languages that contains only prefix-independent and positional languages must only recognise the $\omega$-languages $\emptyset$ and $\Sigma^\omega$.
\end{proof}

\nogo*
\begin{proof}
    Suppose a variety $\mathbf{V}$ containing only positional languages contained a prefix-independent language $L$. We know by \Cref{prop:prefindvariety} that prefix-independent languages form a variety, so the variety generated by $\{L\}$ would be a subvariety of this, and would be a subvariety of $\mathbf{V}$ so would contain only positional languages. Therefore, it would be a variety consisting only of prefix-independent and positional languages, which we know to be impossible by Proposition \ref{prop:prefindimpossible}.
\end{proof}

\fgvariety*
\begin{proof}
    Clearly this is closed under finite union by definition. Suppose we have a finite intersection of languages of the form $\bigcap_{i\in I}FG(A_i)$; we will show this can be written as $FG(B)$ for some $B\subseteq \Sigma$. Any word $w\in\bigcap_{i\in I}FG(A_i)$ has a finite index $i$ beyond which all characters are in $\bigcap_{i\in I} A_i$, so $w\in FG(\bigcap_{i\in I} A_i)$. If $w\in FG(\bigcap_{i\in I} A_i)$, then clearly it is in each $FG(A_i)$ and so $w\in\bigcap_{i\in I}FG(A_i)$. Given a finite intersection of languages of the form $FG(A_1) \cup \ldots \cup FG(A_n)$, we can distribute the intersection through the union and apply the previous step to obtain a language of the form $FG(B_1) \cup \ldots \cup FG(B_m)$.
    These languages are prefix independent, so are trivially closed under taking residuals.
    Suppose we have a language $L\subseteq B^\omega$ of this form, and a map $\varphi: A^\infty \rightarrow B^\infty$. We will show $\varphi^{-1}(L)$ is also of this form. Given a set $S\subseteq \Sigma^+$ we denote by $col(S)$ the smallest set $C\subseteq \Sigma$ such that $S\subseteq C^+$. It can be seen that $\varphi^{-1}(FG(A_1)\vee\ldots\vee FG(A_n)) = FG(col(\varphi^{-1}(A_1^+))\vee\ldots\vee col(\varphi^{-1}(A_n^+)) )$. For the left-to-right inclusion, we have some $w\in A^\omega$ such that at some point $j$, at all $k\geq j$, $\varphi([k])$ only contains labels in some $A_i$, so $\varphi(w[k])\in A_i^+$. So, $w\in FG(col(\varphi^{-1}(A_i^+)))$. For the other direction, we have some $w\in A^\omega$ such that at some point $j$, at all $k\geq j$, $w[k]\in col(\varphi^{-1}(A_i^+))$. So, $\varphi (w[k])\in A_i^+$, therefore $\varphi(w)\in FG(A_1)\vee\ldots\vee FG(A_n)$ as required.
\end{proof}

\positionalformulas*
\begin{proof} $(GFp \wedge FGq)$
    
    It can be seen that $GFp \wedge FGq$ is prefix-independent. Let us take trace a of the form $u(vw)^\omega$ satisfying $GFp\wedge FGq$. We know both $v$ and $w$ must contain only labels containing $q$, else $u(vw)^\omega$ would not satisfy $FGq$. Similarly, either $v$ or $w$ must contain $p$. If $v$ contains $p$, then $uv^\omega$ satisfies the property, otherwise $uw^\omega$ satisfies the property. Therefore, by \Cref{prop:pre_ind_conditions}, $GFp \wedge FGq$ is positional (over edge and state-labelled games).

    ($G(pUq)$)
    
    It can be seen the language has only 2 distinct sets of residuals: $\emptyset$ and $G(pUq)$. At any point on a trace satisying $G(pUq)$, either $p$ or $q$ is in the current label, and $q$ is seen within a finite number of steps. Take a satisfying trace of the form $uvwx^\omega$. The trace $uwx^\omega$ will satisfy $G(pUq)$, as there must be a $q$ within finite steps of the start of $wx^\omega$. Take a satisfying trace of the form $u(vx)^\omega$. Either $v$ or $x$ contains a label with $q$ - if $v$, then $uv^\omega$ satisfies the property, otherwise $ux^\omega$ satisfies the property.

    ($GFp\land FGq\land Gr$)
    
    It can be seen the language has only 2 distinct sets of residuals: $\emptyset$ and $GFp \wedge FGq \wedge Gr$. At any point on a trace satisfying $GFp \wedge FGq \wedge Gr$, $r$ is in the current label, $p$ is seen within a finite number of steps, and $q$ is seen in every label within a finite number of steps. Take a satisfying trace of the form $uvwx^\omega$: the trace $uwx^\omega$ must satisfy $GFp \wedge FGq \wedge Gr$, as there must be a $p$ within finite steps of the start of $wx^\omega$, there is still a finite point after which $q$ is seen in every label, and all positions are still labelled with $r$. Take a satisfying trace of the form $u(vx)^\omega$. Either $v$ or $x$ contains a label with $p$ - if $v$, then $uv^\omega$ satisfies the property, otherwise $ux^\omega$ satisfies the property.

    $(p \vee Xp)$
    Let $L=G(p \vee Xp)$. Suppose we have $uvwx^\omega\in L$, with $uwx^\omega\notin L$. Then, $u$ must end in a set $X$ not containing $p$ and $w$ must begin in a set $Y$ not containing $p$. Therefore, $v$ must begin and end with sets containing $p$ and have no substrings longer than 1 consisting of sets not containing $p$, or $uvwx^\omega$ would not have been in the language. So, $uv^\omega\in L$. Similar for words of the form $u(vwx)^\omega$. There are 3 different residuals. For words $w\in \Sigma^*$ which have a run of two or more sets not containing $p$, $R_L(w) = \emptyset$. If this is not the case, then if $w$ is the empty word or ends in a set containing $p$ then $R_L(w) = L$. Otherwise if $w$ ends in a set not containing $p$ then $R_L(w) = p \wedge XL$. Since $\emptyset \subseteq p \wedge XL \subseteq L$, then the residuals are totally ordered. Therefore, $L$ is positional.
\end{proof}

\mutualexpress*
\begin{proof}[Proof sketch]
    We prove the claim in the one-agent case,
    the claim for full ATL can be proven is an analogous way,
    generalising from paths to strategies.
    
    We have the following logical equivalence
    \[E(GF\psi\land FG\varphi\land G\phi)\equiv E(\phi U (E(G((\varphi\land\phi)
    U(\psi\land\varphi\land\phi)))\]
    
    ($\Rightarrow$)
    Assume we have a model $M$ and a state $x$ which
    satisfies the left formula.
    This means that there is a trace $\lambda$ starting at $x$ which
    satisfies $GF\psi$, $FG\varphi$ and $G\phi$.
    We show that $\lambda$ also satisfies the right formula.
    From the semantics of $FG\varphi$, there is an $n\in\mathbb{N}$
    such that $\lambda[n+i]\vDash\varphi$ for all $i\in\mathbb{N}$.
    It thus suffices
    to show
    $\lambda[n]\vDash E(G((\phi\land\varphi)U(\psi\land\varphi\land\phi)))$.
    We choose $\lambda[n..]$ as our witnessing trace,
    so are left to verify $\lambda[n..]\Vdash G((\phi\land\varphi)U(\psi\land\varphi\land\phi))$,
    to do so we fix an arbitrary $i\in\mathbb{N}$ and
    check $\lambda[n+i..]\Vdash(\phi\land\varphi)U(\psi\land\varphi\land\phi)$.
    From $\lambda$ satisfying $GF\psi$, there will exist
    some $i'\in\mathbb{N}$ with $\lambda[n+i+i']\vDash\psi$,
    we already know that every point along $\lambda$ satisfies
    $\phi$ and all points after $n$ satisfy $\varphi$,
    so are done.

    ($\Leftarrow$) Assume we have a model $M$ and a state $x$
    which satisfies the right formula.
    There is a trace $\lambda$ starting at $x$ with
    some $n$ s.t. $\lambda[0..n-1]\Vdash\phi$
    and with $\lambda[n]\vDash EG((\varphi\land\phi)U(\psi\land\varphi\land\phi))$.
    Hence we have a trace $\lambda'$ starting at $\lambda[n]$
    and,
    for all $i\in\mathbb{N}$, there exists
    some $i'\in\mathbb{N}$ with $\lambda'[i..i+i'']\vDash\varphi\land\phi$
    for all $i''<i'$
    and $\lambda[i+i']\vDash\psi\land\varphi\land\phi$.
    We can see that
    $\lambda[0..n-1]\cdot\lambda'\Vdash GF\psi\land FG\varphi\land G\phi$
    holds from the previous observations,
    so the model $M$ satisfies the left formula at $x$.

    Furthermore, we can express $GU$ with $GF\land FG\land G$:

    $E(G\psi U \varphi) \equiv E(GF\varphi \wedge FG\top \wedge G(\psi \vee \varphi))$

    ($\Rightarrow$)
    Assume we have a model $M$ and a state $x$ which
    satisfies the left formula.
    This means that there is a trace $\lambda$ starting at $x$ which
    for $\lambda[i..]$ for all $i\geq 0$, satisfies $\psi U \varphi$.
    We show that $\lambda$ also satisfies the right formula.
    At any $i$, there must be a finite point $j\geq i$ such that $\lambda[i..]\Vdash \varphi$,
    so $GF\varphi$ must be satisfied. $FG\top$ is trivially satisfied. 
    At all $i$, $\lambda[i] \Vdash \psi U \varphi$, which means
    $\lambda[i] \Vdash \varphi \vee (\psi \wedge X(\psi U \varphi))$. Since at all $\lambda[i]$
    either $\psi$ or $\varphi$ holds, then $G(\psi \vee \varphi)$ is satisfied. Therefore,
    $x\vDash E(GF\varphi \wedge FG\top \wedge G(\psi \vee \varphi))$.

    ($\Leftarrow$) Assume we have a model $M$ and a state $x$
    which satisfies the right formula. There is a trace $\lambda$ starting at $x$
    s.t. $\lambda\Vdash GF\varphi \wedge FG\top \wedge G(\psi \vee \varphi)$. We must show
    that for each $i\geq 0$, $\lambda[i..]\Vdash \psi U \varphi$. This means that for some $j\geq i$, $\lambda[j..]\Vdash \varphi$
    and for all $i\leq k < j$, $\lambda[k..]\Vdash \psi$. Let us take the smallest $j\geq i$ such that $\lambda[j..]\Vdash \varphi$:
    we know this must exist as $\lambda \Vdash GF\varphi$. We therefore know for all $i \leq k < j$ that $\lambda[k..]\nVdash \varphi$, but also that
    $\lambda\Vdash G(\psi \vee \varphi)$, so it must be that for each $k$, $\lambda[k..]\Vdash \psi$. So, each $\lambda[i..]$ satisfies $\psi U \varphi$
    and therefore $\lambda$ satisfies $G\psi U \varphi$ as required.
\end{proof}

\express*
\begin{proof}[Proof sketch]
    Given an agent $i\in{\sf Ag}$,
    we define a construction $(-)^{ATL}_i$ which transforms a rooted transition system into a rooted concurrent game structure.
    Given a rooted transition system $\mathcal{M}$, $\mathcal{M}_i^{ATL}$ will have the same set of states $X$ and assign the same atomic propositions to them. For each state $x\in X$, the set of actions available to agent $j$ will be $\{*\}$ for $i\ne j$. The set of actions available to agent $i$ will be one-to-one correspondence with the successors at $x$ in $\mathcal{M}$.
    The transition function of the concurrent game structure will assign each of tuple $(x,(a,*,\dots,*))$ to a different successor in $\mathcal{M}$ at $x$.
    It is routine to prove the following fact about this construction:

    \begin{equation}
        \label{eq:fact}
    \mathcal{M}_i^{ATL}\Vdash\langle C\rangle\psi\iff\begin{cases}
    \mathcal{M}\Vdash\exists\psi & \text{if }i\in C\\
    \mathcal{M}\Vdash\forall\psi & \text{if }i\not\in C
    \end{cases}
    \end{equation}

    We now prove the statement of the lemma.
    By \cite[Lemma 10.3.2]{Demri_Goranko_Lange_2016},
    it suffices to show that $(\mathcal{M}_n)^{ATL}_i\vDash\langle i\rangle\varphi$ and $(\mathcal{N}_n)^{ATL}_i\nvDash\langle i\rangle\varphi$ (which easily follows as we have $\mathcal{M}_i^{ATL}\vDash\langle i\rangle\varphi\iff\mathcal{M}_i\Vdash\exists\varphi$),
    and that $(\mathcal{M}_n)^{ATL}_i\vDash\psi\iff(\mathcal{N}_n)^{ATL}_i\vDash\psi$
    for all $ATL+A$ formulas $\psi$ which are of modal depth less than $n$.
    We can do induction on the structure of $\psi$. The semantics of atomic propositions and path formula are left unchanged
    by $(-)^{ATL}_i$, the only non-trivial case is $\langle C\rangle\psi$,
    this follows easily once we have established \Cref{eq:fact}.
    
\end{proof}

\eatltogffg*
\begin{proof}
     We employ \Cref{lem:express_transfer}, and find two classes of rooted transition systems which are indistinguishable by any $ECTL$ formula, but for which $E(GF\varphi\land FG\psi)$ is a separator.

     This follows from the proof of Lemma 2.8.4 in \cite{larou1994}, which defines two classes of models $(M_i)_{i\in \mathbb{N}},(N_i)_{i\in \mathbb{N}}$ (Figure \ref{fig:GF-FG-separator}) that are indistinguishable by any ECTL formulae but for which $E(GFp\wedge Gq)$ is a separator. However, it can be seen that $M_i,a_i\vDash E(GFa \wedge FGb)$ and $N_i,\alpha_i\nvDash E(GFa \wedge FGb)$, so $E(GFa \wedge FGb)$ is also not expressible in ECTL.

     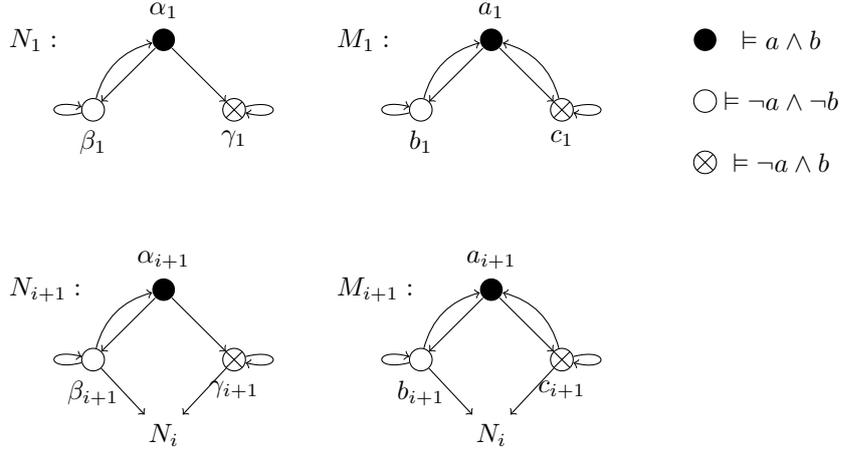
\begin{figure}
    \centering
    \begin{tikzpicture}[node distance=1cm, 
    roundnode/.style={circle, draw, inner sep=3pt},
    dotnode/.style={circle, fill=black, inner sep=3pt},
    cross/.style={path picture={\draw[black] (path picture bounding box.south east) -- (path picture bounding box.north west) (path picture bounding box.south west) -- (path picture bounding box.north east);}},
    crossnode/.style={circle, draw, cross, inner sep=3pt}]

\node[dotnode, label=above:$\alpha_1$] (a1) {};
\node[roundnode, below left=of a1, label=below:$\beta_1$] (b1) {};
\node[crossnode, below right=of a1, label=below:$\gamma_1$] (g1) {};

\draw[->] (a1) -- (b1);
\draw[->] (a1) -- (g1);
\draw[->] (b1) to[bend left] (a1);
\draw[->] (b1) to[loop left] (b1);
\draw[->] (g1) to[loop right] (g1);

\node[left=2cm of a1, anchor=west] (N1label) {$N_1:$};

\node[dotnode, right=4cm of a1, label=above:$a_1$] (m1) {};
\node[roundnode, below left=of m1, label=below:$b_1$] (n1) {};
\node[crossnode, below right=of m1, label=below:$c_1$] (o1) {};

\draw[->] (m1) -- (n1);
\draw[->] (m1) -- (o1);
\draw[->] (n1) to[bend left] (m1);
\draw[->] (o1) to[bend right] (m1);
\draw[->] (n1) to[loop left] (n1);
\draw[->] (o1) to[loop right] (o1);

\node[left=2cm of m1, anchor=west] (M1label) {$M_1:$};

\node[dotnode, below=3cm of a1, label=above:$\alpha_{i+1}$] (an1) {};
\node[roundnode, below left=of an1, label=below:$\beta_{i+1}$] (bn1) {};
\node[crossnode, below right=of an1, label=below:$\gamma_{i+1}$] (gn1) {};

\draw[->] (an1) -- (bn1);
\draw[->] (an1) -- (gn1);
\draw[->] (bn1) to[bend left] (an1);
\draw[->] (bn1) to[loop left] (bn1);
\draw[->] (gn1) to[loop right] (gn1);

\node[below=1.5cm of an1] (Ni) {$N_i$};
\draw[->] (bn1) -- (Ni);
\draw[->] (gn1) -- (Ni);

\node[left=2cm of an1, anchor=west] (Nn1label) {$N_{i+1}:$};

\node[dotnode, right=4cm of an1, label=above:$a_{i+1}$] (am1) {};
\node[roundnode, below left=of am1, label=below:$b_{i+1}$] (bm1) {};
\node[crossnode, below right=of am1, label=below:$c_{i+1}$] (cm1) {};

\draw[->] (am1) -- (bm1);
\draw[->] (am1) -- (cm1);
\draw[->] (bm1) to[bend left] (am1);
\draw[->] (cm1) to[bend right] (am1);
\draw[->] (bm1) to[loop left] (bm1);
\draw[->] (cm1) to[loop right] (cm1);

\node[below=1.5cm of am1] (Mi) {$N_i$};
\draw[->] (bm1) -- (Mi);
\draw[->] (cm1) -- (Mi);

\node[left=2cm of am1, anchor=west] (Mn1label) {$M_{i+1}:$};

\node[dotnode, right=2.5cm of m1] (dotpropositions) {};
\node[roundnode, below=0.5cm of dotpropositions] (roundpropositions) {};
\node[crossnode, below=0.5cm of roundpropositions] (crosspropositions) {};

\node[right of =dotpropositions] {$\vDash a \wedge b$};
\node[right of =roundpropositions] {$\vDash \neg a \wedge \neg b$};
\node[right of =crosspropositions] {$\vDash \neg a \wedge b$};

\end{tikzpicture}
    \caption{Recreation of Figure 2.2 from \cite{larou1994}}
    \label{fig:GF-FG-separator}
\end{figure}
\end{proof}

\gffgtogu*
\begin{proof}
     We will show the corresponding statement for CTL, which is sufficient by Lemma \ref{lem:express_transfer}. We will use the model in Figure \ref{fig:GUseparator}, which we will refer to as $\mathcal{M}$. This model is defined such that at any state $s'_i$, the formula $EG(pUq)$ holds, but at any state $s_i$, the same formula does not hold. Therefore, to prove the proposition it is sufficient to show that on this model, for all $n\in \mathbb{N}$ and $\varphi\in ECTL + E(GF\psi \wedge FG\varphi)$ such that $|\varphi|\leq n$, it is the case that $s_n \vDash \varphi$ iff $s'_n\vDash\varphi$. We will assume states have labels in $2^{\{p,q\}}$.

    Firstly, we will explain how $\mathcal{M}$ is defined. The formula $p\wedge\neg q$ holds at all $s_i$ and $s'_i$. The formula $\neg p \wedge \neg q$ holds at all states $l$. The state $s_0$ has a single transition leading to itself. All other $s_i$ and $s'_i$ are connected through sequences of states. Each $\sigma_i$ represents a set of possible paths from $s_i$ to itself. A path $\varsigma$ is in $\sigma_i$ iff it passes through a sequence of states of length $\leq i$, and is such that $\varsigma\nvDash pUq$. For example, $s_3$ will be connected to itself through a sequence of transitions between states $s_3\rightarrow\varsigma_1\rightarrow\varsigma_2\rightarrow\varsigma_3\rightarrow s_3$ such that for each $\varsigma_i$, $\varsigma_i\vDash p\wedge\neg q$. A path $\varsigma$ is in $w_i$ iff it passes through a sequence of states of length $\leq i$, and is such that $\varsigma\vDash pUq$. For example, $w_2$ contains transitions through states $s_2\rightarrow\varsigma_1\rightarrow\varsigma_2\rightarrow s_1$. Each $\sigma'_i$ contains all $\Sigma_{1\leq j\leq i}(2^{\{p,q\}})^j$ paths of length $\leq i$. Note that each $s_i$ and each $s'_i$ is connected to $s_{i-1}$ via $w_i$ paths. 

    The idea is that $CTL + E(GF\psi \wedge FG\varphi)$ cannot distinguish $E (FG(\psi U \varphi))$ from $EG(\psi U \varphi)$.

    Note that for $i>0$, for each $s_i$ and $s'_i$, that $s'_i\vDash \exists G(pUq)$ (by taking paths in $\sigma_i$ infinitely often) but $s_i\nvDash \exists G(pUq)$, since from an $s_i$, to realise $pUq$ infinitely often we much reach a state $s'_i$ which requires going through some $l$ in which $\neg p \wedge \neg q$ holds, breaking the possibility of $G(pUq)$. 

    We can prove inductively that for any $CTL + GF\wedge FG$ formula $\varphi$ of size $|\varphi| = n$, we have $s_{n_1}\vDash \varphi$ iff $s'_{n_2}\vDash\varphi$ for any $n_1,n_2\geq n$. Additionally, we will show for a path $\varsigma'\in \sigma'_n$, if $\varsigma'\nvDash p U q$ then there is a path of the same length $\varsigma\in \sigma_n$ such that for each $0\leq i \leq |\varsigma|$, $\varsigma_i\vDash \varphi$ iff $\varsigma'_i\vDash \varphi$, and if $\varsigma'\vDash p U q$ then there is a path of the same length $\rho\in w_n$ such that for each $0\leq i \leq |\rho|$, $\rho_i\vDash\varphi$ iff $\varsigma_i'\vDash\varphi$. It should be noted that every trace available at a state $s_i$ is also available at $s'_i$, and for every trace available at $s'_i$, there is a trace from $s_i$ with this as a suffix. This means for formulae of the form $E\varphi$, we already have one direction, and where $\varphi$ is prefix-independent (e.g. $E(GF\psi_1\wedge FG\psi_2)$), we have both directions. For $|\varphi|=1$, both $s_i$ and $s'_i$ have the same propositional labellings. Similarly for any path $\varsigma'\in \sigma'_i$, if $\varsigma'\vDash p U q$ we can find a path with the same propositional labelling at each state in $w_i$, otherwise we can find a corresponding path in $\sigma_i$.

    Most of the inductive cases are routine and laborious, so we will just show $s'_i\vDash E(\psi_1 U \psi_2)$ implies $s_i\vDash E(\psi_1 U \psi_2)$ under the IH. Suppose this holds at some $s'_i$. Then, there is some infinite path $\lambda$ from $s'_i$ such that at some $j$, $\lambda[j]\vDash \psi_2$ and for all $k < j$, $\lambda[k]\vDash \psi_1$. Suppose $\psi_2$ is realised after travelling through a path $w_i$; then, this can also be realised from $s_i$ by taking the suffix of the path from this point. So, suppose $\psi_2$ is realised along some path $\varsigma'\in\sigma'_i$. If this is a path such that $\varsigma'\nvDash pUq$, then by IH there is a path $\varsigma\in\sigma_i$ which realises $\psi_2$, and in which $\psi_1$ holds up until that point. Otherwise, there is a path $\rho\in w_i$ which realises $\psi_2$, and in which $\psi_1$ holds up until that point. Both of these paths are available from $s_i$, so $s_i\vDash E(\psi_1 U \psi_2)$ iff $s'_i\vDash E(\psi_1 U \psi_2)$.

    \begin{figure}
    \centering
    \begin{tikzpicture}[node distance=1cm, 
        roundnode/.style={circle, draw, inner sep=3pt},
        dotnode/.style={circle, fill=black, inner sep=3pt},
        cross/.style={path picture={\draw[black] (path picture bounding box.south east) -- (path picture bounding box.north west) (path picture bounding box.south west) -- (path picture bounding box.north east);}},
        crossnode/.style={circle, draw, cross, inner sep=3pt}]
        
        \node [roundnode, label=below:$s_0$] (s0) {};
        \node [roundnode, left=2cm of s0, label=below:$s_1$] (s1) {};
        \node [roundnode, left=2cm of s1, label=below:$s_2$] (s2) {};
    
        \node [roundnode, above=1.2cm of s1, xshift=0.5cm, label=left:$s'_1$] (s'1) {};
        \node [roundnode, above=1.2cm of s2, xshift=0.5cm, label=left:$s'_2$] (s'2) {};

        \node [roundnode, above left=1cm of s1, xshift=0.5cm, label=left:$l$] (notu1) {};
        \node [roundnode, above left=1cm of s2, xshift=0.5cm, label=left:$l$] (notu2) {};
    
        \node [left=1cm of s2] (start) {$\cdots$};
        \node [left=1cm of s'2] (start') {$\cdots$};
    
        \node [left=1cm of s0] (w1entry) {};
        \node [left=1cm of s1] (w2entry) {};
    
        \draw[->] (s'1) to[loop right, out = -45, in = 45, looseness = 10] node[midway, fill=white]{$\sigma'_1$} (s'1);
        \draw[->] (s'2) to[loop right, out = -45, in = 45, looseness = 10] node[midway, fill=white]{$\sigma'_2$} (s'2);
    
        \draw[->] (s1) to[loop right, out = 135, in = 45, looseness = 10]  node[near start, fill=white]{$\sigma_1$} (s1);
        \draw[->] (s2) to[loop right, out = 135, in = 45, looseness = 10]  node[near start, fill=white]{$\sigma_2$} (s2);


        \draw[->] (s1) to[bend left = 400, out = 50] (notu1);
        \draw[->] (s2) to[bend left = 400, out = 50] (notu2);

        \draw[->] (notu1) to[] (s'1);
        \draw[->] (notu2) to[] (s'2);
    
        \draw[in=180,out=-90] (s'2) to node[]{} (w2entry);
        \draw[in=180,out=-90] (s'1) to node[]{} (w1entry);
    
        \draw[->] (start) to (s2);
        \draw[->] (s2) to node[midway, fill=white]{$w_2$} (s1); 
        \draw[->] (s1) to node[midway, fill=white]{$w_1$} (s0); 
        \draw[->] (s0) to[loop right, out = -45, in = 45, looseness = 10] (s0);
    \end{tikzpicture}
    \caption{Model separating $ECTL + E(G\varphi U \psi)$ and $ECTL + E(GF\psi \wedge FG\varphi)$}
    \label{fig:GUseparator}
\end{figure}
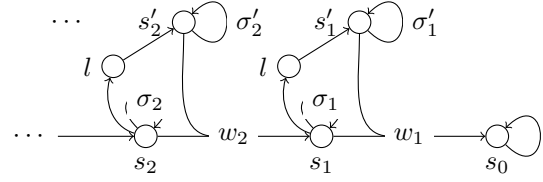
\end{proof}

\ectlplusgu*
\begin{proof}
    This follows from the proof of Theorem 4.3 in \cite{10.1145/567067.567081}, which defines two classes of models $(M_i)_{i\in \mathbb{N}},(N_i)_{i\in \mathbb{N}}$ (Figure \ref{fig:GF-GF-separator}) that are indistinguishable by any ECTL formula but separated by the ECTL$^+$ formula $E(GF\psi \wedge GF \varphi)$. We need to show an extra inductive step, that $N_i,\alpha_i\vDash E(G\psi U \varphi)$ iff $M_i,a_i\vDash E(G\psi U \varphi)$ assuming the same is true of $\varphi$ and $\psi$. Suppose $N_i,\alpha_i\vDash E(G\psi U \varphi)$; then there is some state $\alpha_j$ or $\beta_j$ (call this $\sigma$) for $j < i$ s.t. $N_i,\sigma\vDash\varphi$, for which every state $\sigma'$ between $\alpha_i$ and $\sigma$ is s.t. $N_i,\sigma'\vDash \psi \vee \varphi$. For the analogous states to $\sigma,\sigma'$ in $M_i$ (call these $s$ and $s'$), by IH $M_i,s\vDash \varphi$ and for each $s'$, $M_i,s'\vDash \psi$. So, the path $a_ib_{i}a_{i-1}\ldots s^\omega \vDash G(\psi U \varphi)$ - this path exists from $a_i$ in $M_i$, so $M_i,a_i\vDash E(G\psi U \varphi)$ as required. The other direction can be argued symmetrically.
    \begin{figure}
    \centering
    \begin{tikzpicture}[node distance=1cm, 
    roundnode/.style={circle, draw, inner sep=3pt},
    dotnode/.style={circle, fill=black, inner sep=3pt},
    cross/.style={path picture={\draw[black] (path picture bounding box.south east) -- (path picture bounding box.north west) (path picture bounding box.south west) -- (path picture bounding box.north east);}},
    crossnode/.style={circle, draw, cross, inner sep=3pt}]

\node[dotnode, label=above:$\alpha_1$] (a1) {};
\node[roundnode, below = of a1, label=below:$\beta_1$] (b1) {};

\draw[->] (a1) -- (b1);
\draw[->] (b1) to[bend left] (a1);
\draw[->] (b1) to[loop right] (b1);
\draw[->] (a1) to[loop right] (a1);

\node[left=2cm of a1, anchor=west] (N1label) {$N_1:$};

\node[dotnode, right=4cm of a1, label=above:$a_1$] (m1) {};
\node[roundnode, below = of m1, label=below:$b_1$] (n1) {};

\draw[->] (m1) -- (n1);
\draw[->] (n1) to[loop right] (n1);
\draw[->] (m1) to[loop right] (m1);

\node[left=2cm of m1, anchor=west] (M1label) {$M_1:$};

\node[dotnode, below=3cm of a1, label=above:$\alpha_{i+1}$] (an1) {};
\node[roundnode, below =of an1, label=below right:$\beta_{i+1}$] (bn1) {};

\draw[->] (an1) -- (bn1);
\draw[->] (bn1) to[bend left] (an1);
\draw[->] (bn1) to[loop right] (bn1);
\draw[->] (an1) to[loop right] (an1);

\node[below= of bn1] (Ni) {$M_i$};
\draw[->] (bn1) -- (Ni);

\node[left=2cm of an1, anchor=west] (Nn1label) {$N_{i+1}:$};

\node[dotnode, below=3cm of m1, label=above:$a_{i+1}$] (am1) {};
\node[roundnode, below =of am1, label=below right:$b_{i+1}$] (bm1) {};

\draw[->] (am1) -- (bm1);
\draw[->] (bm1) to[loop right] (bm1);
\draw[->] (am1) to[loop right] (am1);

\node[below= of bm1] (Ni) {$M_i$};
\draw[->] (bm1) -- (Ni);

\node[left=2cm of am1, anchor=west] (Mn1label) {$M_{i+1}:$};

\node[dotnode, right=2.5cm of m1] (dotpropositions) {};
\node[roundnode, below=0.5cm of dotpropositions] (roundpropositions) {};

\node[right of =dotpropositions] {$\vDash p \wedge \neg q$};
\node[right of =roundpropositions] {$\vDash \neg p \wedge q$};

\end{tikzpicture}
    \caption{Recreation from \cite{10.1145/567067.567081}}
    \label{fig:GF-GF-separator}
\end{figure}
\end{proof}

\end{document}